\newtheorem{theorem}{Theorem}
\newtheorem{definition}{Definition}
\newtheorem{proposition}{Proposition}
\renewcommand{\Pr}{\mathbb{P}} 
\renewcommand{\l}{\lambda}
\renewcommand{\t}{\tau}
\renewcommand{\d}{\delta}
\newcommand{\E}{\mathbb{E}}
\newcommand{\R}{\mathcal{R}}
\newcommand{\CR}{\rho}
\newcommand{\M}{\mathcal{M}}
\newcommand{\dd}{\textnormal{d}}
\newcommand{\F}{\mathcal{F}}
\definecolor{cp}{rgb}{0.93, 0.23, 0.51}
\definecolor{rose}{rgb}{0.98, 0.8, 0.91}
\definecolor{cerise}{rgb}{0.87, 0.19, 0.39}
\title{Online Posted Pricing with Unknown Time-Discounted Valuations}
\author{
	Giulia Romano\textsuperscript{\textnormal 1}, Gianluca Tartaglia\textsuperscript{\textnormal 2}, Alberto Marchesi\textsuperscript{\textnormal 1}, Nicola Gatti\textsuperscript{\textnormal 1} \\
}	
\begin{document}


\maketitle

\begin{abstract}
We study the problem of designing \emph{posted-price mechanisms} in order to sell a single unit of a single item within a finite period of time.
Motivated by real-world problems, such as, \emph{e.g.}, long-term rental of rooms and apartments, we assume that customers arrive \emph{online} according to a Poisson process, and their valuations  are drawn from an \emph{unknown} distribution and \emph{discounted} over time.
We evaluate our mechanisms in terms of competitive ratio, measuring the worst-case ratio between their revenue and that of an optimal mechanism that knows the distribution of valuations.
First, we focus on the \emph{identical valuation} setting, where all the customers value the item for the same amount.
In this setting, we provide a mechanism $\M_\textsc{c}$ that achieves the best possible competitive ratio, discussing its dependency on the parameters in the case of linear discount.
Then, we switch to the \emph{random valuation} setting.
We show that, if we restrict the attention to distributions of valuations with a monotone hazard rate, then the competitive ratio of $\M_\textsc{c}$ is lower bounded by a strictly positive constant that does not depend on the distribution.
Moreover, we provide another mechanism, called $\M_\textsc{pc}$, which is defined by a piecewise constant pricing strategy and reaches performances comparable to those obtained with $\M_\textsc{c}$.
This mechanism is useful when the seller cannot change the posted price too often.
Finally, we empirically evaluate the performances of our mechanisms in a number of experimental settings.
\end{abstract}

\section{Introduction}

Posted-price mechanisms try to sell an item by proposing a \emph{take-it-or-leave-it} price to each arriving agent, who then decides whether to buy the item or not~\cite{chawla2010multi}.
If an agent opts for purchasing the item, then the mechanism terminates; otherwise, the agent leaves without any further possibility of buying the item, and the mechanism goes on by proposing prices to upcoming agents.
Over the last years, growing attention has been devoted to the analysis of posted-price mechanisms, both in the classical economic literature~\cite{seifert2006posted} and in computer science~\cite{babaioff2015dynamic,babaioff2017posting,adamczyk2017sequential,correa2017posted}, within artificial intelligence and machine learning in particular~\cite{kleinberg2003value,Shah2019Semi}.
This is mainly motivated by the overwhelming number of online economic transactions carried out by posted-price mechanisms. This happens, for example, in online travel agencies (\emph{e.g., Expedia}), accommodation websites (\emph{e.g., Booking.com}), and e-commerce platforms (\emph{e.g., Amazon, eBay}).
As studied by~\citet{einav2018auctions}, an increasing number of \emph{eBay} users prefer buying goods via posted prices rather than participating in auctions.

Posted-price mechanisms provide many advantages over traditional auction-style mechanisms.
%
From the designer's perspective, posting prices requires a much lower effort than running an auction, since it avoids the burden of first \emph{eliciting information} (the bids) from the agents, and then \emph{collecting the payments}.
At the same time, posted-price mechanisms retain most of the desirable properties of classical auctions, such as truthfulness.
Indeed, even though the agents are not required to report their valuations for the item, they are always better off deciding whether to buy the item or not on the basis of their true valuations, without acting strategically~\cite{babaioff2017posting}.
From the agents' perspective, participating in a posted-price mechanism is preferable over competing in an auction, for several reasons.
For instance, agents may prefer revealing minimal information about their true preferences if they plan to participate in similar markets in the future.
Moreover, in some real-world settings, requiring the agents to figure out their true valuations for the item might need some additional efforts on their behalf, while answering a take-it-or-leave-it offer is usually much easier.
%

In this work, we study posted-price mechanisms for selling a single unit of a single item within a finite period of time, when the value of the item is discounted over time according to an arbitrary continuous and non-increasing discount function.
%
%
Discounting is common in many real-world applications and widely studied for a number of economic situations, such, \emph{e.g.}, bargaining \cite{10.2307/1912531,DBLP:journals/ai/GattiGM08} and auctions \cite{DBLP:conf/ijcai/MaoZWC18}.
We tackle settings in which agents arrive sequentially---a common assumptions in \emph{online} mechanism design~\cite{lavi2004competitive, parkes2007online}---and the number of agents is unknown \emph{a priori}.
In particular, following a mainstream approach in economics (see, \emph{e.g.},~\citep{MASON20111699,RosenthalRealEstate}), we assume that agents' arrivals are governed by a Poisson process.
Remarkably, posted pricing with Poisson arrivals has been previously investigated by~\citet{wang1993auctions} and~\citet{rong2018dynamic} for undiscounted settings, though without providing any theoretical result.

We assume that each agent arriving at the mechanism has a different initial (\emph{i.e.}, undiscounted) valuation for the item, which is independently drawn according to a common probability distribution.
This leads to a fundamental trade-off between setting high prices so as to achieve high revenue and, on the other side, progressively lowering posted prices so as to increase the probability of selling the item. 
Our assumption is that the mechanism is only aware of the range of valuations, while it does not know anything about the shape of the distribution.
%
This is reasonable since, differently from the actual distribution, the range of valuations can be estimated from previous data or market surveys.
%
%

%
\citet{lavi2004competitive}~and~\citet{babaioff2017posting} provide the main state-of-the-art results on posted-price mechanisms for single-item single-unit scenarios.
However, their models do not fit to our setting, since the agents' valuations are \emph{not} discounted over time and the number of agents is known \emph{a priori}. 
As a result, these models do not embed an explicit time representation and the proposed pricing strategies are only driven by the number of agents arrived.
%

Our model encompasses many real-world scenarios, such as, \emph{e.g.}, long-term rental of rooms and apartments.
Think of a website renting rooms to students for fixed periods of one year.
The value of a room naturally decreases over time, reflecting the fact that a future tenant will benefit from the room for a period shorter than one year.
Moreover, the potential customers arrive at the renting website according to a stochastic process, which can be reasonably modeled by a Poisson process whose rate parameter can be easily estimated by looking at traffic logs of the website.



\paragraph{Original Contributions}
%
We adopt the perspective of competitive analysis~\citep{borodin2005online} and evaluate our mechanisms in terms of \emph{competitive ratio}, measuring the worst-case ratio between their revenue and that of an optimal mechanism that knows the distribution of valuations. 
As it is customary in the literature (see, \emph{e.g.}, \citep{babaioff2017posting,kleinberg2003value}), we first focus on the {identical valuation} setting in which all the agents share the same initial valuation for the item.
Then, we extend our results to the random valuation setting where the agents' valuations are drawn i.i.d.~from the same distribution satisfying the monotone hazard rate condition (when the distributions of valuations are unrestricted, \citet{lavi2004competitive} and~\citet{babaioff2017posting} show that then there is no algorithm with good performances).
In the identical valuation setting, we design a posted-price mechanism $\M_\textsc{c}$ and prove that it is optimal, \emph{i.e.}, it provides the best possible competitive ratio.
In order to derive the ratio, we first identify two crucial properties that characterize  optimal mechanisms: their undiscounted price is non-increasing in time and they always guarantee the same fraction of the expected revenue of an optimal mechanism that knows the agents' valuation, independently of its actual value.
For the specific case of linear discount, we discuss how the competitive ratio depends on the parameters.
In the random valuation setting, we first show that mechanism $\M_\textsc{c}$ still provides good performances by proving that its competitive ratio is lower bounded by a constant, which does not depend on the distribution of agents' valuations.
Then, motivated by real-world scenarios in which the seller is constrained not to change the posted prices too often, we propose a new mechanism $\M_\textsc{pc}$ defined by a piecewise constant pricing strategy and prove that its performances in terms of competitive ratio are comparable with those obtained by $\M_\textsc{c}$.
In conclusion, we empirically compare $\M_\textsc{c}$ with a natural adaption of the mechanism proposed by~\citet{babaioff2017posting} to our setting, showing that the latter is inefficient even without time discounting.
We also empirically evaluate the performances of $\M_\textsc{c}$ and $\M_\textsc{pc}$ as the frequency with which prices are allowed to change decreases, showing that, when this is not too low, then the performances of $\M_\textsc{pc}$ and $\M_\textsc{c}$ are comparable.

\paragraph{Other Related Works}
As showed by~\citet{hajiaghayi2007automated} for single-item settings, posted pricing is strictly related to the \emph{secretary problem} and to \emph{prophet inequalities}; see also the work by~\citet{babaioff2009secretary} for single-item settings and that of~\citet{lucier2017economic} for multi-item scenarios.
Differently from our model, these works assume that the mechanism knows the probability distribution of agents' valuations and that the agents reveal their actual valuation for the item upon arrival.
When multiple units of the same item are available, learning approaches based on \emph{bandit} techniques are customarily adopted.
In particular, \citet{kleinberg2003value} study an unlimited-supply setting where the number of buyers is fixed, and derive upper bounds on the regret. Several recent works extend the results in~\citep{kleinberg2003value}.
%
\citet{Shah2019Semi} study a contextual setting, providing a semi-parametric model that learns from the observation of a binary outcome which stands for acceptance or rejection of the offered price.
\citet{mohri2014optimal} study revenue-maximizing learning algorithms for posted pricing with strategic buyers. They consider a repeated game in which, at each round, the seller offers the item at a certain price and a strategic buyer accepts or rejects it.
In that work, the goal is to learn the buyers' valuation for the item by minimizing the \emph{strategic-regret} of the algorithm.

\section{Preliminaries}\label{sec:prelim}


%
We study a model in which a seller is interested in selling a single unit of an item within a finite time period of length $T$.
The seller implements a \emph{posted-price mechanism} by setting a take-it-or-leave-it price at each time $t \in [0,T]$.
We denote by $p : [0,T] \to \mathbb{R}_+$ the \emph{pricing strategy} adopted by the seller, with $p(t)$ being the price offered at time $t \in [0,T]$.
The agents (\emph{i.e.}, the buyers) arrive sequentially over time, according to a Poisson process with rate parameter $\l >0$.
%

We label agents according to their order of arrival (\emph{i.e.}, agent $i$ is the $i$-th agent arriving in $[0,T]$).
Each agent $i$ has a private valuation $V_i$ for the item, drawn from a distribution $F$ with finite support $[v_\textnormal{min},v_\textnormal{max}]$, where $v_\textnormal{max} > v_\textnormal{min} > 0$ denote the maximum and minimum valuation, respectively.
In the following, for the ease of presentation, we normalize agents' valuations in the range $[1,h]$, where we define $h \coloneqq \frac{v_\textnormal{max}}{v_\textnormal{min}}$.
Accordingly, we scale the support of $F$ to $[1,h]$.
Then, we denote by $f$ the probability density function of $F$.

The value of the item for sale decreases over time.
In particular, $V_i$ is agent $i$'s initial valuation at time $t=0$.
We model decreasing values by introducing a continuous non-increasing \emph{discount function} $\xi: [0,T] \to [0,1]$ such that $\xi(0)=1$ and $\xi(T)=0$.
By letting $W_i$ be the random variable representing the arrival time of agent $i$, we define the agent $i$'s \emph{discounted valuation} as $D_i \coloneqq V_i \, \xi(W_i)$, which represents how much agent $i$ is willing to pay upon her arrival.
As a result, whenever agent $i$ arrives, she buys the item if and only if $D_i \geq p(W_i)$, \emph{i.e.}, her discounted valuation is at least as large as the price offered by the mechanism.

We introduce the following additional notation. 
We denote by $I_{s,\t} \coloneqq [s,s+\t] \subseteq [0,T]$ the time interval of length $\t \in [0,T]$ starting from time $s \in[0,T-\t]$.
The number of agents arriving in $I_{s,\t}$ is a random variable denoted by $N_{s,s+\t}$.
Given $\t \in [0,T]$, the random variables $N_{s,s+\t}$ are equally distributed for all $s \in [0,T-\t]$, as the arrivals are generated by a Poisson process.
For the sake of presentation, we omit $s$ in $N_{s,s+\t}$, denoting by $N_\t$ the random variable of the number of agents arriving in any time interval of length $\tau$,
%
%
%
which follows a Poisson distribution with parameter $\l \t$.~\footnote{By definition of Poisson distribution, $\Pr \left\{ N_\t \hspace{-1mm}=\hspace{-1mm} j  \right\}  \hspace{-.5mm} = \hspace{-.5mm} \frac{(\l \t)^j e^{-\l \t}}{j !}$.}
Thus,  $N_T$ is the random variable of the total number of agents arriving in the overall time period.
%
%
In the following, we sometimes focus on the \emph{linear} discount function, denoted as $\xi_\textnormal{lin}: [0,T] \to [0,1]$ with $\xi_\textnormal{lin}(t) \coloneqq  1-\frac{t}{T} $.
%
%
In this case, each agent $i$'s discounted valuation is $D_i \coloneqq V_i \, \left( 1-\frac{W_i}{T} \right)$.

\subsection{Performances of Posted-Price Mechanisms}

Given a deterministic posted-price mechanism $\M$ defined by a price function $p_{\M}: [0,T] \to \mathbb{R}_+$, we denote  by $\E_F[\R(\M)]$ the expected revenue that the mechanism provides to the seller.
The expectation is calculated with respect to both the Poisson arrivals and the distribution $F$ of agents' initial valuations.
We made explicit the dependence on $F$, as we will frequently refer to it along the paper.

We adopt the perspective of competitive analysis and measure the performances of a mechanism $\M$ by comparing the seller's expected revenue with that of a benchmark mechanism $\M^\star$, which is optimal having knowledge of the distribution $F$.
Notice that the benchmark has no information on the actual realizations of agents' initial valuations, but only on their distribution, whereas the mechanisms we propose operate having knowledge of their range only. 
%
%

Our goal is to bound the performances of our mechanisms w.r.t.~those of the benchmark $\M^\star$ by looking at the worst case over the set $\F$ of possible distributions $F$, \emph{i.e.}, all those with support $[1,h]$.
This is captured by the following:
%
%
\begin{definition}
	The \emph{competitive ratio} of a deterministic posted-price mechanism $\M$ is defined as:
	%
	\[
		\CR(\mathcal{M}) \coloneqq \min_{F \in \F} \CR_F(\M), \quad\textnormal{\emph{where} } \CR_F(\M) \coloneqq \frac{\E_F[\R(\M)]}{\E_F[\R(\M^\star)]}.
	\]
	Moreover, we say that a mechanism is \emph{optimal} when its competitive ratio is the highest possible among all the deterministic posted-price mechanisms.
\end{definition}

Notice that $\CR(\M) \in [0,1]$ and, for every possible distribution $F \in \F$, we are guaranteed that the seller's expected revenue $\E_F[\R(\M)]$ provided by mechanisms $\M$ is at least a fraction $\CR(\M)$ of that achieved by $\M^\star$, \emph{i.e.}, it holds $\E_F[\R(\M)] \geq \CR(\M) \,\E_F[\R(\M^\star)]$.
%

As previously showed by~\citet{babaioff2017posting} in similar settings, we can safely restrict our analysis to mechanisms maintaining the bottom price for a non-negligible period of time.
%
%
%
%
Indeed, in the case in which $F$ places all the probability mass on $1$, then any mechanism providing a non-null seller's expected revenue must set the minimum price during some time interval, otherwise no agent would buy the item.
\begin{proposition} \label{prop:minimum}
	Every deterministic posted-price mechanism $\M$ such that $\CR(\M) > 0$ must set the minimum price $p_{\M}(t) = \xi(t)$ for every $t$ in a time interval of length $\t > 0$.
\end{proposition}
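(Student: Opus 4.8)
The plan is to prove the contrapositive: if $\M$ never sets the minimum price on a time interval of positive length, then $\CR(\M)=0$. Since $\CR(\M)=\min_{F\in\F}\CR_F(\M)$, it suffices to exhibit a single distribution $F\in\F$ for which $\CR_F(\M)=0$. The natural adversarial choice---already hinted at in the paragraph preceding the statement---is the degenerate distribution $F_0$ placing all its mass on the lowest valuation $1$, which is admissible since it is supported inside $[1,h]$. Under $F_0$ every agent $i$ has $V_i=1$ and hence discounted valuation $D_i=\xi(W_i)$, so an agent arriving at time $t$ buys if and only if $\xi(t)\ge p_\M(t)$. Because $\xi(t)=1\cdot\xi(t)$ is the smallest discounted valuation possible at time $t$, the price $\xi(t)$ is exactly the \emph{minimum price} of the statement, and the set of times at which a sale can occur under $F_0$ is the acceptance region $A\coloneqq\{t\in[0,T]:p_\M(t)\le\xi(t)\}$.

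The heart of the argument is a Poisson computation. Since arrivals form a Poisson process of rate $\l$, the number of agents falling in any measurable set $A\subseteq[0,T]$ is Poisson with mean $\l\,|A|$, where $|A|$ denotes its Lebesgue measure; consequently, under $F_0$ the item is sold if and only if at least one arrival lands in $A$, an event of probability $1-e^{-\l|A|}$. If $A$ is negligible, i.e.\ $|A|=0$, this probability vanishes, no sale ever takes place, and $\E_{F_0}[\R(\M)]=0$. On the other hand the benchmark $\M^\star$, which knows $F_0$, can post $p(t)=\xi(t)$ on a small initial interval and collect $\xi(t)>0$ from the first arrival with strictly positive probability, so that $0<\E_{F_0}[\R(\M^\star)]<\infty$ (the revenue of any mechanism is at most $h$, as at most one item is sold). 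Hence $\CR_{F_0}(\M)=0$ and therefore $\CR(\M)=0$, contradicting $\CR(\M)>0$.

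It remains to connect the hypothesis ``$p_\M=\xi$ on no positive-length interval'' with the measure condition ``$|A|=0$''. First I would reduce, without loss of generality, to undominated strategies satisfying $p_\M(t)\ge\xi(t)$: raising any price below $\xi(t)$ up to $\xi(t)$ leaves every agent's purchase decision unchanged (since $V_i\xi(t)\ge\xi(t)$ always holds, the inequalities $V_i\xi(t)\ge p_\M(t)$ and $V_i\xi(t)\ge\max\{p_\M(t),\xi(t)\}$ are equivalent) while weakly increasing the price paid, so it cannot decrease $\E_F[\R(\M)]$ for any $F$. After this reduction $A=\{t:p_\M(t)=\xi(t)\}$, so ``setting the minimum price'' and ``being in $A$'' coincide. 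I expect this measure-to-interval passage to be the only delicate point, since a measurable set can have positive measure without containing an interval; I would handle it by invoking the regularity of the price paths considered here (for which a positive-measure acceptance region contains a positive-length subinterval), or, more cleanly, by stating the necessary condition directly as ``$p_\M(t)=\xi(t)$ on a set of positive measure,'' which is precisely the notion of maintaining the bottom price for a \emph{non-negligible} period.
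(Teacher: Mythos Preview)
Your proposal is correct and follows the same approach as the paper, which does not give a formal proof but only the one-sentence argument preceding the proposition: take $F$ to be the point mass at $1$, so that an agent arriving at time $t$ buys only if $p_\M(t)\le\xi(t)$, and if this never happens (on a non-negligible set) the expected revenue is zero while the benchmark's is positive.

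Your third paragraph is a genuine refinement over the paper. The paper phrases the conclusion as ``some time interval of length $\t>0$'' and never addresses the gap between ``$A$ has positive Lebesgue measure'' (which is what the Poisson argument actually needs) and ``$A$ contains an interval.'' Your reduction to $p_\M\ge\xi$ is clean and correct, and your observation that the honest statement should really read ``on a set of positive measure'' is on point; the paper implicitly relies on the regularity of the price functions it later constructs (continuous, or piecewise constant) to conflate the two notions. So your argument is not only aligned with the paper's but also more careful about the one place where the informal version is imprecise.
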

%
%

\section{Identical Valuation Setting}\label{sec:identical_valuation}


We start studying the \emph{identical valuation} (IV) setting, where all the agents share the same initial valuation $v \in [1,h]$ for the item, \emph{i.e.}, it holds $V_i = v$ and $D_i = v \, \xi(W_i)$ for every agent $i$.
%
%
%
The IV setting is a special case of the general random valuation model where one restricts the attention to distributions $F$ placing all the probability mass on a single valuation in $[1,h]$.
In the following, we adjust notation for expected revenues and competitive ratios accordingly, writing $\E_v [\R(\M)]$ and $\CR_v(\M)$ instead of $\E_F [\R(\M)]$ and $\CR_F(\M)$.

Our main result (Theorem~\ref{thm:bestCR_gen}) is to provide a deterministic posted-price mechanism, called $\M_\textsc{c}$, which is optimal for the IV setting for every discount function $\xi$.
We also study the specific case of a linear discount function $\xi_\textnormal{lin}$, where we design an optimal mechanism $\M_{\textsc{c}, \textnormal{lin}}$ (Theorem~\ref{thm:bestCR_lin}) that enjoys an easily interpretable analytical description.
All the proofs are in the Appendix.
%
%


First, we describe the shape of the benchmark mechanism $\M^\star$ for the IV setting.
%
%
%
%
Indeed, since $\M^\star$ knows the actual initial valuation $v$, its price function $p_{\M^\star} : [0,T] \to \mathbb{R}_+$ is such that $p_{\M^\star}(t) = v \, \xi(t)$ for $t \in [0,T]$.
%
%
Therefore, we can compute the expected revenue of $\M^\star$ as follows:
%
%
\begin{multline}
	\E_v \left[ \R( \M^\star)\right] \coloneqq \int_{0}^{T} p_{\M^\star}(t) \, \l \,e^{-\l t} \,  \dd t= \\ \int_{0}^{T} v \, \xi(t) \,  \l \,e^{-\l t} \,  \dd t= v \, k^\star, \label{eq:bench_IV}
\end{multline}
where $k^\star \coloneqq \int_{0}^{T} \xi(t) \,  \l \, e^{-\l t} \,  \dd t$ does not depend on $v$, but only on the problem parameters $T$, $\l$, and the discount function $\xi$.
%
Let us remark that the expected revenue of the benchmark $\M^\star$ defined in Equation~\eqref{eq:bench_IV} is expressed as a \emph{linear} function of $v$.



\subsection{Optimal Mechanism for a General Discount}

We start proving two lemmas that highlight two crucial properties which characterize optimal posted-price mechanisms for the IV setting.
Lemma~\ref{lem:decr} implies that the pricing strategy of an optimal mechanism must be such that the undiscounted price defined as $\frac{p(t)}{\xi(t)}$ is non-increasing in $t$, whereas Lemma~\ref{lem:CR_const} shows that any mechanism which always provides a constant fraction of the expected revenue of the benchmark, independently of the agents' initial valuation $v$, is an optimal mechanism. 
%

\begin{restatable}{lemma}{decr}\label{lem:decr}
	In the IV setting, given any deterministic posted-price mechanism $\M$, there always exists a deterministic posted-price mechanism $\M'$ with undiscounted price $\frac{p_{\M'}(t)}{\xi(t)}$ non-increasing in $t$ such that $\E_v [ \mathcal{R}(\M) ] \le \E_v [ \mathcal{R}(\M^\prime) ]$ for every possible agents' initial valuation $v \in [1,h]$.
\end{restatable}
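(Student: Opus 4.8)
The plan is to reduce the statement to a structural claim about \emph{acceptance regions} and then to remove non-monotonicities by an exchange argument. First I would write the revenue of an arbitrary $\M$ in the IV setting explicitly. Writing $\bar p(t):=p_\M(t)/\xi(t)$ for the undiscounted price, an agent arriving at time $t$ buys iff $v\,\xi(t)\ge p_\M(t)$, i.e.\ iff $t$ lies in the acceptance region $A_v:=\{t\in[0,T]:\bar p(t)\le v\}$. Since rejected agents simply leave, only arrivals inside $A_v$ can trigger a sale, and these form a Poisson process of rate $\l$ restricted to $A_v$; conditioning on the first such arrival gives
\[
\E_v[\R(\M)] \;=\; \int_{A_v} p_\M(t)\,\l\,e^{-\l\,\mu(A_v\cap[0,t))}\,\dd t,
\]
where $\mu$ is the Lebesgue measure. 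Substituting $u=\mu(A_v\cap[0,t))$ rewrites this as $\int_0^{\mu(A_v)}\bar p(t_v(u))\,\xi(t_v(u))\,\l e^{-\l u}\,\dd u$, with $t_v(u)$ the time at which the accepted mass reaches $u$ (so $t_v$ is increasing and $t_v(u)\ge u$). The point of this form is that the weight $\l e^{-\l u}$ is the \emph{same for every mechanism}: only the price-per-accepted-position $\bar p(t_v(u))\xi(t_v(u))$ and the horizon $\mu(A_v)$ depend on $\M$. Finally, $\bar p$ being non-increasing is \emph{equivalent} to every $A_v$ being a tail interval $[t_v,T]$ with the $t_v$ nested (non-increasing in $v$), so the target mechanisms are exactly those whose acceptance regions are nested tails.

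The core of the proof is to show that, starting from an arbitrary $\M$, one can push all acceptance regions into this nested-tail form without lowering $\E_v[\R(\M)]$ for any $v$. I would attempt this by an ironing/exchange procedure: locate a pair of times $t_1<t_2$ with $\bar p(t_1)<\bar p(t_2)$ (an increasing violation) and modify $\bar p$ on a neighbourhood so as to move high-undiscounted-price mass toward earlier times---which are more valuable both through the discount factor $\xi$ and through the survival weight---while simultaneously \emph{lowering} the late prices so that the low- and intermediate-valuation agents still accept there. Each elementary step should strictly reduce a suitable measure of non-monotonicity, and a limiting argument would then produce a mechanism $\M'$ with non-increasing $\bar p_{\M'}$; measurability of the limit and the fact that the resulting nested family $\{A_v\}$ is induced by a single function $\bar p_{\M'}$ are routine to check.

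The main obstacle is verifying that the exchange step weakly increases $\E_v[\R]$ \emph{for every} $v$ simultaneously, and here the discounting makes the naive move fail. If one merely swaps two out-of-order pieces of equal length (placing the larger undiscounted price first), a direct computation in the survival coordinate shows that the revenue strictly \emph{increases} for high valuations $v\ge\bar p(t_2)$, which accept on both pieces, but strictly \emph{decreases} for intermediate valuations $\bar p(t_1)\le v<\bar p(t_2)$, whose unique accepting piece is thereby pushed to a later, more heavily discounted time. Hence a dominating non-increasing mechanism cannot be a mere rearrangement of the same multiset of prices: its prices must be genuinely re-chosen---in particular the high-price block must be \emph{shortened} and advanced to the front, with the freed-up tail re-priced low. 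I would therefore carry out the exchange as a combined ``shorten-and-advance the high price, lower the tail'' operation and verify domination case-by-case on the three valuation ranges ($v$ below the low price, between the two prices, and above the high price) using the reparametrized integral, choosing the length of the advanced high-price block exactly so that the intermediate range is not harmed while the high range gains. The delicate book-keeping is the effect of the modified acceptance mass on the survival weight of all \emph{later} times, which is precisely what couples the three ranges together.
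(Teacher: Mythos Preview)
Your approach and the paper's share the same skeleton---an exchange argument on the undiscounted price---but you are more careful at exactly the point where the paper's proof is loose. The paper in fact performs precisely the naive adjacent swap you reject: it takes two consecutive intervals of length $\tau$ carrying undiscounted prices $p_1<p_2$, interchanges them, observes that for intermediate valuations $p_1\le v<p_2$ the resulting revenue loss tends to $0$ as $\tau\to0$, and then ``re-iterates over all pairs of consecutive infinitesimal intervals''. Your worry about this limiting step is well founded: sorting a two-level function such as $\bar p=1$ on $[0,T/2]$ and $\bar p=2$ on $(T/2,T]$ requires $\Theta(\tau^{-2})$ adjacent swaps, each costing the intermediate range $\Theta(\tau^2)$, so the accumulated harm is $\Theta(1)$; and indeed the non-increasing rearrangement of this $\bar p$ is \emph{strictly worse} for every $v\in[1,2)$, since the acceptance window is pushed from $[0,T/2]$ to $[T/2,T]$ where $\xi$ is uniformly smaller. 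So your conclusion that a dominating $\M'$ cannot be a mere rearrangement of $\bar p$, and that the high-price block must be genuinely \emph{shortened}, goes beyond what the paper's argument actually establishes.

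That said, your ``shorten-and-advance, lower the tail'' step remains only a plan. You have not specified how to choose the advanced block's length so that the intermediate range is unharmed while the high range still gains, nor exhibited a potential that strictly decreases so that the iteration terminates or converges, nor handled the coupling through the survival weight that you yourself flag as ``delicate book-keeping''. These are exactly the places where real work is required, and neither your sketch nor the paper's proof supplies it.
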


Notice that, since $\xi$ is continuous and non-increasing by definition, Lemma~\ref{lem:decr} also shows that there is always an optimal mechanism whose pricing strategy is non-increasing.
Moreover, by recalling Proposition~\ref{prop:minimum}, we can conclude that any optimal mechanism must set the minimum price at the end of the overall time period, \emph{i.e.}, during a time interval $[t_0, T] \subseteq [0,T]$ defined for some $t_0 \in [0,T)$.
This result is exploited to prove the following lemma.


\begin{restatable}{lemma}{CRconst}\label{lem:CR_const}
	In the IV setting, let $\M$ be a deterministic posted-price mechanism whose pricing strategy $p_\M$ satisfies 
	$p_\M(t)=\xi(t)$ for $t \in [t_0,T]$ with $t_0 \in [0,T)$.
	If the ratio $\CR_v(\M) = \frac{\E_v [\R(\M)]}{\E_v [\R(\M^\star)]}$ for $\M$ does not depend on the agents' initial valuation $v$, then $\M$ is an optimal mechanism.
\end{restatable}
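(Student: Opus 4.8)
The plan is to argue by contradiction: assume there is a deterministic posted-price mechanism $\M'$ with $\CR(\M')>\CR(\M)$, and rule this out by certifying $\M$ against a worst-case randomization of the adversary over the common valuation. Write $c\coloneqq\CR_v(\M)$ for the ratio, which is constant by hypothesis. The benchmark identity $\E_v[\R(\M^\star)]=v\,k^\star$ from Equation~\eqref{eq:bench_IV} then gives $\E_v[\R(\M)]=c\,v\,k^\star$, so $\CR(\M')>\CR(\M)=c$ would force $\E_v[\R(\M')]>c\,v\,k^\star=\E_v[\R(\M)]$ for \emph{every} $v\in[1,h]$; that is, $\M'$ would strictly dominate $\M$ simultaneously at all common valuations. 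The goal is to show this is impossible.

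First I would normalize both mechanisms. Mechanism $\M$ already sets the minimum price on a terminal interval $[t_0,T]$ by hypothesis, and for $\M'$ the same may be assumed by Lemma~\ref{lem:decr} together with Proposition~\ref{prop:minimum}; in particular both may be taken to have non-increasing undiscounted price $\frac{p(t)}{\xi(t)}$. For such a mechanism the undiscounted price induces, for each $v$, a threshold time after which every arriving agent buys, so by the memoryless property of the Poisson process the revenue $\E_v[\R(\cdot)]$ becomes an explicit survival-weighted integral of the price path. This expresses $\E_v[\R(\cdot)]$ as an integral functional of the price path, amenable to the variational comparison below.

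The key step is to exhibit a probability measure $\mu$ on $[1,h]$ — the adversary's worst-case randomization of the common value — for which $\M$ maximizes the $\mu$-averaged revenue $\int_1^h \E_v[\R(\cdot)]\,\dd\mu(v)$ over all deterministic posted-price mechanisms. Granting this, the argument closes quickly, since $\E_v[\R(\M^\star)]=v\,k^\star>0$ implies the minimum ratio is bounded by any weighted average of ratios:
\[
\CR(\M')=\min_{v\in[1,h]}\frac{\E_v[\R(\M')]}{v\,k^\star}\le\frac{\int_1^h \E_v[\R(\M')]\,\dd\mu(v)}{\int_1^h v\,k^\star\,\dd\mu(v)}\le\frac{\int_1^h \E_v[\R(\M)]\,\dd\mu(v)}{\int_1^h v\,k^\star\,\dd\mu(v)}=c,
\]
where the second inequality is the $\mu$-optimality of $\M$ and the final equality uses $\E_v[\R(\M)]=c\,v\,k^\star$. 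This contradicts $\CR(\M')>c$ and proves that $\M$ is optimal.

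The main obstacle is precisely the existence and certification of this worst-case prior $\mu$, equivalently the statement that $\M$ is Bayes-optimal for some prior on the common value. Here the constant-ratio hypothesis is exactly what is needed: it is an \emph{equalizer} property that makes the adversary indifferent among all $v\in[1,h]$, so the first-order optimality conditions for the $\mu$-averaged pricing problem must hold along the \emph{entire} price path of $\M$ rather than at isolated valuations. Concretely, differentiating the identity $\E_v[\R(\M)]=c\,v\,k^\star$ in $v$ pins down the undiscounted price of $\M$ as the solution of a first-order ODE, and the same stationarity relation determines the density of $\mu$ for which $\M$ is a critical point of the averaged revenue; the remaining technical work is to verify that this critical point is a global maximizer over non-increasing undiscounted prices. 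The terminal-interval behaviour guaranteed by Proposition~\ref{prop:minimum} is what makes these conditions consistent at the boundary $v=1$.
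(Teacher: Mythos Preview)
Your opening reduction is correct: since $\CR_v(\M)=c$ for all $v$, ruling out $\CR(\M')>c$ amounts to showing that no $\M'$ satisfies $\E_v[\R(\M')]>\E_v[\R(\M)]$ for every $v\in[1,h]$. But the route you then take---exhibiting a prior $\mu$ on $[1,h]$ for which $\M$ maximizes the $\mu$-averaged revenue---has a genuine gap. The equalizer property by itself does not furnish such a $\mu$: an equalizer that is Pareto-dominated is a best response to no prior at all. What you actually need is that $\M$ lies on the Pareto frontier of the profile $(\E_v[\R(\cdot)])_{v\in[1,h]}$, and that is exactly the statement you set out to prove. Your proposal to back out $\mu$ from first-order stationarity and an ODE presumes regularity of $p_\M$ (smoothness, $p_\M(0)=h$, non-increasing undiscounted price) that the lemma does not assume, and even granting that, you have not verified that the resulting critical point is a \emph{global} maximizer rather than merely stationary. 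As written, the argument defers precisely the step that carries the content.

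The paper's proof avoids all of this with a short, direct comparison. After normalizing the competitor $\M'$ via Lemma~\ref{lem:decr} and Proposition~\ref{prop:minimum} so that it too has non-increasing undiscounted price and sets the minimum price on a terminal interval $[t_0',T]$, it case-splits on $t_0'\ge t_0$ versus $t_0'<t_0$ and in each case names a single concrete valuation at which $\M'$ does no better than $\M$: take $v=1$ when $t_0'\ge t_0$; when $t_0'<t_0$, take either the valuation determined by the last crossing time of the two price curves, or $v=h$ if no crossing exists. Since $\CR_v(\M)=c$ at that valuation too, one gets $\CR(\M')\le\CR_v(\M')\le c$, contradicting $\CR(\M')>c$. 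No prior, no duality, no variational step is required.
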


By Lemma~\ref{lem:CR_const}, in order to find an optimal mechanism for the IV setting, we can restrict the attention to mechanisms $\M$ whose ratios $\CR_v(\M)$ do not depend on the initial valuation $v$.
Therefore, since the expected revenue of the benchmark $\M^\star$ is a linear function of $v$ (see Equation~\eqref{eq:bench_IV}), we can search for an optimal mechanism among those having an expected revenue which linearly depends on $v$.
This crucial observation allows us to design the optimal mechanism $\M_\textsc{c}$ in Theorem~\ref{thm:bestCR_gen} 
%
%
%
%
%
%
%
by leveraging the condition $\E_v [\R(\M_\textsc{c}) ] = k \,v$ for every $v \in [1,h]$, with $k$ being a suitably defined constant independent of $v$. 
The key insight that allows us to derive an expression for $\M_\textsc{c}$ is that we can always find the desired pricing strategy $p_{\M_\textsc{c}}$ among the continuous price functions such that $\frac{p_{\M_\textsc{c}}(t)}{\xi(t)}$ is non-increasing in $t \in [0,T)$.
Intuitively, using Lemma~\ref{lem:decr}, we can always express the expected revenue $\E_v [\R(\M_\textsc{c}) ]$ as a function of the time $t^\ast \coloneqq \sup \{ t \in [0,t_0] \mid p_{\M_\textsc{c}}(t) > v \, \xi(t) \}$, which is the first time in which $p_{\M_\textsc{c}}$ intersects $v \, \xi(t)$.
The reason is that it holds $p_{\M_\textsc{c}}(t) \leq v \, \xi(t)$ if and only if $t \geq t^\ast$, and, thus, only agents arriving after $t^\ast$ are willing to buy the item.
By using the relation among $\E_v [\R(\M_\textsc{c}) ]$ and $t^*$, we can find the desired pricing strategy $p_{\M_\textsc{c}}$ as a solution to a suitably defined differential equation.
This leads to the following theorem.
%


\begin{restatable}{theorem}{bestCRgen}\label{thm:bestCR_gen}
	In the IV setting, there exists an optimal deterministic posted-price mechanism $\M_\textsc{c}$ whose pricing strategy $p_{\M_\textsc{c}}$ is defined as follows:
	\begin{equation*}
		p_{\M_\textsc{c}}(t) \coloneqq \left\{\begin{array}{ll} a \, e^{\int b(t) \dd t} & \textnormal{if } t \in\left[0, t_{0}\right) \\ \xi(t) & \textnormal{if } t \in\left[t_{0}, T\right] \end{array}\right.,
	\end{equation*}
	where $b$ is a function such that $b(t) \coloneqq \l-\frac{\l}{k \zeta(t)}-\frac{\zeta'(t)}{\zeta(t)}$ with $\zeta(t) \coloneqq \frac{1}{\xi(t)}$, whereas $a$, $k$, $t_0$ are suitably defined constants that do not depend on the agents' initial valuation $v$.
	%
	%
	%
\end{restatable}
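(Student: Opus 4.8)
The plan is to reduce the construction to the two lemmas already established. By Lemma~\ref{lem:CR_const} it suffices to exhibit a mechanism that keeps the minimum price on some interval $[t_0,T]$ and whose ratio $\CR_v$ is independent of $v$, and by Lemma~\ref{lem:decr} (together with the remark following it and Proposition~\ref{prop:minimum}) I may look for it among mechanisms whose undiscounted price $\frac{p(t)}{\xi(t)}$ is non-increasing and which set $p(t)=\xi(t)$ exactly on a terminal interval $[t_0,T]$. Since $\E_v[\R(\M^\star)]=v\,k^\star$ is \emph{linear} in $v$ by Equation~\eqref{eq:bench_IV}, constancy of $\CR_v$ is equivalent to the condition $\E_v[\R(\M_\textsc{c})]=k\,v$ for some constant $k$ independent of $v$, which is the design target.

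First I would express the revenue through the threshold time. Because $\frac{p(t)}{\xi(t)}$ is non-increasing, an agent with valuation $v$ buys precisely when $p(t)\le v\,\xi(t)$, i.e.\ when $t\ge t^\ast$, where $t^\ast$ is the first crossing defined in the text and satisfies $v=\frac{p(t^\ast)}{\xi(t^\ast)}=p(t^\ast)\,\zeta(t^\ast)$. The item is then sold to the first agent arriving after $t^\ast$, and by memorylessness of the Poisson process the density of this arrival at $t$ is $\l e^{-\l(t-t^\ast)}$, so
\[
	\E_v[\R(\M_\textsc{c})] = \l\,e^{\l t^\ast}\!\int_{t^\ast}^{T} p(t)\,e^{-\l t}\,\dd t .
\]
Treating $t^\ast$ as a free parameter $s$ that ranges over $[0,t_0]$ and substituting $v=p(s)\,\zeta(s)$, the target $\E_v[\R(\M_\textsc{c})]=k\,v$ becomes the integral equation
\[
	\l\,e^{\l s}\!\int_{s}^{T} p(t)\,e^{-\l t}\,\dd t \;=\; k\,p(s)\,\zeta(s), \qquad s\in[0,t_0].
\]

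Next I would convert this into the differential equation in the statement. Differentiating both sides in $s$ and then using the equation itself to replace the surviving integral term $\l e^{\l s}\int_s^T p\,e^{-\l t}\dd t$ by $k\,p(s)\,\zeta(s)$, all integral dependence cancels; dividing through by $p(s)$ leaves the first-order linear ODE $p'(s)=b(s)\,p(s)$ with $b(s)=\l-\frac{\l}{k\zeta(s)}-\frac{\zeta'(s)}{\zeta(s)}$ exactly as claimed. Its general solution is $p(s)=a\,e^{\int b(s)\,\dd s}$, giving the stated expression on $[0,t_0)$, while on $[t_0,T]$ we set $p=\xi$ as forced by Proposition~\ref{prop:minimum}.

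Finally I would fix the constants and verify consistency. Requiring the undiscounted price to sweep the full range $[1,h]$ forces $\frac{p(0)}{\xi(0)}=h$, i.e.\ $a=h$ (so that $v=h$ corresponds to $t^\ast=0$ and $v=1$ to $t^\ast=t_0$); continuity of $p$ at $t_0$ gives $a\,e^{\int_0^{t_0} b(\tau)\,\dd\tau}=\xi(t_0)$; and evaluating the integral equation at $s=t_0$, where $p=\xi$, yields $k=\l\,e^{\l t_0}\int_{t_0}^{T}\xi(t)\,e^{-\l t}\,\dd t$. The last relation makes $k$ a function of $t_0$, and substituting it into the continuity condition leaves a single scalar equation for $t_0$, to be solved by a continuity/monotonicity argument on $[0,T)$. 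I expect this closing step to be the main obstacle: since $b$ depends on $k$ while $k$ depends on $t_0$, the system is self-referential, and one must show it admits a solution with $t_0\in[0,T)$ and $k>0$ for which the resulting $p$ stays positive and genuinely has $\frac{p(t)}{\xi(t)}$ non-increasing, so that the threshold description of $t^\ast$ used throughout is valid. Once existence is secured, $\CR_v(\M_\textsc{c})=\frac{k}{k^\star}$ is constant in $v$ and $\M_\textsc{c}$ sets the minimum price on $[t_0,T]$, so Lemma~\ref{lem:CR_const} yields optimality.
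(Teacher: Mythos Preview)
Your proposal is correct and follows essentially the same route as the paper: reduce to linearity $\E_v[\R(\M_\textsc{c})]=kv$ via Lemma~\ref{lem:CR_const}, express the revenue through the crossing time $t^\ast$ using $v=p(t^\ast)\zeta(t^\ast)$, differentiate the resulting integral identity to obtain the ODE $p'=b\,p$, and pin down $a,k,t_0$ from $p(0)=h$, continuity at $t_0$, and the value at $v=1$. The only cosmetic difference is that you write the revenue as a single integral $\int_{t^\ast}^T$ while the paper splits it at $t_0$; your explicit flag that the existence of a consistent triple $(a,k,t_0)$ is the residual obstacle is fair, and indeed the paper only resolves it fully in the linear case.
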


As a byproduct of the proof of Theorem~\ref{thm:bestCR_gen}, we also get an expression for the competitive ratio of the mechanism $\M_\textsc{c}$, as stated by the following corollary.

\begin{restatable}{corollary}{corCRgen}
	In the IV setting, mechanism $\M_\textsc{c}$ achieves:
	\[
		\CR(\M_\textsc{c}) = \frac{\int_{0}^{T-t_{0}} \xi(t)\, \l \,e^{-\l t} \,\dd t}{\int_{0}^{T} \xi(t) \,\l \,e^{-\l t} \,\dd t}.
	\]
\end{restatable}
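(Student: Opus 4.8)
The plan is to read the competitive ratio off directly from the construction behind Theorem~\ref{thm:bestCR_gen}, exploiting that $\M_\textsc{c}$ was designed precisely so that its expected revenue is \emph{linear} in the unknown valuation. As established in the proof of Theorem~\ref{thm:bestCR_gen}, the constant $k$ is fixed so that $\E_v[\R(\M_\textsc{c})] = k\,v$ for every $v \in [1,h]$. Combining this with the benchmark revenue $\E_v[\R(\M^\star)] = v\,k^\star$ from Equation~\eqref{eq:bench_IV}, where $k^\star = \int_0^T \xi(t)\,\l e^{-\l t}\,\dd t$, gives $\CR_v(\M_\textsc{c}) = k\,v / (v\,k^\star) = k/k^\star$, independent of $v$. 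Since $p_{\M_\textsc{c}}$ sets the minimum price on $[t_0,T]$, Lemma~\ref{lem:CR_const} applies and yields both optimality and $\CR(\M_\textsc{c}) = \min_{v} \CR_v(\M_\textsc{c}) = k/k^\star$. Notice that $k^\star$ is already the denominator of the claimed expression, so the whole task reduces to evaluating $k$.

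To pin down $k$, I would evaluate the identity $\E_v[\R(\M_\textsc{c})] = k\,v$ at the most convenient valuation, namely $v=1$. At $v=1$ each agent's discounted valuation is exactly $\xi(t)$. By Lemma~\ref{lem:decr} the undiscounted price $p_{\M_\textsc{c}}(t)/\xi(t)$ is non-increasing and equals $1$ at $t_0$, so $p_{\M_\textsc{c}}(t) > \xi(t)$ for $t < t_0$ and $p_{\M_\textsc{c}}(t) = \xi(t)$ for $t \in [t_0,T]$ (recall Proposition~\ref{prop:minimum}). Consequently, no agent arriving before $t_0$ buys, while every agent arriving in $[t_0,T]$ is willing to purchase at the posted minimum price $\xi$. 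The item is therefore sold to the \emph{first} agent arriving after $t_0$, at price $\xi$ evaluated at that arrival time, and $k = \E_1[\R(\M_\textsc{c})]$ equals the integral of $\xi$ against the law of this first arrival.

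The one delicate step—and the part I expect to require the most care—is computing this first-arrival revenue. Because arrivals before $t_0$ leave the item unsold, the memorylessness of the Poisson process lets me treat $t_0$ as a fresh start, so the first arrival in $[t_0,T]$ has density $\l e^{-\l(t-t_0)}$; a change of variables shifting the clock back to the origin then recasts $k$ as an integral over the shortened horizon $[0,T-t_0]$, matching the numerator of the claimed formula. Tracking how this memoryless restart at $t_0$ interacts with the discount factor when shrinking the window to $[0,T-t_0]$ is exactly the bookkeeping that must be done carefully; once it is settled, everything else is a direct substitution into $\CR(\M_\textsc{c}) = k/k^\star$. As a cross-check, one can instead extract $k$ from the continuity/boundary condition $p_{\M_\textsc{c}}(t_0) = \xi(t_0)$ used to fix the constants in the proof of Theorem~\ref{thm:bestCR_gen}, which must return the same value.
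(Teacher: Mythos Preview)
Your approach mirrors the paper's: both read $\CR(\M_\textsc{c}) = k/k^\star$ off the linearity $\E_v[\R(\M_\textsc{c})]=kv$, and both evaluate $k$ at $v=1$ (the paper simply cites Equation~\eqref{eq_k_gen} from the proof of Theorem~\ref{thm:bestCR_gen}, while you spell out the first-arrival computation).

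However, there is a genuine gap exactly where you flagged the delicate step. After the memoryless restart at $t_0$ you correctly have
\[
k \;=\; \int_{t_0}^{T} \xi(t)\,\l e^{-\l(t-t_0)}\,\dd t .
\]
Substituting $u=t-t_0$ shifts the limits to $[0,T-t_0]$ and absorbs the $e^{\l t_0}$ factor in the exponential, but it turns the discount into $\xi(u+t_0)$, \emph{not} $\xi(u)$. Hence the change of variables yields
\[
k \;=\; \int_0^{T-t_0}\xi(u+t_0)\,\l e^{-\l u}\,\dd u \;=\; e^{\l t_0}\int_{t_0}^{T}\xi(t)\,\l e^{-\l t}\,\dd t,
\]
which in general differs from the claimed numerator $\int_0^{T-t_0}\xi(t)\,\l e^{-\l t}\,\dd t$. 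Your ``bookkeeping'' sentence waves past precisely the place where the identity fails.

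This is not only your oversight: the paper's own Equation~\eqref{eq_k_gen} and the corollary statement contain the same slip. A quick sanity check in the linear case confirms it: the explicit value $k=1-\frac{1}{\l T}\bigl(1+\l t_0-e^{-\l(T-t_0)}\bigr)$ from Theorem~\ref{thm:bestCR_lin} and Corollary~\ref{cor:constantratiodue} equals $e^{\l t_0}\int_{t_0}^{T}(1-t/T)\,\l e^{-\l t}\,\dd t$, \emph{not} $\int_0^{T-t_0}(1-t/T)\,\l e^{-\l t}\,\dd t$ (try $\l=T=1$, $t_0=1/2$). So your argument correctly proves $\CR(\M_\textsc{c})=k/k^\star$ with $k=e^{\l t_0}\int_{t_0}^{T}\xi(t)\,\l e^{-\l t}\,\dd t$; it is the displayed formula in the corollary that needs amending, and your cross-check via the boundary condition would have caught it.
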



\subsection{Optimal Mechanism for a Linear Discount}

The pricing strategy $p_{\M_\textsc{c}}$ of the optimal mechanism defined in Theorem~\ref{thm:bestCR_gen} still depends on some parameters, namely $a$, $k$, and $t_0$, which do not admit an easy analytical formula for a general discount function $\xi$.
Nevertheless, they can be expressed analytically if we restrict the attention to functions $\xi$ having a particular shape.
In the following Theorem~\ref{thm:bestCR_lin} and Corollary~\ref{cor:constantratiodue}, we analyze the case of a linear discount function $\xi_\textnormal{lin}$, defining an optimal mechanism $\M_{\textsc{c}, \textnormal{lin}}$ for such setting.

\begin{restatable}{theorem}{bestCRlin}\label{thm:bestCR_lin}
	In the IV setting with linear discount function $\xi_\textnormal{lin}$, there exists an optimal deterministic posted-price mechanism $\M_{\textsc{c}, \textnormal{lin}}$ whose pricing strategy $p_{\M_{\textsc{c}, \textnormal{lin}}}$ is defined as:
	\begin{equation*} 
		p_{\M_{\textsc{c}, \textnormal{lin}}}(t) \coloneqq \left\{\begin{array}{ll} \hspace{-1mm} h \,\left(1-\frac{t}{T}\right) \, e^{\lambda\left(1-\frac{1}{k}\right)t+\frac{\lambda}{2 k T} t^{2}} &\hspace{-1mm} \textnormal{if } t \in\left[0, t_{0}\right) \\ \hspace{-1mm} 1-\frac{t}{T} &\hspace{-1mm} \textnormal{if } t \in\left[t_{0}, T\right]\end{array}\right. \hspace{-1mm} ,
	\end{equation*}
	where $k \coloneqq \lambda \, t_{0} \, \frac{2 \,T-t_{0}}{2 \,T\left(\lambda \, t_{0}+\ln h\right)}$ and the time $t_0 \in [0,T)$ is defined as the unique positive real root of the following equation: $1-\frac{1}{\lambda T}\left(1+\lambda \, t_{0}-e^{-\lambda\left(T-t_{0}\right)}\right)=k$.
\end{restatable}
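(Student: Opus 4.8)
The plan is to obtain $\M_{\textsc{c},\textnormal{lin}}$ by specializing the general optimal mechanism of Theorem~\ref{thm:bestCR_gen} to $\xi=\xi_\textnormal{lin}$ and then pinning down the three free constants $a$, $k$, $t_0$ by the conditions that already characterize $\M_\textsc{c}$: the initial undiscounted price equals the top valuation, the price is continuous at the switch time $t_0$, and the expected revenue satisfies $\E_v[\R(\M_{\textsc{c},\textnormal{lin}})]=k\,v$ for every $v\in[1,h]$ (so that, by Lemma~\ref{lem:CR_const}, the mechanism is optimal). Concretely, I would first substitute $\zeta(t)=1/\xi_\textnormal{lin}(t)=T/(T-t)$ into $b(t)=\l-\frac{\l}{k\zeta(t)}-\frac{\zeta'(t)}{\zeta(t)}$, obtaining $b(t)=\l-\frac{\l(T-t)}{kT}-\frac{1}{T-t}$, and integrate: $\int b(t)\,\dd t=\l\big(1-\frac1k\big)t+\frac{\l}{2kT}t^{2}+\ln(T-t)+C$. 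Exponentiating gives $p(t)=a\,e^{\int b\,\dd t}=a'\,(T-t)\,e^{\l(1-1/k)t+\frac{\l}{2kT}t^{2}}$ on $[0,t_0)$, which is exactly the claimed form once the multiplicative constant is fixed.

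Next I would fix $a'$ and $k$. Since, facing the worst case, the mechanism must start at the largest possible undiscounted price $h$ and $\xi_\textnormal{lin}(0)=1$, the initial condition $p(0)=h$ forces $a'=h/T$, yielding $p(t)=h\,(1-\tfrac{t}{T})\,e^{\l(1-1/k)t+\frac{\l}{2kT}t^{2}}$. Imposing continuity at $t_0$, i.e.\ that the exponential branch meets $\xi_\textnormal{lin}(t_0)=1-\tfrac{t_0}{T}$ (equivalently, that the undiscounted price drops to $1$), gives $h\,e^{\l(1-1/k)t_0+\frac{\l}{2kT}t_0^{2}}=1$; taking logarithms and solving for $k$ produces the stated closed form $k=\l\,t_0\,\frac{2T-t_0}{2T(\l t_0+\ln h)}$.

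It remains to derive the defining equation for $t_0$ from the constant-fraction property. Using Lemma~\ref{lem:decr}, for valuation $v$ the item is sold to the first arrival after the threshold time $t^\ast(v)$ at which the undiscounted price equals $v$, so by the memorylessness of the Poisson process $\E_v[\R]=e^{\l t^\ast(v)}\int_{t^\ast(v)}^{T}p(t)\,\l e^{-\l t}\,\dd t$. Evaluating the identity $\E_v[\R]=k\,v$ at $v=1$, where $t^\ast(1)=t_0$ and $p=\xi_\textnormal{lin}$ on $[t_0,T]$, gives $k=e^{\l t_0}\int_{t_0}^{T}\big(1-\tfrac{t}{T}\big)\l e^{-\l t}\,\dd t$; a routine integration by parts rewrites this as $k=1-\frac{1}{\l T}\big(1+\l t_0-e^{-\l(T-t_0)}\big)$, which is precisely the equation whose root defines $t_0$.

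Finally, eliminating $k$ between the two expressions turns the determination of $t_0$ into a single transcendental equation, and I expect the main obstacle to be showing it admits a unique positive root in $[0,T)$: I would define the difference of the two sides as a function of $t_0$, evaluate its sign at the endpoints $t_0\to 0^{+}$ and $t_0\to T^{-}$, and argue monotonicity to invoke the intermediate value theorem. Along the way one must also check that the constructed branch is admissible, namely that the undiscounted price $h\,e^{\l(1-1/k)t+\frac{\l}{2kT}t^{2}}$ is non-increasing on $[0,t_0]$ (its exponent is a convex parabola minimized at $t=T(1-k)$, so this reduces to verifying $t_0\le T(1-k)$, which also forces $k<1$). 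Once admissibility and uniqueness are established, the resulting mechanism satisfies the hypotheses of Lemma~\ref{lem:CR_const}, hence it is optimal, completing the proof.
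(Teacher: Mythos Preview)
Your proposal is correct and follows essentially the same route as the paper: specialize the ODE of Theorem~\ref{thm:bestCR_gen} to $\xi_\textnormal{lin}$, integrate $b(t)$, fix the multiplicative constant via $p(0)=h$, fix $k$ via continuity at $t_0$, and obtain the equation for $t_0$ from $\E_1[\R]=k$. The only notable difference is in the uniqueness argument: you propose to show the difference of the two sides is monotone on $(0,T)$, whereas the paper gets existence from Bolzano and then derives uniqueness indirectly from Lemma~\ref{lem:CR_const} (two distinct roots would yield two optimal mechanisms with different constant ratios $k$, a contradiction); the latter avoids the somewhat messy derivative computation you would otherwise face.
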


%

The prices posted by $\M_{\textsc{c},\textnormal{lin}}$ decrease as a linearly discounted exponential function until $t=t_0$, starting, at time $t=0$, by setting the price equal to the maximum agents' initial valuations $h$.
Then, during the time interval $[t_0, T]$, the price function linearly decreases and equals zero in $t=T$.

\begin{restatable}{corollary}{constantratiodue}\label{cor:constantratiodue}
	In the IV setting with linear discount function $\xi_\textnormal{lin}$, $\M_{\textsc{c}, \textnormal{lin}}$ achieves a competitive ratio:
	\[
		\CR(\M_{\textsc{c}, \textnormal{lin}}) = \frac{1-\frac{1}{\lambda T}\left(1+\lambda t_{0}-e^{-\lambda\left(T-t_{0}\right)}\right)}{1-\frac{1}{\lambda T}\left(1-e^{-\lambda T}\right)}.
	\]
\end{restatable}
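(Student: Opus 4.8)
The plan is to specialize the general expression for $\CR(\M_\textsc{c})$ to the linear discount $\xi_\textnormal{lin}(t) = 1 - \frac{t}{T}$. By the design of $\M_\textsc{c}$ (and Lemma~\ref{lem:CR_const}), its expected revenue is linear in the valuation, $\E_v[\R(\M_\textsc{c})] = k\,v$, so that $\CR(\M_{\textsc{c},\textnormal{lin}}) = \CR_v(\M_{\textsc{c},\textnormal{lin}}) = k/k^\star$ is independent of $v$, where $k^\star$ is the benchmark coefficient appearing in Equation~\eqref{eq:bench_IV}. It therefore suffices to evaluate the two constants $k^\star$ and $k$ for $\xi = \xi_\textnormal{lin}$: the denominator and numerator of the claimed ratio are exactly $k^\star$ and $k$.

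First I would compute the benchmark coefficient $k^\star = \int_0^T \left(1 - \tfrac{t}{T}\right)\lambda e^{-\lambda t}\,\dd t$. Splitting the integral and handling the term $\int_0^T t\,\lambda e^{-\lambda t}\,\dd t$ by parts gives $k^\star = 1 - \tfrac{1}{\lambda T}\left(1 - e^{-\lambda T}\right)$, which is precisely the denominator in the statement. This step is entirely routine.

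Next I would obtain the mechanism coefficient $k$. The cleanest route is to evaluate $\E_v[\R(\M_{\textsc{c},\textnormal{lin}})]$ at $v=1$: since the undiscounted price $p_{\M_\textsc{c}}(t)/\xi(t)$ equals $1$ exactly on $[t_0,T]$ and strictly exceeds $1$ on $[0,t_0)$, for $v=1$ only agents arriving in $[t_0,T]$ purchase, each paying $p_{\M_\textsc{c}}(t) = \xi_\textnormal{lin}(t)$. Conditioning on the first Poisson arrival after $t_0$ (whose density is $\lambda e^{-\lambda(t-t_0)}$ by the memorylessness of the process) yields $k = \E_1[\R(\M_{\textsc{c},\textnormal{lin}})] = e^{\lambda t_0}\int_{t_0}^T \left(1-\tfrac{t}{T}\right)\lambda e^{-\lambda t}\,\dd t$. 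A second integration by parts, simplified using $1 - \tfrac{T-t_0}{T} = \tfrac{t_0}{T}$, gives $k = 1 - \tfrac{1}{\lambda T}\left(1 + \lambda t_0 - e^{-\lambda(T-t_0)}\right)$, which is the numerator in the statement. I note that this is exactly the left-hand side of the equation defining $t_0$ in Theorem~\ref{thm:bestCR_lin}, so the numerator is in fact already in hand from that theorem.

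Finally, I would form the quotient $k/k^\star$ to recover the claimed competitive ratio. The only real bookkeeping concern—the main, albeit minor, obstacle—is keeping the exponential factor $e^{\lambda t_0}$ and the boundary terms consistent through the integration by parts, and confirming that the time $t_0$ used here is the same one determined in Theorem~\ref{thm:bestCR_lin}; both follow immediately once the revenue is written as an integral against the first-arrival density of the Poisson process after $t_0$.
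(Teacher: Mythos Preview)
Your proposal is correct and follows essentially the same approach as the paper: evaluate the ratio $\CR_v(\M_{\textsc{c},\textnormal{lin}}) = k/k^\star$ at $v=1$, using Equation~\eqref{eq:bench_IV} for $k^\star$ and the identity $k = \E_1[\R(\M_{\textsc{c},\textnormal{lin}})]$ (equivalently, Equation~\eqref{eq:eq_k} from the proof of Theorem~\ref{thm:bestCR_lin}) for the numerator. If anything, your derivation of $k$ via Poisson memorylessness is slightly more explicit than the paper's, which simply cites the earlier equation.
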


\begin{table}[t]
\begin{center}
	\resizebox{\columnwidth}{!}{
	\begin{tabular}{r|c|c|c}
		& $T \rightarrow \infty$ & $\l \rightarrow \infty$ & $h \rightarrow \infty$ \\
		\hline
		$t_0$ & $\Theta ( \sqrt{T} ) $ & $\Theta( \sqrt{T/\l}) $ & $\Theta( T)$ \\
		$\CR(\M_{\textsc{c}, \textnormal{lin}})$ & $\Theta\left(1-\frac{1}{\sqrt{T}}\right)$ & $\Theta\left(1-\frac{1}{\sqrt{\lambda}}\right)$ & $\Theta\left(\frac{1}{\log^2(h)}\right)$ \\ 
		$\lim\CR(\M_{\textsc{c}, \textnormal{lin}})$ & 1 & 1 & 0 \\ 
	\end{tabular}
	}
\end{center}
\caption{Values of $t_0$ and $\CR(\M_{\textsc{c}, \textnormal{lin}})$ as $T,\lambda, h$ go to infinity.}
\label{table:MlinAsymptotic}
\end{table}

The asymptotic values of $t_0$ and $\CR(\M_{\textsc{c}, \textnormal{lin}})$ as $T,\lambda, h$ go to infinity are in Table~\ref{table:MlinAsymptotic} (see the Appendix for more details).
In particular, $\CR(\M_{\textsc{c}, \textnormal{lin}})$ goes asymptotically to $1$ as $\lambda$ or $T$ increases.
This corresponds to having an infinite number of agents and, thus, selling the item with certainty.
Instead, $\CR(\M_{\textsc{c}, \textnormal{lin}})$ decreases as $h$ increases, going asymptotically to $0$ as $\frac{1}{\log^2(h)}$.
The range $[1,h]$ represents the degree of uncertainty that the mechanism has on the agents' valuation.
Therefore, $\CR(\M_{\textsc{c}, \textnormal{lin}})$ decreases as the uncertainty increases and it cannot be lower bounded by any strictly positive constant if no finite upper bound on $h$ is known (\emph{i.e.}, when $h \rightarrow +\infty$). 
However, the dependency of $\CR(\M_{\textsc{c}, \textnormal{lin}})$ on the degree of uncertainty is logarithmic.
Instead, notice that a trivial mechanism setting the price equal to $\xi_\textnormal{lin}(t)$ for $t \in [0,T]$ would have a competitive ration of $\frac{1}{h}$, which depends linearly on the degree of uncertainty.
%

%
%
%
%
%
\section{Random Valuation Setting}\label{sec:randomvaluation}



We now switch to the \emph{random valuation} (RV) setting, where agents' initial valuations $V_i$ are i.i.d. random variables defined by a cumulative distribution function $F$ with support $[1,h]$.
%
%
We focus on distributions $F$ satisfying the \emph{monotone hazard rate} (MHR) condition.
Formally, a distribution $F$ is MHR if the hazard rate $H(x) \coloneqq \frac{f(x)}{1-F(x)}$ is non-decreasing in $x$.
This assumption is common when studying posted-price mechanisms that operate without knowing the shape of the distribution of valuations (see~\citep{babaioff2015dynamic,babaioff2017posting}) and many distributions used in practice satisfy it (such as, \emph{e.g.}, uniform, normal, and exponential distributions).
Moreover, the MHR condition is necessary for proving our main results (Theorems~\ref{thm:bound_M1}~and~\ref{thm:LB_step}). Indeed, when the family of possible distributions is unrestricted, one cannot design posted-price mechanisms guaranteeing a constant fraction of the revenue of $\M^\star$ independently of the distribution $F$, as shown by~\citet{babaioff2017posting} for the easier setting in which agents do not arrive stochastically.
All the proofs are provided in the Appendix.


\paragraph{Auxiliary Definitions and Results}\label{subsec:aux}

%
We introduce the random variable $X_{\l \t}$ as the maximum initial valuation of agents arriving in an interval of length $\tau \in (0,T]$.
Formally:
\[
	X_{\l \t} \coloneqq \max_{i\in\{1,\dots,N_\t\}}  V_i .~\footnote{In the definition of $X_{\l \t}$ and $Y_{s, \l \t}$, overloading the notation, we assume that the agents arriving in the considered time interval of length $\t$ are labeled from $1$ to $N_\t$ according to their order. Their actual labels referred to the overall period $[0,T]$ may be different.}
\]
%
$X_{\l\t}$ is the first order statistic of $N_\t$ samples drawn from $F$ and, since agents' arrivals are governed by a Poisson process, its cumulative distribution function $F_{X_{\l \t}}$ is defined as:
\begin{multline*}
	F_{X_{\l\t}}(x) \coloneqq \sum_{j=1}^\infty \Pr \left\{ N_\t=j \right\} F_{X_{\l\t}|N_\t=j} (x) =\\ \sum_{j=1}^\infty \frac{(\l\t)^j e^{-\l\t}}{j!} \left[ F(x) \right]^j = e^{-\l\t(1-F(x))} .
\end{multline*}
%
%
%

We also define $Y_{s, \lambda \t}$ as the random variable representing the maximum discounted valuation of agents arriving in an interval $I_{s,\t}$ of length $\t \in (0,T]$ starting at $s \in [0, T-\t]$:
%
%
%
\[
	Y_{s, \lambda \t} \coloneqq \max_{i\in\{1,\dots,N_\t\}} D_i .
\]
The cumulative distribution function $F_{Y_{s, \lambda \t}}$ of $Y_{s, \lambda \t}$ is:
\begin{multline*}
	F_{Y_{s, \lambda \t}} (x) \coloneqq \sum_{j=1}^\infty \Pr \left\{ N_\t=j \right\} F_{Y_{s, \l\t}|N_\t=j}(x) = \\ \sum_{j=1}^\infty \frac{(\l\t)^j e^{-\l\t}}{j!}F_{Y_{s, \l\t}|N_\t=j}(x), 
\end{multline*}
where $F_{Y_{s, \l\t}|N_\t=j}$ is the cumulative distribution function of $Y_{s,\l\t}$ conditioned on the event $N_\t=j$.
Let us remark that, by definition, $F_{Y_{s,\lambda \t}}$ depends on distribution $F$.
In the following, we also let $Y_{\l T} \coloneqq Y_{0,\l T}$ be the random variable representing the maximum discounted valuation of agents arriving in the overall time period $[0,T]$.
In the Supplemental Material, for the specific case of a linear discount function, we show how to exploit some useful properties of Poisson processes so as to find an analytical expression for $F_{Y_{\lambda T}}$.
In particular, by letting $Z \coloneqq V \, U$, where $V$ and $U$ are independent random variables distributed according to $F$ and $\mathcal{U}(0,1)$, respectively, we obtain:
\begin{equation*}
	F_{Y_{\lambda T}}(x) \coloneqq \sum_{j=1}^\infty \frac{(\l T)^j e^{-\l T}}{j!} \left[ F_Z(x) \right]^j,
\end{equation*}
where
\begin{equation*}
	F_{Z}(x) \coloneqq \left\{\begin{array}{ll}x \int_{1}^{h} \frac{1}{z} f(z) \dd z & \text {if } x \in[0,1) \\ F(x)+x \int_{x}^{h} \frac{1}{z} f(z) \dd z & \text {if } x \in[1, h]\end{array}\right. .
\end{equation*}

%
%
%

\subsection{Bounding \textnormal{$\CR(\M_\textsc{c})$} in the RV Setting} \label{subsec:S_M1}

We show that mechanism $\M_\textsc{c}$ (see definition in Theorem~\ref{thm:bestCR_gen}), which is optimal in the IV setting, provides good performances also in the RV setting.
Our main result (Theorem~\ref{thm:bound_M1}) is a lower bound on the competitive ratio of the mechanism, which is obtained by showing that $\M_\textsc{c}$ always provides at least a constant fraction of the seller's expected revenue achieved by the benchmark $\M^\star$, independently of the distribution of agents' initial valuations $F$.~\footnote{To prove the lower bound, we follow an approach similar to that used by~\citet{babaioff2017posting} to bound the competitive ratio of their \emph{Equal-Sample-of-Every-Scale} mechanism. However, our setting introduces additional challenges, since the agents' arrivals are stochastic and the valuations are discounted. Thus, our proofs require different techniques w.r.t.~those of~\citet{babaioff2017posting}.} 
This is surprising since, differently from $\M^\star$, our mechanism works without having knowledge about $F$ (except for its range).
%

We first need some definitions and lemmas.
%

\begin{definition}\label{def:kappa_s}
	Let $I_{s,\t}$ be any interval of length $\t \in (0,T]$ starting at $s \in [0,T-\t]$.
	Then, the ratio between the prices posted by $\M_\textsc{c}$ at the endpoints of $I_{s,\t}$ is defined as:
	\[
		\kappa_\t (s)\coloneqq\frac{p_{\M_\textsc{c}}(s)}{p_{\M_\textsc{c}}( s + \t)} . 
	\] 
	%
\end{definition}
Intuitively, $\kappa_\t (s)$ bounds the slope of the price function of $\M_\textsc{c}$ in the time interval $I_{s,\t}$, which depends on both the starting time $s$ and the length $\t$ of the interval.
Moreover, notice that $\kappa_\t (s) \geq 1$ since $p_{\M_\textsc{c}}$ is non-increasing by Lemma~\ref{lem:decr}.
Next, we introduce an upper bound on the price ratios of all the time intervals of length $\t$, which is useful in deriving our main result.
\begin{definition}\label{def:kappa}
	The maximum price ratio of $\M_\textsc{c}$ over intervals of length $\t \in (0,T]$ is denoted by:
	\[
		\kappa_\t \coloneqq \max_{s \in[0,T-\t]} \kappa_\t(s).
	\]
\end{definition}

The following lemma establishes a relation between the price function $p_{\M_\textsc{c}}$ of $\M_\textsc{c}$ and the expected value of the random variable $X_{\l T}$ representing the maximum initial valuation of agents arriving in the overall time period.
This is crucial to prove Theorem~\ref{thm:bound_M1}.

\begin{restatable}{lemma}{interv}\label{lem:interv}
	In the RV setting with agents' initial valuations drawn from a distribution $F$, given $\t \in (0,T]$ and $0 < \epsilon < 1$, there exists at least an interval $I_{s,\tau}$ of length $\t \in (0,T]$ starting at $s \in [0, T -\t]$ such that the prices $p_{\M_\textsc{c}}(t)$ posted by mechanism $\M_\textsc{c}$ during the time instants $t \in I_{s,\tau}$ lie in the range $\left[\frac{\E{[X_{\lambda T}] \xi(s+\t)(1-\epsilon)}}{\kappa_\t},\E{[X_{\lambda T}]\xi(s+\t)(1-\epsilon)}\right]$.
	%
\end{restatable}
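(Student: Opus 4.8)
The plan is to exploit the monotonicity of $p_{\M_\textsc{c}}$ and to locate a single starting time $s$ at which the posted price equals the upper end of the target range; the two inclusions then follow almost immediately. Concretely, set $M(s) \coloneqq \E[X_{\l T}]\,\xi(s+\t)\,(1-\epsilon)$, which is continuous in $s$. By Lemma~\ref{lem:decr} the price function $p_{\M_\textsc{c}}$ is continuous and non-increasing, so on any interval $I_{s,\t}$ the posted prices range exactly over $[p_{\M_\textsc{c}}(s+\t),\, p_{\M_\textsc{c}}(s)]$. Hence it suffices to find $s \in [0,T-\t]$ with $p_{\M_\textsc{c}}(s) = M(s)$. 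Indeed, for such an $s$ and any $t \in I_{s,\t}$ we get the upper bound $p_{\M_\textsc{c}}(t) \le p_{\M_\textsc{c}}(s) = M(s)$, while for the lower bound we invoke Definitions~\ref{def:kappa_s} and~\ref{def:kappa}: $p_{\M_\textsc{c}}(t) \ge p_{\M_\textsc{c}}(s+\t) = p_{\M_\textsc{c}}(s)/\kappa_\t(s) = M(s)/\kappa_\t(s) \ge M(s)/\kappa_\t$, using $1 \le \kappa_\t(s) \le \kappa_\t$.

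Next I would establish the existence of such an $s$ by the intermediate value theorem applied to the continuous function $\Delta(s) \coloneqq p_{\M_\textsc{c}}(s) - M(s)$ on $[0,T-\t]$. At the left endpoint, $p_{\M_\textsc{c}}(0)=h$ whereas $M(0) = \E[X_{\l T}]\,\xi(\t)\,(1-\epsilon) \le h$ because $\E[X_{\l T}] \le h$, $\xi(\t) \le 1$, and $1-\epsilon < 1$; thus $\Delta(0) \ge 0$. To produce a point where $\Delta$ is non-positive, I would pass to the undiscounted price $q(t) \coloneqq p_{\M_\textsc{c}}(t)/\xi(t)$, which by Lemma~\ref{lem:decr} decreases continuously from $q(0)=h$ and equals $1$ on $[t_0,T]$. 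Rewriting $\Delta(s) = \xi(s+\t)\big(q(s)\,\tfrac{\xi(s)}{\xi(s+\t)} - \E[X_{\l T}](1-\epsilon)\big)$ shows that $\Delta(s) \le 0$ exactly where the ratio $q(s)\,\xi(s)/\xi(s+\t)$ drops below the target level $\E[X_{\l T}](1-\epsilon)$. Since $q$ falls to $1$ while the discount ratio $\xi(s)/\xi(s+\t)$ stays close to $1$ away from the terminal time, this quantity attains values below $\E[X_{\l T}](1-\epsilon)$ in an interior region; combined with $\Delta(0)\ge 0$, the intermediate value theorem then yields a zero of $\Delta$, i.e.\ the desired $s$.

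I expect the existence of the crossing to be the delicate step. The subtlety is that the target involves the discount $\xi(s+\t)$ at the \emph{right} endpoint rather than $\xi(s)$, so the quantity $p_{\M_\textsc{c}}(s)/\xi(s+\t)$ that must be matched against $\E[X_{\l T}](1-\epsilon)$ is always at least $q(s) \ge 1$; one must therefore argue carefully, using the specific shape of $p_{\M_\textsc{c}}$ from Theorem~\ref{thm:bestCR_gen} (undiscounted price decreasing from $h$ down to $1$) together with the continuity and monotonicity of $\xi$, that the minimum of this quantity indeed lies below the target so that a sign change of $\Delta$ occurs. The terminal behaviour $\xi(T)=0$ must also be handled when controlling $\kappa_\t$ and the right endpoint of $\Delta$. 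Once the crossing is secured, the rest is the short two-line verification of the two bounds described above.
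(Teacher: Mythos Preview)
Your approach is exactly the paper's: pick $s$ so that $p_{\M_\textsc{c}}(s)=\E[X_{\lambda T}]\,\xi(s+\tau)(1-\epsilon)$, then read off the upper bound from monotonicity and the lower bound from $p_{\M_\textsc{c}}(s+\tau)=p_{\M_\textsc{c}}(s)/\kappa_\tau(s)\ge p_{\M_\textsc{c}}(s)/\kappa_\tau$. The paper's proof is in fact shorter than yours: it simply asserts that ``given how the function $p_{\M_\textsc{c}}$ is defined'' one can always choose such an $s$, and then carries out the two-line verification you describe.

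Your additional intermediate-value-theorem discussion is thus not something the paper supplies; you are being more scrupulous than the original. Your instinct that this is the delicate point is correct, and the obstruction you spotted is real: because the target involves $\xi(s+\tau)$ rather than $\xi(s)$, one has $p_{\M_\textsc{c}}(s)/\xi(s+\tau)\ge q(s)\ge 1$, so when $\E[X_{\lambda T}](1-\epsilon)<1$ the equation $p_{\M_\textsc{c}}(s)=M(s)$ may have no solution. The paper does not address this in the lemma; the analogous Lemma~\ref{lem:ex} for $\M_\textsc{pc}$ patches it by replacing $\E[X_{\lambda T}](1-\epsilon)$ with $\nu\coloneqq\max\{1,\E[X_{\lambda T}](1-\epsilon)\}$, and the downstream Theorem~\ref{thm:bound_M1} sidesteps the issue via its separate ``$s\ge t_0$'' case. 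So your proposal matches the paper and, where it hesitates, it is hesitating at a point the paper itself leaves informal.
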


%
%

The following two lemmas are the final pieces that we need to prove Theorem~\ref{thm:bound_M1}.
Lemma~\ref{lem:F_IHR} establishes that, if the distribution $F$ is MHR, then the same holds for the distribution $F_{X_{\l \t}}$ of $X_{\l \t}$.
Lemma~\ref{lem:ex_ln}, given two intervals of length $\t$ and $\t'$ with $\t \leq \t'$, provides a lower bound on the expected value of $X_{\l \t}$ which depends on the expected value of $X_{\l \t'}$ and the logarithms of the expected number of agents' arrivals in the two intervals, respectively $\l \t $ and $\l \t'$.

\begin{restatable}{lemma}{FIHR}\label{lem:F_IHR}
	$F_{X_{\l \t}}$ has non-decreasing monotone hazard rate.
\end{restatable}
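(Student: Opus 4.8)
The plan is to compute the hazard rate of $F_{X_{\l\t}}$ directly from the closed form $F_{X_{\l\t}}(x)=e^{-\l\t(1-F(x))}$ derived above, and to exhibit it as a product of two non-negative non-decreasing functions of $x$. First I would differentiate on the support $[1,h]$ to obtain the density $f_{X_{\l\t}}(x)=\l\t\,f(x)\,e^{-\l\t(1-F(x))}$, so that
\[
	H_{X_{\l\t}}(x)=\frac{f_{X_{\l\t}}(x)}{1-F_{X_{\l\t}}(x)}=\frac{\l\t\,f(x)\,e^{-\l\t(1-F(x))}}{1-e^{-\l\t(1-F(x))}}.
\]

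Then I would set $u=u(x)\coloneqq\l\t(1-F(x))$ and use the identity $\l\t\,f(x)=H(x)\,u(x)$, where $H(x)=\frac{f(x)}{1-F(x)}$ is the hazard rate of $F$. Multiplying numerator and denominator by $e^{u}$ this yields the factorization
\[
	H_{X_{\l\t}}(x)=H(x)\cdot\phi\big(u(x)\big),\qquad \phi(u)\coloneqq\frac{u}{e^{u}-1}.
\]
The argument now reduces to three monotonicity facts. The factor $H(x)$ is non-decreasing in $x$ by the MHR hypothesis on $F$. The argument $u(x)=\l\t(1-F(x))$ is non-increasing in $x$ because $F$ is a cumulative distribution function. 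And the auxiliary map $\phi$ is decreasing on $(0,\infty)$.

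This last fact is the one step requiring a short computation, and I expect it to be the only real obstacle. I would verify it by checking that
\[
	\phi'(u)=\frac{e^{u}(1-u)-1}{(e^{u}-1)^{2}}
\]
has a negative numerator for $u>0$: the numerator $e^{u}(1-u)-1$ vanishes at $u=0$ and has derivative $-u\,e^{u}<0$, hence is strictly negative on $(0,\infty)$. Since $u(x)$ is non-increasing in $x$, the composition $\phi(u(x))$ is therefore non-decreasing in $x$.

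Finally, because both $H(x)$ and $\phi(u(x))$ are non-negative and non-decreasing in $x$, their product $H_{X_{\l\t}}(x)$ is non-decreasing, which is exactly the claim. The only delicate bookkeeping I foresee concerns the endpoints: at $x=h$ both numerator and denominator of $H_{X_{\l\t}}$ vanish and the hazard rate diverges, which is consistent with (indeed the strongest form of) monotonicity, while the atom of mass $e^{-\l\t}$ that $F_{X_{\l\t}}$ carries below the support (the no-arrival event) sits outside $(1,h)$ and does not affect the monotonicity argument on the interior.
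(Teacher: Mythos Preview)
Your proposal is correct and follows essentially the same approach as the paper: both compute the hazard rate from $F_{X_{\l\t}}(x)=e^{-\l\t(1-F(x))}$, factor it as $H(x)$ times a function of $1-F(x)$ (your $\phi(u)=u/(e^{u}-1)$ is the paper's $g(y)=y/(e^{\l\t y}-1)$ up to the substitution $u=\l\t y$), and then verify that this second factor is non-increasing by the same derivative sign check. Your justification that the numerator $e^{u}(1-u)-1$ is negative (vanishing at $0$ with derivative $-u e^{u}<0$) is slightly more explicit than the paper's bare inequality, and your remarks on the endpoint $x=h$ and the no-arrival atom are careful touches the paper omits.
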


\begin{restatable}{lemma}{Zheng}\label{lem:ex_ln}
	For every $\t, \t' \in (0, T]$ with $\t \leq \t'$, it holds:
	\[
		\frac{\E[X_{\l\t}]}{\E[X_{\l \t'}]} \ge \frac{\ln \left( {\l\t} \right)}{\ln \left({\l \t'} \right)}.
	\]
\end{restatable}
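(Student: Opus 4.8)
The plan is to reduce the ratio bound to a differential inequality in the parameter $n \coloneqq \l\t$ and then discharge that inequality using the monotone hazard rate of $F$. Writing $q(x) \coloneqq 1 - F(x)$ and using $F_{X_{\l\t}}(x) = e^{-\l\t(1-F(x))}$, the expected maximum becomes $\E[X_n] = \int_0^h \left(1 - e^{-n\,q(x)}\right)\dd x$, a smooth increasing function of $n$. Throughout I take the substantive regime $\l\t > 1$, in which both logarithms are positive; if $\l\t \le 1 \le \l\t'$ the right-hand side is non-positive and the claim is immediate. Since for $1 < n \le n'$ one has $\ln\frac{\E[X_{n'}]}{\E[X_n]} = \int_n^{n'} \frac{\frac{\dd}{\dd u}\E[X_u]}{\E[X_u]}\,\dd u$ and $\ln\frac{\ln n'}{\ln n} = \int_n^{n'}\frac{1}{u\ln u}\,\dd u$, it suffices to prove the pointwise inequality
\[
	n\ln n\int_0^h q(x)\,e^{-n\,q(x)}\,\dd x \;\le\; \int_0^h \left(1 - e^{-n\,q(x)}\right)\dd x, \qquad (\star)
\]
because $(\star)$ is exactly $\frac{\dd}{\dd n}\ln\E[X_n] \le \frac{1}{n\ln n}$, and integrating it from $n = \l\t$ to $n' = \l\t'$ yields $\frac{\E[X_{\l\t}]}{\E[X_{\l\t'}]} \ge \frac{\ln(\l\t)}{\ln(\l\t')}$.

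To prove $(\star)$ I would exploit the MHR assumption through the cumulative-hazard representation $1 - F(x) = e^{-R(x)}$ on $[1,h]$, where $R(x) = \int_1^x H(u)\,\dd u$ is convex and increasing with $R(1)=0$. Substituting $s = R(x)$ turns $\dd x$ into $\eta(s)\,\dd s$ with $\eta(s) \coloneqq 1/H\!\left(R^{-1}(s)\right)$, and $\eta$ is \emph{non-increasing} precisely because $H$ is non-decreasing (this is where MHR enters, and may equivalently be routed through Lemma~\ref{lem:F_IHR}). Splitting off the flat part $x \in [0,1]$, where $q \equiv 1$, and setting $w \coloneqq n\,e^{-s}$, inequality $(\star)$ rewrites as
\[
	Q + \int_0^\infty P(s)\,\eta(s)\,\dd s \;\ge\; 0, \qquad P(s) \coloneqq 1 - e^{-w}\!\left(1 + w\ln n\right),
\]
where $Q \coloneqq 1 - e^{-n}(1 + n\ln n)$ is the contribution of $[0,1]$, equal to $P$ evaluated at $w = n$.

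The final step is a single-crossing argument. Viewing $\widetilde P(w) \coloneqq 1 - e^{-w}(1 + w\ln n)$, one checks $\widetilde P(0)=0$, that $\widetilde P$ is unimodal in $w$ (its derivative $e^{-w}(1+(w-1)\ln n)$ vanishes once), and that $\widetilde P(n) = Q \ge 0$; hence as $s$ increases, so $w = ne^{-s}$ decreases from $n$ to $0$, the function $P(s)$ is positive for $s < s^\ast$ and negative for $s > s^\ast$, a single sign change from $+$ to $-$. Because $\eta$ is non-negative and non-increasing, $\int_0^\infty P(s)\eta(s)\,\dd s \ge \eta(s^\ast)\int_0^\infty P(s)\,\dd s$, and a direct computation gives $\int_0^\infty P(s)\,\dd s = \gamma + \int_n^\infty \frac{e^{-w}}{w}\,\dd w + e^{-n}\ln n \ge 0$ with $\gamma$ the Euler–Mascheroni constant; together with $Q \ge 0$ this establishes $(\star)$. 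I expect this last step to be the main obstacle: $P$ genuinely becomes negative (for $n > e$) near the high-valuation tail, so no pointwise bound works and a naive concavity-in-$\log n$ argument fails. MHR is indispensable exactly here, since it forces the weight $\eta$ to be largest where $P$ is positive and smallest where $P$ is negative, which is what makes the sign-weighted integral non-negative.
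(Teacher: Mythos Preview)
Your proof is correct and takes a genuinely different route from the paper's. The paper writes $\E[X_{\l\t}] = \int_0^1 \frac{1}{H(F^{-1}(1-\eta))}\,\frac{1-e^{-\l\t\eta}}{\eta}\,\dd\eta$ (the same hazard-weight decomposition you reach after your substitution $s = R(x)$), and then applies a single Chebyshev-type integral inequality (Proposition~\ref{cheby}) to the ratio $\E[X_{\l\t}]/\E[X_{\l\t'}]$ directly, obtaining the $F$-free lower bound $Ein(\l\t)/Ein(\l\t')$; the final step is the special-function inequality $Ein(n)/Ein(n') \ge \ln n/\ln n'$. You instead prove the equivalent differential statement $\frac{\dd}{\dd n}\ln\E[X_n] \le \frac{1}{n\ln n}$ via a single-crossing argument, and your terminal special-function fact $\int_0^\infty P(s)\,\dd s = Ein(n) - (1-e^{-n})\ln n \ge 0$ is precisely the derivative of the paper's $Ein$-ratio bound. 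The trade-off: the paper's route is shorter (one citation of a classical correlation inequality replaces your entire single-crossing step), while yours is self-contained, makes the role of MHR fully explicit (it is exactly what aligns the monotone weight $\eta$ with the sign of $P$), and delivers the pointwise logarithmic-derivative control as a byproduct.
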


\begin{restatable}{theorem}{boundM}\label{thm:bound_M1}
	Consider the RV setting with $\l \t = (\lambda T)^{1-\epsilon} \ge 1- \ln (e-1)$ for some $\t \in (0,T]$ and $0 < \epsilon < 1$.
	Then, restricted to the set $\F$ of distributions $F$ satisfying the MHR condition, mechanism $\M_\textsc{c}$ has a competitive ratio that can be lower bounded as follows:
    \[
    \CR(\mathcal{M}_\textsc{c}) \geq \frac{\xi(t_0 + T^{1-\epsilon}\l^{-\epsilon})(1-\epsilon)}{\kappa_\t e}.
    \]
	%
	%
\end{restatable}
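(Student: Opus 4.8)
The plan is to sandwich $\CR_F(\M_\textsc{c})$ between a lower bound on $\E_F[\R(\M_\textsc{c})]$ and an upper bound on $\E_F[\R(\M^\star)]$ that share the common factor $\E[X_{\l T}]$, and then minimise over $F\in\F$. For the benchmark I would first observe that any posted-price mechanism sells at most one unit at a price not exceeding the discounted valuation of the buying agent, so pathwise its revenue is at most $Y_{\l T}=\max_i D_i$; since $D_i=V_i\,\xi(W_i)\le V_i$, this is in turn at most $X_{\l T}=\max_i V_i$. Taking expectations yields $\E_F[\R(\M^\star)]\le\E[Y_{\l T}]\le\E[X_{\l T}]$, which is the denominator bound.

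The core of the argument is the lower bound on $\E_F[\R(\M_\textsc{c})]$, built on Lemma~\ref{lem:interv}. Fix $\t=T^{1-\epsilon}\l^{-\epsilon}$, so that $\l\t=(\l T)^{1-\epsilon}$, and let $I_{s,\t}$ be the interval guaranteed by the lemma, on which the prices posted by $\M_\textsc{c}$ lie in $[p_\ell,p_u]$ with $p_u=\E[X_{\l T}]\,\xi(s+\t)\,(1-\epsilon)$ and $p_\ell=p_u/\kappa_\t$. I would argue $\E_F[\R(\M_\textsc{c})]\ge p_\ell\cdot\Pr[A]$, where $A$ is the event that some agent arriving in $I_{s,\t}$ has discounted valuation at least $p_u$. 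The justification is a short case analysis using that $p_{\M_\textsc{c}}$ is non-increasing (Lemma~\ref{lem:decr}): if the item is already sold before time $s$, it fetched a price at least $p_{\M_\textsc{c}}(s)\ge p_\ell$; otherwise, on $A$ there is an agent whose discounted valuation is at least $p_u$, hence at least the price offered to her (which is at most $p_u$ throughout $I_{s,\t}$), so the item is sold at some time in $I_{s,\t}$ at a price at least $p_\ell$. Either way the revenue is at least $p_\ell$ on $A$.

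It then remains to lower bound $\Pr[A]$. Since $\xi$ is non-increasing, every agent $i$ arriving in $I_{s,\t}$ satisfies $D_i=V_i\,\xi(W_i)\ge V_i\,\xi(s+\t)$, so $A\supseteq\{X_{\l\t}\ge p_u/\xi(s+\t)\}=\{X_{\l\t}\ge(1-\epsilon)\E[X_{\l T}]\}$, where $X_{\l\t}$ denotes the maximum valuation among the agents in $I_{s,\t}$ (distributed as the generic $X_{\l\t}$ by stationarity of the Poisson process). By Lemma~\ref{lem:ex_ln} with $\l\t=(\l T)^{1-\epsilon}$ one gets $(1-\epsilon)\E[X_{\l T}]\le\E[X_{\l\t}]$, whence $\Pr[A]\ge\Pr[X_{\l\t}\ge\E[X_{\l\t}]]$. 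The final ingredient is the concentration inequality $\Pr[X_{\l\t}\ge\E[X_{\l\t}]]\ge 1/e$, which I would derive from the MHR property of $X_{\l\t}$ (Lemma~\ref{lem:F_IHR}) via the classical fact that an MHR variable exceeds its mean with probability at least $1/e$. Assembling the pieces gives $\E_F[\R(\M_\textsc{c})]\ge \E[X_{\l T}]\,\xi(s+\t)\,(1-\epsilon)/(\kappa_\t\,e)$; dividing by the denominator bound, using $s\le t_0$ together with monotonicity of $\xi$ to replace $\xi(s+\t)$ by $\xi(t_0+T^{1-\epsilon}\l^{-\epsilon})$, and minimising over $F\in\F$ yields the claim.

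The main obstacle I anticipate is the concentration step. The classical ``$\Pr[X\ge\E X]\ge 1/e$'' bound for MHR variables is usually stated for continuous distributions whose support reaches its infimum without an atom; here, however, $X_{\l\t}$ carries an atom of mass $e^{-\l\t}$ at the bottom of the support, coming from the event that no agent arrives in the interval. This is precisely where the hypothesis $\l\t\ge 1-\ln(e-1)$ is consumed: it is equivalent to $e^{-\l\t}\le 1-1/e$, i.e.\ to the probability of no arrival being small enough that the atom does not destroy the $1/e$ guarantee. Making the adaptation of the MHR concentration bound rigorous in the presence of this atom --- rather than the bookkeeping in the revenue and benchmark bounds, which is routine --- is the delicate part of the proof.
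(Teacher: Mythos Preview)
Your proposal tracks the paper's argument for the case $s<t_0$ essentially verbatim: Lemma~\ref{lem:interv} for the interval, Lemma~\ref{lem:ex_ln} to replace $(1-\epsilon)\E[X_{\lambda T}]$ by $\E[X_{\lambda\tau}]$, the Barlow $1/e$ bound, and $\E_F[\R(\M^\star)]\le\E[Y_{\lambda T}]\le\E[X_{\lambda T}]$ for the benchmark. The gap is your unqualified assertion ``using $s\le t_0$''. Nothing guarantees this: the $s$ produced by Lemma~\ref{lem:interv} is pinned by $p_{\M_\textsc{c}}(s)=\E[X_{\lambda T}]\,\xi(s+\tau)\,(1-\epsilon)$, and when $\E[X_{\lambda T}](1-\epsilon)$ is small (e.g.\ for $F$ concentrated near $1$) this $s$ can lie beyond $t_0$. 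In that regime $\xi(s+\tau)<\xi(t_0+\tau)$, so the replacement you make at the end goes the wrong way and the chain breaks.

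The paper closes this hole by an explicit case split. When $s\ge t_0$ it abandons the MHR concentration step altogether and instead lower-bounds the revenue via the interval $I_{t_0,\tau}$, where the posted price is $\xi(t)$: the revenue there is at least $\xi(t_0+\tau)\,(1-e^{-\lambda\tau})$, and the hypothesis $\lambda\tau\ge 1-\ln(e-1)$ is exactly what gives $1-e^{-\lambda\tau}\ge 1/e$. The assumption $s\ge t_0$ combined with the range in Lemma~\ref{lem:interv} then forces $\E[X_{\lambda T}](1-\epsilon)/\kappa_\tau\le 1$, which reconnects to the target bound. So you have mislocated where the hypothesis is consumed: in the paper it powers this second case, not a repair of Barlow's inequality against the no-arrival atom. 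Your worry about the atom is a legitimate side observation --- the paper simply cites Barlow without commenting on it --- but it is orthogonal to the missing case split, which is the actual structural gap in your proposal.
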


The idea of the proof is to use $\CR_F(\M_\textsc{c}) \geq \frac{\E_F [\R(\M_\textsc{c})] }{ \E [Y_{\l T}]}$, following from the fact that $\E_F [ \R(\M^\star) ]$ cannot be larger than $\E [Y_{\l T}]$, which is the expected revenue achieved by an optimal mechanism that knows the realization of agents' initial valuations and arrivals.
Then, $\E_F [\R(\M_\textsc{c})]$ is lower bounded by the revenue that $\M_\textsc{c}$ achieves in a suitably defined interval $I_{s,\t}$, whose existence is guaranteed by Lemma~\ref{lem:interv}.
Moreover, Lemmas~\ref{lem:interv},~\ref{lem:F_IHR},~and~\ref{lem:ex_ln}, together with the properties of MHR distributions, allow us to write $\E_F [\R(\M_\textsc{c})] \geq \frac{\E [X_{\l T}]\xi(t_0 + \t) (1-\epsilon)}{\kappa_\t e}$, giving the result as $\E[Y_{\l T}] \leq \E[X_{\l T}]$.
%



\subsection{A Mechanism with a Piecewise Constant Price}\label{subsec:m2}

We introduce a new mechanism $\M_\textsc{pc}$ whose pricing strategy $p_{\M_\textsc{pc}}$ is a piecewise constant function.
This turns out to be useful in all the situations in which the seller is constrained not to change the posted price too often, \emph{e.g.}, when the mechanism is required to set prices for time intervals having a given minimum length.
Our main result (Theorem~\ref{thm:LB_step}) is a lower bound on the competitive ratio of $\M_\textsc{pc}$ in the RV setting, which is comparable to that obtained for $\M_\textsc{c}$ in Theorem~\ref{thm:bound_M1}.
Thus, we show that, even in presence of constraints on the allowed pricing strategies, we are still able to design mechanisms with good performances in terms of competitive ratio.
Clearly, $\M_\textsc{pc}$ depends on the minimum length requirement, which influences the resulting lower bound.
In particular, $\M_\textsc{pc}$ is tuned by a parameter $\delta$ related to the number of time intervals in which the price must be constant.


Mechanism $\M_\textsc{pc}$ works by evenly partitioning the time interval $[0,t_0]$ into $\lceil \log_{\d}h \rceil$ sub-intervals of length $\t$, where $\delta \in (1,h]$ and $t_0 \in [0,T]$ are suitably defined parameters.
Then, the remaining time $[t_0, T]$ is organized in other sub-intervals of length $\t$.
%
%
As a result, $[0,T]$ is partitioned into $\left\lceil \frac{T}{\t} \right\rceil$ sub-intervals, which, overloading notation, we denote by $I_i \coloneqq \left[ (i-1) \t, \min \{i\t, T \} \right]$ for $i = 1,\ldots,\left\lceil \frac{T}{\t} \right\rceil$.
Notice that $\t=\frac{t_0}{\lceil \log_{\d}h\rceil}$, and, thus, parameters $t_0$ and $\delta$ can be tuned to match the required minimum length $\t$.
The pricing strategy $p_{\M_\textsc{pc}}$ of $\M_\textsc{pc}$ is defined in such a way that the price is constant in each interval $I_i$.
By letting $p_{\M_\textsc{pc}}(I_i)$ be the price posted during $I_i$, we define the function $p_{\M_\textsc{pc}}$ as follows:~\footnote{Whenever $T$ is not divisible by $\t$, then the last time interval is shorter than $\t$. Thus, in order to satisfy the minimum length constraint, we set its price equal to the one in the preceeding interval.}
\begin{equation*}
	p_{\M_\textsc{pc}}(I_i) \coloneqq \left\{\begin{array}{ll} \frac{h}{\d^i} \xi(i\t) & \textnormal{if } i = 1,\ldots,\lfloor \log_{\d}h\rfloor \\ 
	\xi(i\t) & \textnormal{if }  i = \lceil \log_{\d}h\rceil, \ldots , \left\lceil \frac{T}{\t} \right\rceil -1  .\\
	\xi((i-1)\t) & \textnormal{if } i = \left\lceil \frac{T}{\t} \right\rceil \end{array}\right. 
\end{equation*}
%

%


We compare in Figure~\ref{fig:M1and2lin} the prices of $\M_{\textsc{c},\textnormal{lin}}$ and $\M_{\textsc{pc},\textnormal{lin}}$ (\emph{i.e.}, $\M_\textsc{pc}$ with a linear discount) in a specific setting for two values of $\tau$. Notice that $\M_\textsc{pc}$ can be thought of as an extension of the \emph{Equal-Sample-of-Every-Scale} (ESoES) mechanism by~\citet{babaioff2017posting} to the more general setting in which agents arrive stochastically according to a Poisson process and agents' valuations are discounted over time. 

\begin{figure}[t]
\begin{subfigure}{0.2\textwidth}
\begin{tikzpicture}[domain=0:2, scale=0.5]
\draw[->] (-0.2,0) -- (6.5,0) node at (6.8,-0.3) {$t$};
\draw[->] (0,-0.2) -- (0,4.5)node[above=0.1cm] {};
\draw[gray, thin, dashed] (0,4) -- (6,0);
\draw[gray, thin, dashed] (0,1) -- (6,0);
\draw[gray, thin, dashed] (6,0) -- (6,4);
\node at (-0.3,4) {$h$};
\node at (-0.3,1) {$1$};
\node at (6,-0.3) {$T$};
\node at (1.5,-0.3) {$t_{0}$};
\draw[color=blue, thin, domain=0:0.688] plot (\x,{4*(1-\x/6)*exp(-3.85437997725*\x+2.65453166477*\x^2)}); 
\draw[blue, thin] (0.688,0.877390) -- (6,0);
\draw[black, thick] (0,2.31) -- (0.5,2.31);
\draw[black, thick] (0.5,2.31) -- (0.5,1.323);
\draw[black, thick] (0.5,1.323) -- (1,1.323);
\draw[black, thick] (1,1.323) -- (1,0.75);
\draw[black, thick] (1,0.75) -- (1.5,0.75);
\draw[black, thick] (1.5,0.75) -- (1.5,0.6666);
\draw[black, thick] (1.5,0.6666) -- (2,0.6666);
\draw[black, thick] (2,0.6666) -- (2,0.5833);
\draw[black, thick] (2,0.5833) -- (2.5,0.5833);
\draw[black, thick] (2.5,0.5833) -- (2.5,0.5);
\draw[black, thick] (2.5,0.5) -- (3,0.5);
\draw[black, thick] (3,0.5) -- (3,0.417);
\draw[black, thick] (3,0.417) -- (3.5,0.417);
\draw[black, thick] (3.5,0.417) -- (3.5,0.3333);
\draw[black, thick] (3.5,0.3333) -- (4,0.3333);
\draw[black, thick] (4,0.3333) -- (4,0.25);
\draw[black, thick] (4,0.25) -- (4.5,0.25);
\draw[black, thick] (4.5,0.25) -- (4.5,0.167);
\draw[black, thick] (4.5,0.167) -- (5,0.167);
\draw[black, thick] (5,0.167) -- (5,0.0833);
\draw[black, thick] (5,0.0833) -- (6,0.0833);
\end{tikzpicture}
\centering
\caption{$\t=1$}
\label{fig:M2linA}
\centering
\end{subfigure} \qquad 
\begin{subfigure}{0.2\textwidth}
\begin{tikzpicture}[domain=0:2, scale=0.5]
\draw[->] (-0.2,0) -- (6.5,0) node at (6.8,-0.3) {$t$};
\draw[->] (0,-0.2) -- (0,4.5)node[above=0.1cm] {};
\draw[gray, thin, dashed] (0,4) -- (6,0);
\draw[gray, thin, dashed] (0,1) -- (6,0);
\draw[gray, thin, dashed] (6,0) -- (6,4);
\node at (-0.3,4) {$h$};
\node at (-0.3,1) {$1$};
\node at (6,-0.3) {$T$};
\node at (0.875,-0.3) {$t_{0}$};
\draw[color=blue, thin, domain=0:0.688] plot (\x,{4*(1-\x/6)*exp(-3.85437997725*\x+2.65453166477*\x^2)});
\draw[blue, thin] (0.688,0.877390) -- (6,0);
\draw[black, thick] (0,3.213) -- (0.125,3.213);
\draw[black, thick] (0.125,3.213) -- (0.125,2.58);
\draw[black, thick] (0.125,2.58) -- (0.25,2.58);
\draw[black, thick] (0.25,2.58) -- (0.25,2.07);
\draw[black, thick] (0.25,2.07) -- (0.375,2.07);
\draw[black, thick] (0.375,2.07) -- (0.375,1.66);
\draw[black, thick] (0.375,1.66) -- (0.5,1.66);
\draw[black, thick] (0.5,1.66) -- (0.5,1.331);
\draw[black, thick] (0.5,1.331) -- (0.625,1.331);
\draw[black, thick] (0.625,1.331) -- (0.625,1.067);
\draw[black, thick] (0.625,1.067) -- (0.75,1.067);
\draw[black, thick] (0.75,1.067) -- (0.75,0.854);
\draw[black, thick] (0.75,0.854) -- (0.875,0.854);
\draw[black, thick] (0.875,0.854) -- (0.875,0.8333);
\draw[black, thick] (0.875,0.8333) -- (1,0.8333);
\draw[black, thick] (1,0.8333) -- (1,0.8125);
\draw[black, thick] (1,0.8125) -- (1.125,0.8125);
\draw[black, thick] (1.125,0.8125) -- (1.125,0.792);
\draw[black, thick] (1.125,0.792) -- (1.25,0.792);
\draw[black, thick] (1.25,0.792) -- (1.25,0.771);
\draw[black, thick] (1.25,0.771) -- (1.375,0.771);
\draw[black, thick] (1.375,0.771) -- (1.375,0.75);
\draw[black, thick] (1.375,0.75) -- (1.5,0.75);
\draw[black, thick] (1.5,0.75) -- (1.5,0.729);
\draw[black, thick] (1.5,0.729) -- (1.625,0.729);
\draw[black, thick] (1.625,0.729) -- (1.625,0.708);
\draw[black, thick] (1.625,0.708) -- (1.75,0.708);
\draw[black, thick] (1.75,0.708) -- (1.75,0.6875);
\draw[black, thick] (1.75,0.6875) -- (1.875,0.6875);
\draw[black, thick] (1.875,0.6875) -- (1.875,0.6667);
\draw[black, thick] (1.875,0.6667) -- (2,0.6667);
\draw[black, thick] (2,0.6667) -- (2,0.6458);
\draw[black, thick] (2,0.6458) -- (2.125,0.6458);
\draw[black, thick] (2.125,0.6458) -- (2.125,0.625);
\draw[black, thick] (2.125,0.625) -- (2.25,0.625);
\draw[black, thick] (2.25,0.625) -- (2.25,0.6042);
\draw[black, thick] (2.25,0.6042) -- (2.375,0.6042);
\draw[black, thick] (2.375,0.6042) -- (2.375,0.5833);
\draw[black, thick] (2.375,0.5833) -- (2.5,0.5833);
\draw[black, thick] (2.5,0.5833) -- (2.5,0.5625);
\draw[black, thick] (2.5,0.5625) -- (2.625,0.5625);
\draw[black, thick] (2.625,0.5625) -- (2.625,0.5417);
\draw[black, thick] (2.625,0.5417) -- (2.75,0.5417);
\draw[black, thick] (2.75,0.5417) -- (2.75,0.5208);
\draw[black, thick] (2.75,0.5208) -- (2.875,0.5208);
\draw[black, thick] (2.875,0.5208) -- (2.875,0.5);
\draw[black, thick] (2.875,0.5) -- (3,0.5);
\draw[black, thick] (3,0.5) -- (3,0.4792);
\draw[black, thick] (3,0.4792) -- (3.125,0.4792);
\draw[black, thick] (3.125,0.4792) -- (3.125,0.4583);
\draw[black, thick] (3.125,0.4583) -- (3.25,0.4583);
\draw[black, thick] (3.25,0.4583) -- (3.25,0.4375);
\draw[black, thick] (3.25,0.4375) -- (3.375,0.4375);
\draw[black, thick] (3.375,0.4375) -- (3.375,0.4167);
\draw[black, thick] (3.375,0.4167) -- (3.5,0.4167);
\draw[black, thick] (3.5,0.4167) -- (3.5,0.3958);
\draw[black, thick] (3.5,0.3958) -- (3.625,0.3958);
\draw[black, thick] (3.625,0.3958) -- (3.625,0.375);
\draw[black, thick] (3.625,0.375) -- (3.75,0.375);
\draw[black, thick] (3.75,0.375) -- (3.75,0.3542);
\draw[black, thick] (3.75,0.3542) -- (3.875,0.3542);
\draw[black, thick] (3.875,0.3542) -- (3.875,0.3333);
\draw[black, thick] (3.875,0.3333) -- (4,0.3333);
\draw[black, thick] (4,0.3333) -- (4,0.3125);
\draw[black, thick] (4,0.3125) -- (4.125,0.3125);
\draw[black, thick] (4.125,0.3125) -- (4.125,0.2917);
\draw[black, thick] (4.125,0.2917) -- (4.25,0.2917);
\draw[black, thick] (4.25,0.2917) -- (4.25,0.2708);
\draw[black, thick] (4.25,0.2708) -- (4.375,0.2708);
\draw[black, thick] (4.375,0.2708) -- (4.375,0.25);
\draw[black, thick] (4.375,0.25) -- (4.5,0.25);
\draw[black, thick] (4.5,0.25) -- (4.5,0.2292);
\draw[black, thick] (4.5,0.2292) -- (4.625,0.2292);
\draw[black, thick] (4.625,0.2292) -- (4.625,0.2083);
\draw[black, thick] (4.625,0.2083) -- (4.75,0.2083);
\draw[black, thick] (4.75,0.2083) -- (4.75,0.1875);
\draw[black, thick] (4.75,0.1875) -- (4.875,0.1875);
\draw[black, thick] (4.875,0.1875) -- (4.875,0.1667);
\draw[black, thick] (4.875,0.1667) -- (5,0.1667);
\draw[black, thick] (5,0.1667) -- (5,0.1458);
\draw[black, thick] (5,0.1458) -- (5.125,0.1458);
\draw[black, thick] (5.125,0.1458) -- (5.125,0.125);
\draw[black, thick] (5.125,0.125) -- (5.25,0.125);
\draw[black, thick] (5.25,0.125) -- (5.25,0.1042);
\draw[black, thick] (5.25,0.1042) -- (5.375,0.1042);
\draw[black, thick] (5.375,0.1042) -- (5.375,0.0833);
\draw[black, thick] (5.375,0.0833) -- (5.5,0.0833);
\draw[black, thick] (5.5,0.0833) -- (5.5,0.0625);
\draw[black, thick] (5.5,0.0625) -- (5.625,0.0625);
\draw[black, thick] (5.625,0.0625) -- (5.625,0.0417);
\draw[black, thick] (5.625,0.0417) -- (5.75,0.0417);
\draw[black, thick] (5.75,0.0417) -- (5.75,0.0208);
\draw[black, thick] (5.75,0.0208) -- (6,0.0208);
\end{tikzpicture}
\caption{$\t=0.25$}
\label{fig:M2linB}
\end{subfigure}
\caption{Prices of mechanisms $\M_{\textsc{c},\textnormal{lin}}$ (blue) and $\M_{\textsc{pc},\textnormal{lin}}$ (black) when $h=2.8, \l=10,$ and $T=12$.}
\label{fig:M1and2lin}
\end{figure}

%

Before proving our main result, we need the following lemma, which is the analogous of Lemma~\ref{lem:interv} working for mechanism $\M_\textsc{pc}$ instead of $\M_\textsc{c}$.

%

\begin{restatable}{lemma}{ex}\label{lem:ex}
	In the RV setting with agents' initial valuations drawn from a distribution $F$, given $0 < \epsilon < 1$, there exists $i = 1,\ldots,\lceil \log_{\d}h\rceil $ such that the price $ p_{\M_\textsc{pc}}(I_i)$ posted by $\M_\textsc{pc}$ during the interval $I_i$ lies in the range $\left[\frac{\nu}{\delta}\xi(i\t),\nu\xi(i\t)\right]$, where $\nu \coloneqq \max\{1,\E[X_{\lambda T}](1-\epsilon)\}$.
\end{restatable}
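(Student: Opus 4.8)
The plan is to divide the target containment by $\xi(i\t)$ and reduce the statement to a one-dimensional covering argument on the \emph{undiscounted ratios} $r_i \coloneqq p_{\M_\textsc{pc}}(I_i)/\xi(i\t)$. By the definition of $p_{\M_\textsc{pc}}$, we have $r_i = h/\d^i$ for $i = 1,\dots,\lfloor \log_\d h\rfloor$ and $r_{\lceil \log_\d h\rceil} = 1$ (the second branch of the pricing strategy, relevant when $\log_\d h \notin \mathbb{N}$; when $\log_\d h \in \mathbb{N}$ the two branches coincide and still give $r_{\lceil \log_\d h\rceil} = 1$). Since $\xi(i\t) > 0$ for $i\t < T$, proving the lemma is equivalent to exhibiting an index $i \in \{1,\dots,\lceil \log_\d h\rceil\}$ with $r_i \in [\nu/\d,\nu]$.

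Before the main step I would record the elementary two-sided bound $1 \le \nu \le h$. The lower bound is immediate from $\nu = \max\{1,\E[X_{\l T}](1-\epsilon)\}$. For the upper bound, all valuations lie in $[1,h]$, hence $X_{\l T} \le h$ and $\E[X_{\l T}] \le h$; together with $1-\epsilon < 1$ this gives $\E[X_{\l T}](1-\epsilon) < h$, so $\nu \le h$ (using $h > 1$).

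The core is then a covering argument, analogous in spirit to Lemma~\ref{lem:interv} but discretised. The finite sequence $r_1,\dots,r_{\lceil \log_\d h\rceil}$ is strictly decreasing, with first term $r_1 = h/\d$ and last term $1$, and any two consecutive terms satisfy $r_i/r_{i+1} \le \d$: among the first-branch indices the ratio is exactly $\d$, while across the transition to the final index it equals $h/\d^{\lfloor \log_\d h\rfloor} \le \d$, since $\d^{\lfloor \log_\d h\rfloor+1} > h$. The target interval $[\nu/\d,\nu]$ has multiplicative width exactly $\d$, and the bounds above give $r_1 = h/\d \ge \nu/\d$ and $r_{\lceil \log_\d h\rceil} = 1 \le \nu$. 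Hence, if no term landed in $[\nu/\d,\nu]$, there would be a largest index $j$ with $r_j > \nu$; then $r_{j+1} \le \nu$ forces $r_{j+1} < \nu/\d$, whence $r_j/r_{j+1} > \d$, contradicting the factor-$\d$ bound on consecutive terms. Thus some $r_i \in [\nu/\d,\nu]$, and multiplying back by $\xi(i\t)$ yields $p_{\M_\textsc{pc}}(I_i) \in \left[\frac{\nu}{\d}\xi(i\t),\nu\xi(i\t)\right]$, as claimed.

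I expect the only delicate point to be the bookkeeping at the boundary between the two branches of $p_{\M_\textsc{pc}}$: one must verify that the jump from the last first-branch ratio $h/\d^{\lfloor \log_\d h\rfloor}$ down to the final ratio $1$ is still within a factor $\d$, which is exactly where $\d^{\lfloor \log_\d h\rfloor} \le h < \d^{\lfloor \log_\d h\rfloor+1}$ is used. Everything else --- the bound $\nu\in[1,h]$ and the monotonicity of the $r_i$ --- is routine.
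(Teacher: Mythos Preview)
Your proof is correct and follows essentially the same approach as the paper's: both arguments reduce to showing that the undiscounted ratios $p_{\M_\textsc{pc}}(I_i)/\xi(i\t)=h/\d^i$ (with last value $1$) form a geometric ladder whose steps of factor at most $\d$ cannot all miss the multiplicative-width-$\d$ window $[\nu/\d,\nu]\subseteq[1/\d,h]$. The paper reaches the contradiction via an iterated case split on which band $(h/\d^i,h/\d^{i-1}]$ contains $\nu$, whereas you use the equivalent ``largest index with $r_j>\nu$'' jump argument; your explicit factoring out of $\xi(i\t)$ and consecutive-ratio bound make the bookkeeping slightly cleaner, but the content is the same.
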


Now, we provide our main result.
The idea behind its proof is similar to the one used for Theorem \ref{thm:bound_M1}.

\begin{restatable}{theorem}{LBstep}\label{thm:LB_step}
	%
	%
	Consider the RV setting with $\lambda\tau = (\lambda T)^{1-\epsilon} \ge 1- \ln (e-1) $ for some $\t \in (0,T]$ and $0 < \epsilon < 1$.
	Then, restricted to the set $\F$ of distributions $F$ satisfying the MHR condition, mechanism $\M_\textsc{pc}$ has a competitive ratio that can be lower bounded as follows:
	\[
	\rho(\mathcal{M}_\textsc{pc}) \geq \frac{\xi((\lceil \log_{\d}h\rceil + 1) T^{1-\epsilon}\lambda^{-\epsilon}) (1-\epsilon)}{\delta e}.
	\]
\end{restatable}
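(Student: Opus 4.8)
The plan is to mirror the proof of Theorem~\ref{thm:bound_M1}, replacing the continuous construction with the piecewise-constant price. I would start from the benchmark bound $\E_F[\R(\M^\star)] \le \E[Y_{\l T}] \le \E[X_{\l T}]$: the first inequality holds because no mechanism can extract more than the expected maximum discounted valuation, and the second because $Y_{\l T} = \max_i V_i\,\xi(W_i) \le \max_i V_i = X_{\l T}$ pointwise, since $\xi \le 1$. This reduces the task to lower bounding $\E_F[\R(\M_\textsc{pc})]$ by a constant fraction of $\E[X_{\l T}]$, after which the competitive ratio bound follows by dividing and taking the minimum over $F \in \F$.

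Next I would isolate a single good interval. By Lemma~\ref{lem:ex} there is an index $i \le \lceil \log_\d h\rceil$ whose price $p_i \coloneqq p_{\M_\textsc{pc}}(I_i)$ lies in $[\tfrac{\nu}{\d}\xi(i\t),\,\nu\,\xi(i\t)]$, with $\nu = \max\{1,\E[X_{\l T}](1-\epsilon)\}$. Since the prices posted by $\M_\textsc{pc}$ are non-increasing, a standard argument gives $\E_F[\R(\M_\textsc{pc})] \ge p_i\,\Pr[E_i]$, where $E_i$ is the event that some agent arriving in $I_i$ is willing to buy at $p_i$: on $E_i$, either the item was already sold earlier at a price $\ge p_i$, or it is sold within $I_i$ at exactly $p_i$, so the realized revenue is at least $p_i$ in either case. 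Because $\xi$ is non-increasing, any agent arriving in $I_i=[(i-1)\t,i\t]$ with initial valuation $V \ge \nu \ge p_i/\xi(i\t)$ certainly buys, whence $\Pr[E_i] \ge \Pr[X_{\l\t} \ge \nu]$ and therefore $\E_F[\R(\M_\textsc{pc})] \ge \tfrac{\nu}{\d}\,\xi(i\t)\,\Pr[X_{\l\t}\ge\nu]$.

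The crux is to show $\Pr[X_{\l\t}\ge\nu]\ge 1/e$, which I would split into two cases. If $\nu = 1$, then $\Pr[X_{\l\t}\ge 1] = 1-e^{-\l\t}$, and the hypothesis $\l\t = (\l T)^{1-\epsilon}\ge 1-\ln(e-1)$ is precisely the condition making this at least $1/e$. If instead $\nu = (1-\epsilon)\E[X_{\l T}]$, I would invoke Lemma~\ref{lem:ex_ln}: since $\l\t = (\l T)^{1-\epsilon}$ gives $\ln(\l\t)/\ln(\l T) = 1-\epsilon$, it yields $\E[X_{\l\t}] \ge (1-\epsilon)\,\E[X_{\l T}] = \nu$; then Lemma~\ref{lem:F_IHR}, stating that $X_{\l\t}$ is MHR, together with the standard fact that a monotone-hazard-rate variable satisfies $\Pr[X \ge \E[X]]\ge 1/e$, gives $\Pr[X_{\l\t}\ge\nu]\ge\Pr[X_{\l\t}\ge\E[X_{\l\t}]]\ge 1/e$.

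Combining the pieces, $\E_F[\R(\M_\textsc{pc})] \ge \tfrac{\nu}{\d e}\,\xi(i\t) \ge \tfrac{(1-\epsilon)\E[X_{\l T}]}{\d e}\,\xi(t_0)$, using $\nu \ge (1-\epsilon)\E[X_{\l T}]$ together with $i\t \le \lceil \log_\d h\rceil\t = t_0$ and the monotonicity of $\xi$. Dividing by $\E[X_{\l T}] \ge \E_F[\R(\M^\star)]$ and recalling $\t = T^{1-\epsilon}\l^{-\epsilon}$, so that $t_0 = \lceil \log_\d h\rceil\t$ and $\xi(t_0) \ge \xi(t_0+\t) = \xi((\lceil \log_\d h\rceil+1)T^{1-\epsilon}\l^{-\epsilon})$, produces the claimed bound uniformly over MHR distributions, and the minimum over $\F$ closes the argument; the final step to $t_0+\t$ only discards monotonicity slack, matching the form of Theorem~\ref{thm:bound_M1}. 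The main obstacle is the revenue lower bound for a single constant-price block combined with verifying $\Pr[X_{\l\t}\ge\nu]\ge 1/e$, in particular checking that the MHR inequality $\Pr[X\ge\E X]\ge 1/e$ remains valid for $X_{\l\t}$ despite the atom at $0$ arising from empty arrival intervals.
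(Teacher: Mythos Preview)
Your proposal is correct and follows essentially the same approach as the paper: bound the benchmark by $\E[X_{\lambda T}]$, invoke Lemma~\ref{lem:ex} to locate a good block, and combine Lemmas~\ref{lem:F_IHR} and~\ref{lem:ex_ln} with the Barlow--Marshall $1/e$ inequality to lower bound the sale probability on that block. The only noteworthy difference is your treatment of the case $\nu=1$ (i.e., $(1-\epsilon)\E[X_{\lambda T}]<1$): you stay with the interval $I_i$ supplied by Lemma~\ref{lem:ex} and observe directly that $\Pr[X_{\lambda\tau}\ge 1]=1-e^{-\lambda\tau}\ge 1/e$ under the hypothesis, whereas the paper switches to the later interval $I_{\lceil\log_\delta h\rceil+1}$ and uses the same $1-e^{-\lambda\tau}\ge 1/e$ bound there. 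Your handling is slightly more uniform (one interval suffices for both cases) and in fact yields $\xi(t_0)$ rather than $\xi(t_0+\tau)$ before you discard the slack; otherwise the two arguments are identical.
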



\section{Empirical Evaluation}

%
We evaluate mechanisms $\M_{\textsc{c}}$, $\M_{\textsc{pc}}$, and a natural adaption of the ESoES mechanism by~\citet{babaioff2017posting} to \emph{stochastic settings} with no time discounting (called ESoES-SS).
The pricing strategy of ESoES-SS is defined as follows. 
First, we compute the prices of ESoES by setting the number of agents equal to the expected number $\l T$ of agents arriving in $[0,T]$ according to a Poisson process of parameter $\l$. 
Then, ESoES-SS proposes the price that ESoES would propose to the $i$-th agent arrived if $i \leq \l T$ and $1$ otherwise.

We use the following parameters values for the experiments: $\l \in \{1,\ldots, 20\}, T \in \{10,20,50,100\}$, and $h\in \{2,\ldots, 20\}$. 
The following results do not consider time discounting so as to have a fair comparison between our mechanisms and ESoES-SS.
%
Further results with a linear discount function are provided in the Appendix.
%


\paragraph{Result \#1}
We study a RV setting with a uniform probability distribution over $[1,h]$.
For every combination of values of $\lambda, T, h$, we run $1000$ Monte Carlo simulations, evaluating the revenue provided by mechanisms ESoES-SS, $\M_{\textsc{C}}$, and $\M_{\textsc{PC}}$.
In particular, we analyze some variants of mechanisms $\M_{\textsc{PC}}$ differing for the number of subintervals (\emph{i.e.}, $Nsub$) in which $[0, t_0]$ is partitioned.
Furthermore, we normalize the revenue provided by the mechanisms in each simulation with respect to $h$.
We report the results in Figure~\ref{figure:empiricalevaluationbabaioff1} for $T=10$ and $T=50$, when $h = 10$.
The results obtained for different values of $h$ are similar.
$\M_{\textsc{C}}$ and $\M_{\textsc{PC}}$ with $Nsub = 232$ have overlapping performances that beat those of the other mechanisms. 
$\M_{\textsc{PC}}$ with $Nsub = 13$ has a performance close to that of the previous two mechanisms, showing that mechanism $\M_{\textsc{PC}}$ provides good performances even with few subintervals.
$\M_{\textsc{PC}}$ with $Nsub = 4$ and ESoES-SS have almost overlapping performances, showing that very few subintervals are sufficient to $\M_{\textsc{PC}}$ to match the performances of ESoES-SS.
The worst mechanism is $\M_{\textsc{PC}}$ with $Nsub = 2$.
The loss of ESoES-SS w.r.t.~$\M_{\textsc{C}}$ averaged over the values of $\lambda$ is about $0.3\, h$ when $T=10$, and $0.4\, h$ when $T=50$.
Surprisingly, the performances of ESoES-SS seem to do not strictly depend on $\lambda$ and $T$.

\begin{figure}[t]
	\centering
	\begin{subfigure}{0.232\textwidth}
		\includegraphics[width=\textwidth]{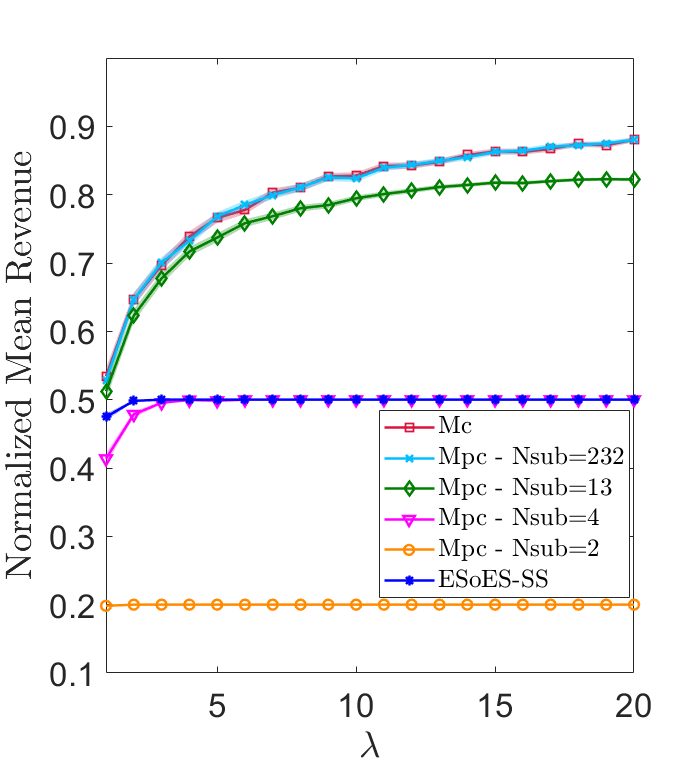}
		\caption{$T=10$} 
	\end{subfigure}
	\begin{subfigure}{0.232\textwidth}
		\includegraphics[width=\textwidth]{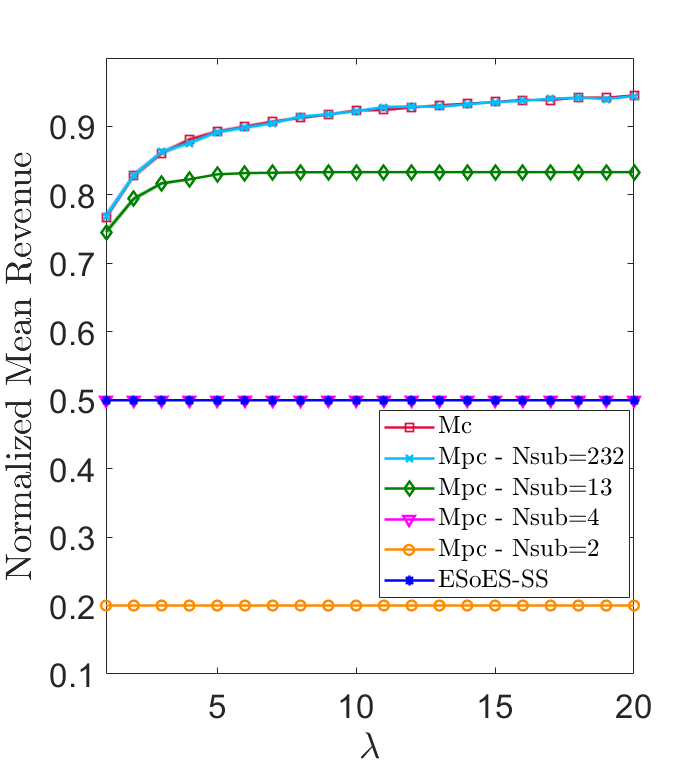}
		\caption{$T=50$}
	\end{subfigure}
	\caption{Average normalized revenue of $\M_{\textsc{C}}$, $\M_{\textsc{PC}}$, ESoES-SS in a RV setting w. uniform distribution ($h=10$).}
	\label{figure:empiricalevaluationbabaioff1}
\end{figure}

\paragraph{Result \#2} 
We study an IV setting.
For every combination of values of $\lambda, T, h$, and for every $v \in \{1.0, 1.5, 2.0,\ldots,h\}$, we run $1000$ Monte Carlo simulations, evaluating the normalized revenue provided by mechanisms ESoES-SS and $\M_{\textsc{C}}$.
%
%
For every combination of values of $\lambda, T, h$, we calculate $\max_v \frac{\E_v[\R(\M_{\textsc{c}})]-\E_v[\R(\textnormal{ESoES-SS})]}{h}$, corresponding to the maximum normalized loss of ESoES-SS w.r.t.~$\M_{\textsc{c}}$ over all valuations $v$, and $\max_v \frac{\E_v[\R(\textnormal{ESoES-SS})]-\E_v[\R(\M_{\textsc{c}})]}{h}$, corresponding to the maximum normalized loss of $\M_{\textsc{c}}$ w.r.t.~ESoES-SS over all valuations $v$.
These two indexes are shown in Figure~\ref{figure:empiricalevaluationbabaioff2} for $T=10$ and $T=50$, when $h = 10$.
The results obtained for different values of $h$ are similar.
The loss of ESoES-SS w.r.t.~$\M_{\textsc{c}}$ is always larger than $0.5\, h$ except when both $\lambda$ and $T$ assume small values, while the loss of $\M_{\textsc{c}}$ w.r.t.~ESoES-SS is negligible. 
Furthermore, the two losses converge to two constants as $\lambda$ and $T$ increase.
This shows that, even if there are some special settings where ESoES-SS performs better than $\M_{\textsc{c}}$, the improvement is negligible. 
Instead, mechanism $\M_{\textsc{c}}$, which is designed to deal with stochastic arrivals, provides a very significant improvement.
In particular, we observe that the difference between the revenue provided by ESoES-SS and that provided by $ \M_{\textsc{c}}$ is maximized for small values of $v$ close to 1, while between $ \M_{\textsc{c}}$ and ESoES-SS for large values of $v$ close to $h$.

\begin{figure}[!htp]
	\centering
	\begin{subfigure}{0.232\textwidth}
		\includegraphics[width=\textwidth]{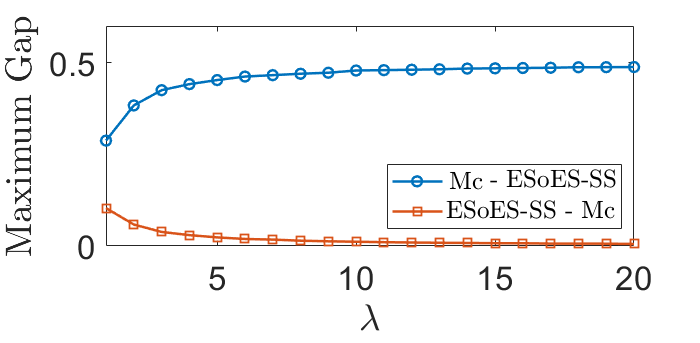}
		\caption{$T=10$} 
	\end{subfigure}
	\begin{subfigure}{0.232\textwidth}
		\includegraphics[width=\textwidth]{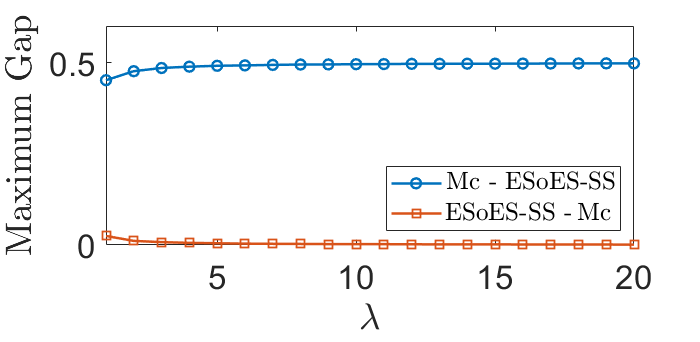}
		\caption{$T=50$}
	\end{subfigure}
	\caption{Maximum difference between the normalized revenues of $\M_{\textsc{C}}$ and ESoES-SS in an IV setting ($h=10$).}
	\label{figure:empiricalevaluationbabaioff2}
\end{figure}

\section{Conclusion and Future Works}

We study distribution-free posted-price mechanisms in order to sell a unique item within a finite time period.
In our model, the agents arrive online according to a Poisson process, and their valuations for the item are discounted over time.
Following a worst-case competitive analysis, we design a mechanism $\M_{\textsc{c}}$ providing an optimal competitive ratio in the identical valuation setting.
Then, as for the random valuation setting, we analyze the performances of $\M_{\textsc{c}}$ and of a new mechanism $\M_\textsc{pc}$ that is constrained to set constant prices during time intervals having a given minimum length.
We prove that both mechanisms achieve a competitive ratio that is constant with respect to the actual valuation when the distribution of the valuations has a monotone hazard rate.
This shows that our mechanisms are robust even in non-stationary markets subject to arbitrary distribution changes preserving the same support. 
%



In future, we will investigate hybrid settings in which our robust mechanisms can be combined with machine learning tools.
For instance, data could be used to learn a class of distributions, and we could design a mechanism robust with respect to all the distributions of that class.

\section{Ethical Impact}

Posted-price mechanisms are widely adopted in real-world economic transactions, thanks to their simplicity: a seller posts prices and buyers arrive sequentially, deciding whether to accept the offer or not.
Nowadays, most e-commerce websites implement this form of interaction with their users.
Our mechanisms apply to concrete scenarios where the probability distribution of buyers' valuations is unknown, the value of item for sale may decrease over time, and buyers' arrivals are stochastic.
In these settings, our mechanisms can make economic transactions more efficient and robust, allowing agents (buyers and sellers) to find better economic agreements. 
As we argued in the paper, our mechanisms provide theoretical guarantees in terms of online worst-case performance.
This could have an arguably positive societal impact when applied to real-world economic problems.
However, further research in this direction is required to prevent scenarios with an unbalanced reward structure, where agreements may just award one side (buyers or sellers) with the largest utilities at the expense of the others. 

\bibliography{bibliography}

\clearpage
\onecolumn
\appendix

\begin{center}
	\LARGE\bf Appendix
\end{center}

\section{Omitted Proofs for the IV Setting}\label{app:iv}

\decr*

\begin{proof}
	We only need to prove the result for mechanisms $\M$ whose undiscounted price $\frac{p_{\M}(t)}{\xi(t)}$ is \emph{not} non-increasing in $t$, otherwise the statement of the lemma is trivially true.
	The main idea of the proof is to let the time period $[0,T]$ be evenly partitioned into time intervals of length $\t$ such that the undiscounted price function of $\M$ is constant in each interval.
	This is w.lo.g. if we take $\t \rightarrow 0$.
	Then, there must be two consecutive time intervals, namely  $I_1 \coloneqq I_{s,\t}$ and $I_2 \coloneqq I_{s+\t,\t}$ for some starting time $s \in [0, T- \t]$, such that there exist $p_1 < p_2 \in [1,h]$ with $\frac{p_{\M}(t)}{\xi(t)} = p_1$ and $\frac{p_{\M}(t)}{\xi(t)} = p_2$ during $I_1$ and $I_2$, respectively (otherwise the undiscounted price would be non-increasing).
	Now, let us define a mechanism $\M^\prime$ whose undiscounted price function is the same as that of $\M$, except for the fact that $\frac{p_{\M^\prime}(t)}{\xi(t)} = p_2$ during $I_1$ and $\frac{p_{\M^\prime}(t)}{\xi(t)} = p_1$ during $I_2$ (\emph{i.e.}, intuitively, we exchange the values in the two intervals so as to make the undiscounted price non-increasing in that window of time).

	We show that the expected revenue provided by $\M^\prime$ is always greater than or equal to that achieved by $\M$, as long as $\tau \rightarrow 0$.
	In order to compare the expected revenues of the two mechanisms, it is sufficient to focus on the window of time $I_1 \cup I_2$, where their price functions differ.
	Given $p_1$ and $p_2$, we can partition the agents' valuations $v \in [1,h]$ into three different subsets, as follows:
	%
	\begin{itemize}
		\item $v < p_1$, implying that $v \, \xi(t) < p_{\M}(t)$ and $v \, \xi(t) < p_{\M^\prime}(t)$ for every time instant $t \in I_1 \cup I_2$;
		\item $p_1 \leq v \leq p_2$, implying that $p_{\M}(t) \leq v \, \xi(t) \leq p_{\M^\prime}(t)$ for every time instant $t \in I_1$ and $p_{\M^\prime}(t) \leq v \, \xi(t) \leq p_{\M}(t)$ for every time instant $t \in I_2$;
		\item $v > p_2$, implying that $v \, \xi(t) > p_{\M}(t)$ and $v \, \xi(t) > p_{\M^\prime}(t)$ for every time instant $t \in I_1 \cup I_2$.
	\end{itemize}
	In the first case, $\E_v [ \mathcal{R}(\M) ] - \E_v [ \mathcal{R}(\M^\prime) ] = 0$, since both $\M$ and $\M^\prime$ achieve an expected revenue equal to $0$ during the time window $I_1 \cup I_2$, given that the item is never sold in that window (as both $p_{\M}(t)$ and $p_{\M^\prime}(t)$ are always higher than the agents' discounted valuation $v \, \xi(t)$).
	As for the second case, let us assume $p_1 < v < p_2$ (since the cases $v=p_1$ and $v=p_2$ are analogous).
	Then, $\M$ can sell the item only during the interval $I_1$, while $\M^\prime$ can sell the item only during the other interval $I_2$.
	Thus, we have the following:
	\[
	\E_v [ \mathcal{R}(\M) ] - \E_v [ \mathcal{R}(\M^\prime) ] = \int_{s}^{s+\t} p_1 \xi(t) \, \l e^{-\l t} dt - \int_{s+\t}^{s+2\t} p_1 \xi(t) \, \l e^{-\l t} dt,
	\]
	which goes to $0$ as long as $\t \rightarrow 0$, given that $\xi$ is continuous. 
	%
	%
	%
	Finally, in the third case, we can compute the difference between the expected revenues of the two mechanisms as follows:
	\begin{align*}
	\E_v [\mathcal{R}(\M)-\E_v [\mathcal{R}(\M^\prime)] & = \int_{s}^{s+\t} p_1 \xi(t) \, \l e^{-\l t} dt + \int_{s+\t}^{s+2\t} p_2 \xi(t) \, \l e^{-\l t} dt \\
	& \quad\quad -  \int_{s}^{s+\t} p_2 \xi(t) \, \l e^{-\l t} dt - \int_{s+\t}^{s+2\t} p_1 \xi(t) \, \l e^{-\l t} dt \\
	& = \left( p_1-p_2 \right) \int_{s}^{s+\t} \xi(t) \, \l e^{-\l t} dt  - (p_1 - p_2) \int_{s+\t}^{s+2\t} \xi(t) \, \l e^{-\l t} dt\\
	& = ( p_1 - p_2 ) \left[ \int_{s}^{s+\t} \xi(t) \, \l e^{-\l t} dt  - \int_{s+\t}^{s+2\t} \xi(t) \, \l e^{-\l t} dt  \right],
	\end{align*}
	which is less than or equal to $0$ as $\t \rightarrow 0$, by continuity of $\xi$.
	%
	%
	%

	By re-iterating the procedure on all the pairs of consecutive infinitesimal intervals (since $\t \rightarrow 0$) defined as $I_1$ and $I_2$ (each time using the last mechanism $\M^\prime$ as the new $\M$), we can render the undiscounted price function non-increasing, obtaining a final mechanism $\M^\prime$ such that $\E_v [ \mathcal{R}(\M) ] \le \E_v [ \mathcal{R}(\M^\prime) ]$ for every possible agents' valuation $v \in [1,h]$.
\end{proof}

\CRconst*

\begin{proof}
	By contradiction, suppose that $\M$ is \emph{not} optimal, \emph{i.e.}, there exists another deterministic posted-price mechanism $\M'$ such that $\CR(\M')>\CR(\M)$.
	According to Proposition~\ref{prop:minimum} and Lemma~\ref{lem:decr}, $\M^\prime$ must be defined by a pricing strategy $p_{\M^\prime}$ such that the undiscounted price $\frac{p_{\M^\prime}(t)}{\xi(t)}$ is non-increasing in $t$ and the minimum price is selected for a time interval $[t_0',T] \subseteq [0,T]$ having non-zero length (recall that $\CR_v(\M) > 0$ does not depend on $v$ and $\CR(\M) = \min_{v \in [1,h]} \CR_v(\M)$).
	
	\textbf{Case $t_0' \ge t_0$.}
	Let us consider the valuation $v=1$.
	Then, we have that the expected revenue of mechanism $\M$ is $\E_v [\R(\M)]= \int_{t_0}^T \xi(t) \, \l e^{- \l t} \dd t $ (accounting for the case in which an agent arrives at $t \geq t_0$ and buys the item at price $\xi(t)$), which is greater than or equal to the expected revenue of mechanism $\M^\prime$, defined as $\E_v [\R(\M')]= \int_{t_0^\prime}^T \xi(t) \, \l e^{- \l t} \dd t $.
	Intuitively, $\E_v [\R(\M)] \geq \E_v [\R(\M')]$ since $\M'$ posts the minimum price for a period of time shorter than that of $\M$.
	Therefore, it holds $\CR(\M^\prime) \leq \CR_v(\M') \leq \CR_v(\M)  \leq \CR(\M)$, which is a contradiction.
	
	\textbf{Case $t_0' < t_0$.}
	First, suppose that there exists a time instant $t' \in [0,t_0^\prime]$ defined as $t^\prime \coloneqq \sup \{ t \in [0, t_0^\prime] \mid p_\M(t) < p_{\M^\prime}(t) \}$, \emph{i.e.}, the last time instant in which $p_\M(t)$ changes from being less than $p_{\M^\prime}(t)$ to being larger than or equal to $p_{\M^\prime}(t)$.
	Clearly, it holds $p_\M(t) \ge p_{\M^\prime}(t)$ for every $t \in [0,T] : t>t'$.
	Moreover, let us consider the agents' valuation $v \in [1,h]$ such that $v \, \xi(t') = p_{\M}(t')$ and focus on the case in which $p_\M(t) = p_{\M^\prime}(t)$ (as the other cases are analogous).
	Notice that, for every time instant $t \leq t^\prime$, mechanism $\M^\prime$ cannot sell the item, since, by using Lemma~\ref{lem:decr}, we get:
	\[
	v \, \xi(t) < v p_{\M'}(t) \frac{\xi(t')}{p_{\M'}(t')} = v p_{\M'}(t) \frac{\xi(t')}{p_{\M}(t')} \leq v p_{\M'}(t) \frac{\xi(t')}{v \, \xi(t')} \leq p_{\M'}(t) .
	\] 
	%
	%
	%
	%
	Additionally, with an analogous reasoning we can show that, for all the times $t \in [0,T] : t>t'$, both mechanisms may sell the item, but the price posted by $\M'$ is always less than or equal to that chosen by $\M$, with a non-empty time interval in which the former is strictly less than the latter (as $t_0' < t_0$).
	Thus, in this case, it holds $\CR_v(\M)>\CR_v(\M')$, which implies that $\CR(\M') <\CR(\M)$, a contradiction. 
	Finally, it remains to analyze the case in which a time instant $t'$ defined above does \emph{not} exist.
	Since the undiscounted price functions are non-increasing by Lemma~\ref{lem:CR_const} and $t_0' < t_0$, it must be the case that there is no intersection point between the two functions.
	Hence, it must be $p_\M(t)>p_{\M'}(t)$ for all $ t \in [0,t_0]$, which implies that $\CR(\M')<\CR(\M)$ by taking $v=h$.
	This leads to a contradiction.
	%
\end{proof}

\bestCRgen*

\begin{proof}
	By Lemma~\ref{lem:CR_const} and using $\CR_v(\M_{\textsc{c}}) = \frac{\E_v[\R(\M_{\textsc{c}})] }{\E_v [\R(\M^\star)]}$, it is sufficient to search for an optimal mechanism $\M_{\textsc{c}}$ whose pricing strategy $p_{\M_{\textsc{c}}}$ is such that the expected revenue of the mechanism is linearly dependent in $v$, \emph{i.e.}, for every valuation $v \in [1,h]$, it must be the case that:
	\[
	\E_v \left[ \R(\mathcal{M}_{\textsc{c}}) \right] = kv,
	\]
	where $k > 0$ is a suitably defined constant that does depend on $v$.
	%
	%
	In the following, for the ease of presentation, we omit the index $\M_{\textsc{c}}$ from $p_{\M_{\textsc{c}}}$ as the mechanism is clear from the context.
	
	From Proposition~\ref{prop:minimum}, there must be a $t_0 \in [0,T)$ such that $p(t) = \xi(t)$ for every $t \in [t_0, T]$, otherwise $\CR_v(\M_{\textsc{c}}) = 0$ for the valuation $v = 1$.
	Thus, it remains to define $p(t)$ for $t \in [0,t_0)$.
	%
	
	For any valuation $v \in [1,h]$, by letting $t^\ast \coloneqq \sup \{ t \in [0,t_0] \mid p(t) > v \, \xi(t) \}$, we can express the expected revenue of the mechanism $\M_{\textsc{c}}$ as a function of $t^\ast$.
	First, notice that, it holds $p(t^\ast) = v \, \xi(t^\ast)$.
	Moreover, by using Lemma~\ref{lem:decr}, it must be the case that $p(t) > v \, \xi(t)$ for every $t < t^\ast$, since:
	\[
	v \, \xi(t) < v \, p(t) \frac{\xi(t^\ast)}{p(t^\ast)} = v \, p(t) \frac{\xi(t^\ast)}{v \, \xi(t^\ast)} = p(t).
	\]
	As a result, the item is never sold before time $t^\ast$, which allows us to write the following:
	\begin{equation*}
	\E_v \left[ \R(\M_{\textsc{c}}) \right] = e^{\l t^{*}} \int_{t^{*}}^{t_{0}} p(t) \, \l e^{-\l t} \dd t+e^{\l t^{*}} \int_{t_{0}}^{T}\xi(t) \, \l e^{-\l t} \dd t.
	\end{equation*}
	Thus, since we want $\E_v \left[ \R(\mathcal{M}_{\textsc{c}}) \right] = kv$, by using $v=\frac{p(t^*)}{\xi(t^*)} $ and letting $\zeta(t) \coloneqq\frac{1}{\xi(t)}$, we get:
	\begin{equation}\label{eq:const_rev_gen} 
	e^{\l t^{*}} \int_{t^{*}}^{t_{0}} p(t) \, \l e^{-\l t} \dd t+e^{\l t^{*}} \int_{t_{0}}^{T}\xi(t) \, \l e^{-\l t} \dd t = k  \zeta(t^*) p\left(t^{*}\right) .
	\end{equation}
	%
	%
	%
	By deriving the left-hand side of Equation~\eqref{eq:const_rev_gen} with respect to $t^*$,  we get:
	\begin{align*}
	\frac{\dd \E_v [\R(\mathcal{M}_{\textsc{c}})]}{\dd t^*}&=e^{\l t^*} \frac{\dd G(t^*)}{\dd t^*}+\l\left[e^{\l t^*} \int_{t^*}^{t_0} p(t) \, \l e^{-\l t} \dd t + \l e^{\l t^*} \int_{t_0}^{T}\xi(t) \, \l e^{-\l t} \dd t\right]\\
	&= -\l p(t^*) +\l k\zeta(t^*)  p(t^*)
	\end{align*}
	where $G\left(t^*\right) \coloneqq \int_{t^*}^{t_0} p(t) \l e^{-\l t} \dd t=\int_{t_0}^{t^*} - p(t) \l e^{-\l t} \dd t=\int_{t_0}^{t^*} g(t) \dd t$, with $g(t) \coloneqq - p(t) \l e^{-\l t}$.
	By applying the fundamental theorem of calculus, we have that $\frac{\dd G(t^*)}{\dd t^*}=g(t^*)=-p(t^*)\l e^{-\l t^*}$.
	Thus, the last equality is readily obtained by noticing that the term in the squared brackets is exactly equal to the expected revenue $\E_v [\R(\mathcal{M}_{\textsc{c}})]$, which, in turn, must be equal to $k\zeta(t^*)p(t^*)$.
	Furthermore, by deriving the right-hand side of Equation~\eqref{eq:const_rev_gen} with respect to $t^\ast$, we get:
	\[ 
	\frac{\dd}{\dd t^*}\left[ k\zeta(t^*)p(t^*) \right] = k \zeta'(t^*)p(t^*)+k \zeta(t^*) p'(t^*).
	\]
	By equating the derivatives of the two sides of Equation~\eqref{eq:const_rev_gen}, we get the following differential equation:
	\begin{equation} \label{eq_dif_gen}
	p^{\prime}\left(t^{*}\right)=\left[\lambda-\frac{\lambda}{k \zeta\left(t^{*}\right)}-\frac{\zeta^{\prime}\left(t^{*}\right)}{\zeta\left(t^{*}\right)}\right] p\left(t^{*}\right)
	\end{equation}
	By solving Equation~\eqref{eq_dif_gen} for $p(t)$, we obtain the function:
	\[
	p(t)=a \, e^{\int \left[\l-\frac{\l}{k \zeta(t)} - \frac{\zeta'(t)}{\zeta(t)}\right] dt}, 
	\]
	and, from the boundaries conditions $p(0)=h$ and $p(t_0)=\xi(t_0)$, we can derive constants $a$ and $k$.
	Notice that the condition $p(0)=h$ can be derived from the fact that, if $p(0) < h$, then the expected revenue $\E_v [ \R(\M_{\textsc{c}}) ]$ is the same for all the valuations $v \in [1,h]$ such that $p(0) \leq v \leq h$, which is not possible since we want that $\E_v [ \R(\M_{\textsc{c}}) ]$ linearly depends on $v$.
	
	We recall that $\E_v [\R(\M_{\textsc{c}})]=kv$ for all $v \in [1,h]$.
	Thus, we can use this in order to find $t_0$ as a function of the problem parameters $\l$, $T$, $h$, and function $\xi$.
	Using $v=1$, we get:
	\begin{equation}\label{eq_k_gen}
	\int_{0}^{T-t_{0}}\xi(t) \lambda e^{-\lambda t} d t= k, 
	\end{equation}
	which gives $t_0$ after replacing $k$ with the expression we got from the boundaries conditions.
	%
\end{proof}

\corCRgen*

\begin{proof}
	Let us recall that, from the proof of Theorem~\ref{thm:bestCR_gen}, $\M_{\textsc{c}}$ is characterized by the same ratio $\CR_v(\M_{\textsc{c}})$ for all $v \in [1,h]$.
	Hence, we can calculate the competitive ratio by taking $v=1$:
	\begin{equation}\label{eq:diff_k_gen}
	\CR(\M_{\textsc{c}}) = \frac{\E_v [\R(\mathcal{M}_{\textsc{c}})]}{\E_v [\R(\M^\star)]} = \frac{k}{k^\star}= \frac{\int_{0}^{T-t_{0}} \xi(t) \l e^{-\l t} \dd t}{\int_{0}^{T} \xi(t) \l e^{-\l t} \dd t} ,
	\end{equation}
	where we used Equation~\eqref{eq:bench_IV} and Equation~\eqref{eq_k_gen} from the proof of Theorem~\ref{thm:bestCR_gen}. 	
\end{proof}

\bestCRlin*

\begin{proof}
	We follow the line of the proof of Theorem~\ref{thm:bestCR_gen}, \emph{i.e.}, we look for a mechanism $\M_{\textsc{c}, \textnormal{lin}}$ such that $\E_v [\R(\M_{\textsc{c}, \textnormal{lin}})] = k v$ for every $v \in [1,h]$, where $k > 0$ is suitably defined constant that does not depend on $v$.
	For the ease of presentation, we omit the subscript $\M_{\textsc{c}, \textnormal{lin}}$ from the pricing strategy $p_{\M_{\textsc{c}, \textnormal{lin}}}$.
	%
	
	Let us fix $v \in [1,h]$.
	By defining $t^\ast$ as in the proof of Theorem~\ref{thm:bestCR_gen}, since in this case the discount is $\xi_\textnormal{lin} (t) = 1 - \frac{t}{T}$ for $t \in [0,T]$, we have $p(t^*)=v \, \left( 1-\frac{t^*}{T} \right) $, which allows us to write the following:
	\begin{equation}\label{eq:const_rev}
	e^{\l t^{*}} \int_{t^{*}}^{t_{0}} p(t) \, \l e^{-\l t} \dd t+e^{\l t^{*}} \int_{t_{0}}^{T}\left(1-\frac{t}{T}\right) \l e^{-\l t} \dd t=k  \frac{T}{T-t^{*}}  p\left(t^{*}\right) ,
	\end{equation}
	where the left-hand side is the expected revenue $\E_v [\R(\M_{\textsc{c}, \textnormal{lin}})]$ and the right-hand side is $kv$.
	By deriving with respect to $t^*$ the left-hand side of the Equation~\eqref{eq:const_rev}, we get:
	\begin{align*}
	\frac{\dd \E_v[\R(\M_{\textsc{c}, \textnormal{lin}})]}{\dd t^*}&=e^{\l t^*} \frac{\dd G(t^*)}{\dd t^*}+\l\left[e^{\l t^*} \int_{t^*}^{t_0} p(t) \l e^{-\l t} \dd t + \l e^{\l t^*} \int_{t_0}^{T}\left(1-\frac{t}{T}\right)\l e^{-\l t} \dd t\right]\\
	&= -\l p(t^*) +\l k\frac{T}{T-t^*}  p(t^*), 
	\end{align*}
	where $G\left(t^*\right)$ is defined as in the proof of Theorem~\ref{thm:bestCR_gen}.
	%
	%
	%
	Now, we derive the right-hand side of Equation~\eqref{eq:const_rev} with respect to $t^\ast$:
	\[ 
	\frac{\dd}{\dd t^*}\left(\frac{k T}{T-t^*} p\left(t^*\right)\right)=\frac{k T}{T-t^*} p'\left(t^*\right)+\frac{k T}{\left(T-t^*\right)^{2}} p\left(t^*\right) .
	\]
	By equating the derivatives of the two sides of Equation~\eqref{eq:const_rev}, we get the following differential equation:
	\begin{equation} \label{eq:eq_dif}
	p^{\prime}\left(t^{*}\right)=\left[\lambda-\frac{\lambda\left(T-t^{*}\right)}{k T}-\frac{1}{T-t^{*}}\right] p\left(t^{*}\right).
	\end{equation}
	After solving Equation~\eqref{eq:eq_dif} for $p(t)$, we obtain the general solution:
	\[
	p(t)=a \, e^{\int\left[\lambda-\frac{\lambda\left(T-t\right)}{k T}-\frac{1}{T-t}\right] \dd t}=a \, e^{\lambda\left(1-\frac{1}{k}\right) t+\frac{\lambda}{2 k T} t^{2}+\ln (T-t)}, 
	\]
	where, using boundary conditions $p(0)=h$ and $p(t_{0})=1-\frac{t_{0}}{T}$, we can derive the expressions $a \coloneqq \frac{h}{T}$ and $k \coloneqq \lambda t_{0} \frac{2 T-t_{0}}{2 T\left(\lambda t_{0}+\ln (h)\right)}$.

	Since $\E_v[\R(\M_{\textsc{c}, \textnormal{lin}})]=kv$ for all $v \in [1,h]$, we can use the equation in order to define $t_0$ with respect to the problem parameters $\l$, $T$ and $h$.
	For $v=1$:
	\begin{equation}\label{eq:eq_k}
	\int_{0}^{T-t_{0}}\left(1-\frac{t}{T}\right) \lambda e^{-\lambda t} \dd t=1-\frac{1}{\lambda T}\left(1+\lambda t_{0}-e^{-\lambda\left(T-t_{0}\right)}\right)=k,
	\end{equation}
	and, by replacing $k$ with the expression we got from the boundary conditions, we obtain:
	\begin{equation}\label{eq:eq_t0} 
	1-\frac{1}{\lambda\, T}\left(1+\lambda\, t_{0}-e^{-\lambda\,\left(T-t_{0}\right)}\right)=\lambda \,t_{0} \frac{2 \,T-t_{0}}{2 \,T\left(\lambda \,t_{0}+\ln (h)\right)} 
	\end{equation}
	Finally, we can define $t_0$ as the {unique positive real root} of Equation~\eqref{eq:eq_t0}.
	In particular, it is easy to show that Equation~\eqref{eq:eq_t0} always admits a positive real root in the range $(0,T)$. Indeed,
	we call $q(x)=\lambda x \frac{2 T-x}{2 T\left(\lambda x+\ln (h)\right)} - 1 +\frac{1}{\lambda T}\left(1+\lambda x-e^{-\lambda\left(T-x\right)}\right)$. We observe that $q(x)$ is continuous on the interval $[0,T]$ and that $q(T)>0$ and $q(0)<0$, therefore, for Bolzano's theorem, there exists al least a $t_0 \in (0,T)$ such that $q(t_0)=0$. The uniqueness can be derived as consequence of Lemma \ref{lem:CR_const}.
\end{proof}

	%
	%
	%

\constantratiodue*

\begin{proof}
	We can calculate it by taking $v=1$:
	\begin{align*}
	\CR(\M_{\textsc{c}, \textnormal{lin}}) = \frac{\E_v [\R(\M_{\textsc{c}, \textnormal{lin}})]}{\E_v [\R(\M^\star)]}&= \frac{k}{k^\star}= \frac{\int_{0}^{T-t_{0}}\left(1-\frac{t}{T}\right) \lambda e^{-\lambda t} \dd t}{\int_{0}^{T}	\left(1-\frac{t}{T}\right) \lambda e^{-\lambda t} \dd t} \\
	&=\frac{1-\frac{1}{\lambda T}\left(1+\lambda t_{0}-e^{-\lambda\left(T-t_{0}\right)}\right)}{1-\frac{1}{\lambda T}\left(1-e^{-\lambda T}\right)} ,
	\end{align*}
	where we used Equation~\eqref{eq:bench_IV} and Equation~\eqref{eq:eq_k} from the proof of Theorem~\ref{thm:bestCR_lin}. 
\end{proof}

\section{Examples of Mechanisms \textnormal{$\M_{\textsc{c}, \textnormal{lin}}$ and $\M_{\textsc{pc}, \textnormal{lin}}$} and Competitive Ratio Analysis}\label{app:M1lin}

In order to ease the reader in the understanding of our mechanisms, we provide their graphical representation for the case of a linear discount function $\xi_\textnormal{lin}(t) \coloneqq 1-\frac{t}{T} $.
In particular, we focus on mechanisms $\M_{\textsc{c},\textnormal{lin}}$ and $\M_{\textsc{pc},\textnormal{lin}}$, where the latter is the linear-discount version of the general-discount mechanism $\M_\textsc{pc}$.
The price function $p_{\M_{\textsc{pc},\textnormal{lin}}}$ of $\M_{\textsc{pc},\textnormal{lin}}$ can be easily obtained from that of $\M_\textsc{pc}$ by using the specific definition of the discount function.
We report it below for completeness.
\begin{equation*}
p_{\M_{\textsc{pc},\textnormal{lin}}}(I_i) \coloneqq \left\{\begin{array}{ll} \frac{h}{\d^i} \left( 1-\frac{i\t}{T} \right) & \textnormal{if } i = 1,\ldots,\lfloor \log_{\d}h\rfloor \\ 
 1-\frac{i\t}{T}  & \textnormal{if }  i = \lceil \log_{\d}h\rceil, \ldots , \left\lceil \frac{T}{\t} \right\rceil -1  \\
 1-\frac{(i-1)\t}{T}  & \textnormal{if } i = \left\lceil \frac{T}{\t} \right\rceil \end{array}\right. .
\end{equation*}

We tune the parameters $h$, $\l$, and $T$ so as to simulate real-world scenarios representing the long-term rental of a single room.
In particular, we fix the parameter values by analyzing data from a real-world co-living company operating on the web, {counting over 7000 rooms}.~\footnote{We cannot disclose the name of the company for privacy reasons.}
In this scenario, the goal is to rent a single room to students for a fixed period of one year.
We set $T=12$, assuming that each time interval of length $1$ corresponds to a period of one month, and we fix the starting time $t=0$ as the time in which the contract of the previous tenant ends.
Therefore, the room value is discounted over time as an effect of the ever shorter period of stay of the future tenant.
We also set $h=2.8$, which means that the highest valuation for the room is around three times the lowest one.

Figure~\ref{fig:M1lin} shows how the shape of mechanism $\M_{\textsc{c},\textnormal{lin}}$ changes by varying the arrival rate $\lambda$, which is the expected number of agents arriving in a time interval of one month.
We observe that the price function decreases as a linearly discounted exponential function in the time interval $[0,t_0]$, and, then, as a linear function in $[t_0,T]$. 
Notice that, by comparing Figure~\ref{fig:M1linA} and Figure~\ref{fig:M1linB}, it is easy to see that the time instant $t_0$ gets closer to zero as the arrival rate $\l$ increases. 
This can be explained by recalling that the mechanism has to deal with the trade-off between setting high prices so as to achieve high revenues and posting lower prices in order to increase the probability of selling the item.
In the first period of time, the seller posts high prices hoping for the arrival of an agent having an high valuation.
This phase cannot be too long, otherwise the item risks to remain unsold, and, on the other hand, it cannot even be too short, otherwise the probability of encountering such an high-valuation agent becomes too small.
Therefore, when the arrival rate decreases, the high-price phase must be enlarged in order to still have some chance of concluding the purchase for an high price (Figure~\ref{fig:M1linB}), while, if $\l$ increases, it suffices to post high prices for a shorter time period (Figure~\ref{fig:M1linA}).

Figure~\ref{fig:M2lin} represents the behavior of mechanism $\M_{\textsc{pc},\textnormal{lin}}$ when we impose different constraints on the minimum time in which the price must be constant. 
In particular, Figure~\ref{fig:M2linA_app} and Figure~\ref{fig:M2linB_app} show the shape of $\M_{\textsc{pc},\textnormal{lin}}$ when the posted price does not change for time intervals of length $\t$ equal to one month (\emph{i.e.}, $\t=1$) and one week (\emph{i.e.}, $\t=0.25$), respectively.

\begin{figure}[h]
\begin{subfigure}{0.5\textwidth}
\begin{tikzpicture}[domain=0:2, scale=0.8]
\draw[->] (-0.2,0) -- (6.5,0) node at (6.8,-0.3) {$t$};
\draw[->] (0,-0.2) -- (0,4.5)node[above=0.1cm] {$p_{\M_{\textsc{c},\textnormal{lin}}}(t)$};
\draw[gray, thin] (0,4) -- (6,0);
\draw[gray, thin] (0,1) -- (6,0);
\draw[gray, thin] (6,0) -- (6,4);
\node at (-0.3,4) {$h$};
\node at (-0.3,1) {$1$};
\node at (6,-0.3) {$T$};
\node at (0.688,-0.3) {$t_{0}$};
\draw[color=black, thick, domain=0:0.688] plot (\x,{4*(1-\x/6)*exp(-3.85437997725*\x+2.65453166477*\x^2)}); 
\draw[black, thick] (0.688,0.88533333) -- (6,0);
\end{tikzpicture}
\centering
\caption{$\M_{\textsc{c},\textnormal{lin}}$: $h=2.8, \l=10, T=12$}
\label{fig:M1linA}
\centering
\end{subfigure} \qquad 
\begin{subfigure}{0.5\textwidth}
\begin{tikzpicture}[domain=0:2, scale=0.8]
\draw[->] (-0.2,0) -- (6.5,0) node at (6.8,-0.3) {$t$};
\draw[->] (0,-0.2) -- (0,4.5)node[above=0.1cm] {$p_{\M_{\textsc{c},\textnormal{lin}}}(t)$};
\draw[gray, thin] (0,4) -- (6,0);
\draw[gray, thin] (0,1) -- (6,0);
\draw[gray, thin] (6,0) -- (6,4);
\node at (-0.3,4) {$h$};
\node at (-0.3,1) {$1$};
\node at (6,-0.3) {$T$};
\node at (1.286,-0.3) {$t_{0}$};
\draw[color=black, thick, domain=0:1.286] plot (\x,{4*(1-\x/6)*exp(-1.93791556728*\x+0.667326297274*\x^2)}); 
\draw[black, thick] (1.286,0.785666) -- (6,0);
\end{tikzpicture}
\centering
\caption{$\M_{\textsc{c},\textnormal{lin}}$: $h=2.8, \l=2, T=12$}
\label{fig:M1linB}
\centering
\end{subfigure}
\caption{Mechanism $\M_{\textsc{c},\textnormal{lin}}$ with different rate parameters $\l$.}
\label{fig:M1lin}
\end{figure}

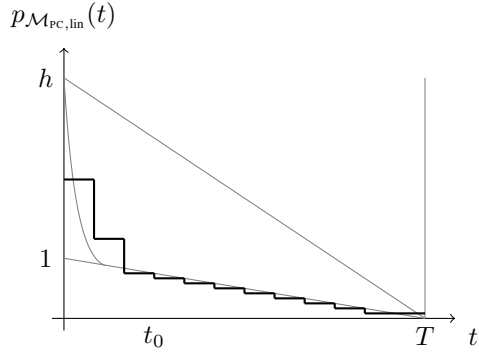
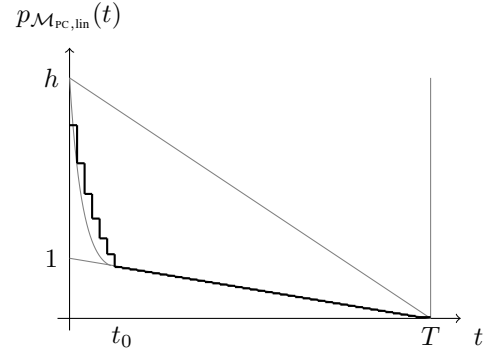
\begin{figure}[h]
\begin{subfigure}{0.5\textwidth}
\begin{tikzpicture}[domain=0:2, scale=0.8]
\draw[->] (-0.2,0) -- (6.5,0) node at (6.8,-0.3) {$t$};
\draw[->] (0,-0.2) -- (0,4.5)node[above=0.1cm] {$p_{\M_{\textsc{pc},\textnormal{lin}}}(t)$};
\draw[gray, thin] (0,4) -- (6,0);
\draw[gray, thin] (0,1) -- (6,0);
\draw[gray, thin] (6,0) -- (6,4);
\node at (-0.3,4) {$h$};
\node at (-0.3,1) {$1$};
\node at (6,-0.3) {$T$};
\node at (1.5,-0.3) {$t_{0}$};
\draw[color=gray, thin, domain=0:0.688] plot (\x,{4*(1-\x/6)*exp(-3.85437997725*\x+2.65453166477*\x^2)}); 
\draw[black, thick] (0,2.31) -- (0.5,2.31);
\draw[black, thick] (0.5,2.31) -- (0.5,1.323);
\draw[black, thick] (0.5,1.323) -- (1,1.323);
\draw[black, thick] (1,1.323) -- (1,0.75);
\draw[black, thick] (1,0.75) -- (1.5,0.75);
\draw[black, thick] (1.5,0.75) -- (1.5,0.6666);
\draw[black, thick] (1.5,0.6666) -- (2,0.6666);
\draw[black, thick] (2,0.6666) -- (2,0.5833);
\draw[black, thick] (2,0.5833) -- (2.5,0.5833);
\draw[black, thick] (2.5,0.5833) -- (2.5,0.5);
\draw[black, thick] (2.5,0.5) -- (3,0.5);
\draw[black, thick] (3,0.5) -- (3,0.417);
\draw[black, thick] (3,0.417) -- (3.5,0.417);
\draw[black, thick] (3.5,0.417) -- (3.5,0.3333);
\draw[black, thick] (3.5,0.3333) -- (4,0.3333);
\draw[black, thick] (4,0.3333) -- (4,0.25);
\draw[black, thick] (4,0.25) -- (4.5,0.25);
\draw[black, thick] (4.5,0.25) -- (4.5,0.167);
\draw[black, thick] (4.5,0.167) -- (5,0.167);
\draw[black, thick] (5,0.167) -- (5,0.0833);
\draw[black, thick] (5,0.0833) -- (6,0.0833);
\end{tikzpicture}
\centering
\caption{$\M_{\textsc{pc},\textnormal{lin}}$: $\t=1, h=2.8, \l=10, T=12$}
\label{fig:M2linA_app}
\centering
\end{subfigure} \qquad 
\begin{subfigure}{0.5\textwidth}
\begin{tikzpicture}[domain=0:2, scale=0.8]
\draw[->] (-0.2,0) -- (6.5,0) node at (6.8,-0.3) {$t$};
\draw[->] (0,-0.2) -- (0,4.5)node[above=0.1cm] {$p_{\M_{\textsc{pc},\textnormal{lin}}}(t)$};
\draw[gray, thin] (0,4) -- (6,0);
\draw[gray, thin] (0,1) -- (6,0);
\draw[gray, thin] (6,0) -- (6,4);
\node at (-0.3,4) {$h$};
\node at (-0.3,1) {$1$};
\node at (6,-0.3) {$T$};
\node at (0.875,-0.3) {$t_{0}$};
\draw[color=gray, thin, domain=0:0.688] plot (\x,{4*(1-\x/6)*exp(-3.85437997725*\x+2.65453166477*\x^2)});
\draw[black, thick] (0,3.213) -- (0.125,3.213);
\draw[black, thick] (0.125,3.213) -- (0.125,2.58);
\draw[black, thick] (0.125,2.58) -- (0.25,2.58);
\draw[black, thick] (0.25,2.58) -- (0.25,2.07);
\draw[black, thick] (0.25,2.07) -- (0.375,2.07);
\draw[black, thick] (0.375,2.07) -- (0.375,1.66);
\draw[black, thick] (0.375,1.66) -- (0.5,1.66);
\draw[black, thick] (0.5,1.66) -- (0.5,1.331);
\draw[black, thick] (0.5,1.331) -- (0.625,1.331);
\draw[black, thick] (0.625,1.331) -- (0.625,1.067);
\draw[black, thick] (0.625,1.067) -- (0.75,1.067);
\draw[black, thick] (0.75,1.067) -- (0.75,0.854);
\draw[black, thick] (0.75,0.854) -- (0.875,0.854);
\draw[black, thick] (0.875,0.854) -- (0.875,0.8333);
\draw[black, thick] (0.875,0.8333) -- (1,0.8333);
\draw[black, thick] (1,0.8333) -- (1,0.8125);
\draw[black, thick] (1,0.8125) -- (1.125,0.8125);
\draw[black, thick] (1.125,0.8125) -- (1.125,0.792);
\draw[black, thick] (1.125,0.792) -- (1.25,0.792);
\draw[black, thick] (1.25,0.792) -- (1.25,0.771);
\draw[black, thick] (1.25,0.771) -- (1.375,0.771);
\draw[black, thick] (1.375,0.771) -- (1.375,0.75);
\draw[black, thick] (1.375,0.75) -- (1.5,0.75);
\draw[black, thick] (1.5,0.75) -- (1.5,0.729);
\draw[black, thick] (1.5,0.729) -- (1.625,0.729);
\draw[black, thick] (1.625,0.729) -- (1.625,0.708);
\draw[black, thick] (1.625,0.708) -- (1.75,0.708);
\draw[black, thick] (1.75,0.708) -- (1.75,0.6875);
\draw[black, thick] (1.75,0.6875) -- (1.875,0.6875);
\draw[black, thick] (1.875,0.6875) -- (1.875,0.6667);
\draw[black, thick] (1.875,0.6667) -- (2,0.6667);
\draw[black, thick] (2,0.6667) -- (2,0.6458);
\draw[black, thick] (2,0.6458) -- (2.125,0.6458);
\draw[black, thick] (2.125,0.6458) -- (2.125,0.625);
\draw[black, thick] (2.125,0.625) -- (2.25,0.625);
\draw[black, thick] (2.25,0.625) -- (2.25,0.6042);
\draw[black, thick] (2.25,0.6042) -- (2.375,0.6042);
\draw[black, thick] (2.375,0.6042) -- (2.375,0.5833);
\draw[black, thick] (2.375,0.5833) -- (2.5,0.5833);
\draw[black, thick] (2.5,0.5833) -- (2.5,0.5625);
\draw[black, thick] (2.5,0.5625) -- (2.625,0.5625);
\draw[black, thick] (2.625,0.5625) -- (2.625,0.5417);
\draw[black, thick] (2.625,0.5417) -- (2.75,0.5417);
\draw[black, thick] (2.75,0.5417) -- (2.75,0.5208);
\draw[black, thick] (2.75,0.5208) -- (2.875,0.5208);
\draw[black, thick] (2.875,0.5208) -- (2.875,0.5);
\draw[black, thick] (2.875,0.5) -- (3,0.5);
\draw[black, thick] (3,0.5) -- (3,0.4792);
\draw[black, thick] (3,0.4792) -- (3.125,0.4792);
\draw[black, thick] (3.125,0.4792) -- (3.125,0.4583);
\draw[black, thick] (3.125,0.4583) -- (3.25,0.4583);
\draw[black, thick] (3.25,0.4583) -- (3.25,0.4375);
\draw[black, thick] (3.25,0.4375) -- (3.375,0.4375);
\draw[black, thick] (3.375,0.4375) -- (3.375,0.4167);
\draw[black, thick] (3.375,0.4167) -- (3.5,0.4167);
\draw[black, thick] (3.5,0.4167) -- (3.5,0.3958);
\draw[black, thick] (3.5,0.3958) -- (3.625,0.3958);
\draw[black, thick] (3.625,0.3958) -- (3.625,0.375);
\draw[black, thick] (3.625,0.375) -- (3.75,0.375);
\draw[black, thick] (3.75,0.375) -- (3.75,0.3542);
\draw[black, thick] (3.75,0.3542) -- (3.875,0.3542);
\draw[black, thick] (3.875,0.3542) -- (3.875,0.3333);
\draw[black, thick] (3.875,0.3333) -- (4,0.3333);
\draw[black, thick] (4,0.3333) -- (4,0.3125);
\draw[black, thick] (4,0.3125) -- (4.125,0.3125);
\draw[black, thick] (4.125,0.3125) -- (4.125,0.2917);
\draw[black, thick] (4.125,0.2917) -- (4.25,0.2917);
\draw[black, thick] (4.25,0.2917) -- (4.25,0.2708);
\draw[black, thick] (4.25,0.2708) -- (4.375,0.2708);
\draw[black, thick] (4.375,0.2708) -- (4.375,0.25);
\draw[black, thick] (4.375,0.25) -- (4.5,0.25);
\draw[black, thick] (4.5,0.25) -- (4.5,0.2292);
\draw[black, thick] (4.5,0.2292) -- (4.625,0.2292);
\draw[black, thick] (4.625,0.2292) -- (4.625,0.2083);
\draw[black, thick] (4.625,0.2083) -- (4.75,0.2083);
\draw[black, thick] (4.75,0.2083) -- (4.75,0.1875);
\draw[black, thick] (4.75,0.1875) -- (4.875,0.1875);
\draw[black, thick] (4.875,0.1875) -- (4.875,0.1667);
\draw[black, thick] (4.875,0.1667) -- (5,0.1667);
\draw[black, thick] (5,0.1667) -- (5,0.1458);
\draw[black, thick] (5,0.1458) -- (5.125,0.1458);
\draw[black, thick] (5.125,0.1458) -- (5.125,0.125);
\draw[black, thick] (5.125,0.125) -- (5.25,0.125);
\draw[black, thick] (5.25,0.125) -- (5.25,0.1042);
\draw[black, thick] (5.25,0.1042) -- (5.375,0.1042);
\draw[black, thick] (5.375,0.1042) -- (5.375,0.0833);
\draw[black, thick] (5.375,0.0833) -- (5.5,0.0833);
\draw[black, thick] (5.5,0.0833) -- (5.5,0.0625);
\draw[black, thick] (5.5,0.0625) -- (5.625,0.0625);
\draw[black, thick] (5.625,0.0625) -- (5.625,0.0417);
\draw[black, thick] (5.625,0.0417) -- (5.75,0.0417);
\draw[black, thick] (5.75,0.0417) -- (5.75,0.0208);
\draw[black, thick] (5.75,0.0208) -- (6,0.0208);
\end{tikzpicture}
\centering
\caption{$\M_{\textsc{pc},\textnormal{lin}}$: $\t=0.25, h=2.8, \l=10, T=12$}
\label{fig:M2linB_app}
\centering
\end{subfigure}
\caption{Mechanism $\M_{\textsc{pc},\textnormal{lin}}$ with different constraints on the minimum time in which the price must be constant.}
\label{fig:M2lin}
\end{figure}

\paragraph{Analysis of the Competitive Ratio $\CR(\M_{\textsc{c}, \textnormal{lin}})$}
From Corollary~\ref{cor:constantratiodue}, we know that, in the IV setting with linear discount function $\xi_\textnormal{lin}$, mechanism $\M_{\textsc{c}, \textnormal{lin}}$ achieves a competitive ratio:
\[
\CR(\M_{\textsc{c}, \textnormal{lin}}) = \frac{1-\frac{1}{\lambda T}\left(1+\lambda t_{0}-e^{-\lambda\left(T-t_{0}\right)}\right)}{1-\frac{1}{\lambda T}\left(1-e^{-\lambda T}\right)},
\]
where $t_0 \in [0,T)$ is defined in Theorem~\ref{thm:bestCR_lin} as the unique positive real root of the equation:
\begin{equation}\label{eq:equation_t0}
	\lambda \, t_{0} \, \frac{2 \,T-t_{0}}{2 \,T\left(\lambda \, t_{0}+\ln h\right)} = 1-\frac{1}{\lambda T}\left(1+\lambda \, t_{0}-e^{-\lambda\left(T-t_{0}\right)}\right).
\end{equation}
We analyze the behavior of the competitive ratio $\CR(\M_{\textsc{c}, \textnormal{lin}})$ when the problem parameters $h,\l$ and $T$ vary.
We summarize our results in the following table:
%
	\begin{center}
		{\renewcommand{\arraystretch}{1.3}
			\begin{tabular}{r|c|c|c}
				& $T \rightarrow \infty$ & $\l \rightarrow \infty$ & $h \rightarrow \infty$ \\
				\hline
				$t_0$ & $\Theta ( \sqrt{T} ) $ & $\Theta( \sqrt{T/\l}) $ & $\Theta( T)$ \\
				$\CR(\M_{\textsc{c}, \textnormal{lin}})$ & $\Theta\left(1-\frac{1}{\sqrt{T}}\right)$ & $\Theta\left(1-\frac{1}{\sqrt{\lambda}}\right)$ & $\Theta\left(\frac{1}{\log^2(h)}\right)$ \\ 
				$\lim\CR(\M_{\textsc{c}, \textnormal{lin}})$ & 1 & 1 & 0 \\ 
			\end{tabular}
		}
	\end{center}
%
By using Equation~\eqref{eq:equation_t0}, we can conclude that $t_0$ is asymptotically equivalent to $\sqrt{T}$ when $T \to \infty$. Then, the limit of the competitive ratio is:
\[\lim_{T \rightarrow \infty}\CR(\M_{\textsc{c}, \textnormal{lin}}) = \lim_{T \to \infty} 1 - \frac{t_0}{T} = \lim_{T \to \infty} 1 - \frac{1}{\sqrt{T}} = 1.\]
Similarly, $t_0$ is asymptotically equivalent to $\sqrt{\frac{T}{\l}}$ when $\l \to \infty$. Thus:
\[\lim_{\l \rightarrow \infty}\CR(\M_{\textsc{c}, \textnormal{lin}}) = \lim_{\l \to \infty} 1 - \frac{t_0}{T} = \lim_{\l \to \infty} 1 - \frac{1}{\sqrt{T\l}} = 1.\]
Moreover, it is easy to see that $t_0 \rightarrow T$ when $h \to \infty$. Indeed, in this case we have that $t_0$ is the unique positive real root of the equation:
\[1-\frac{1}{\lambda T}\left(1+\lambda \, t_{0}-e^{-\lambda\left(T-t_{0}\right)}\right) = 0. \] 
Since $t_0 \rightarrow T$, the limit of the competitive ratio is:
\[ \lim_{h \rightarrow \infty} \CR(\M_{\textsc{c}, \textnormal{lin}}) = \frac{1-\frac{1}{\lambda T}\left(1+\lambda T-e^{-\lambda\left(T-T\right)}\right)}{1-\frac{1}{\lambda T}\left(1-e^{-\lambda T}\right)} = 0 .\]
Therefore, having a valuation function with finite support is fundamental in order to achieve a certain fraction of the expected revenue of an optimal mechanism. There are no guarantees for valuation functions with unbounded support.
In the following we analyze how $\CR(\M_{\textsc{c},\textnormal{lin}})$ goes to $0$.
%
We first observe that $\CR(\M_{\textsc{c},\textnormal{lin}})$ is proportional to $\frac{\lambda (T-t_0)^2}{2T} + o(T-t_0)^2$ when $h \to \infty$. Indeed, the numerator of $\CR(\M_{\textsc{c},\textnormal{lin}})$ depends on $h$ through $t_0$: 
%
%
\[ 1-\frac{1}{\lambda T}\left(1+\lambda t_{0}-e^{-\lambda\left(T-t_{0}\right)}\right) = \frac{z}{T} - \frac{1}{\lambda T}(1-e^{-\lambda z}) = \frac{\lambda (T-t_0)^2}{2T} + o(T-t_0)^2, \]
where $z\coloneqq T-t_0 \to 0$ as $h \to \infty$, and the Taylor series $e^{-\lambda z} = 1 - \lambda z + \frac{\lambda^2 z^2}{2} + o(z^2)$ is used to expand function $e^{-\lambda z}$ at $t_0=T$.
Notice that the competitive ratio is decreasing in $t_0$.
We compute $\bar{t}_0$, which is an upper bound for $t_0$, by solving Equation~\eqref{eq:equation_t0} with the exponential term $e^{-\lambda(T-t_0)}$ substituted by parameter a $\varepsilon$. We impose $\varepsilon$ and $e^{-\lambda(T-t_0)}$ to have the same domain, hence $\varepsilon \in (0,1)$. Thus, we obtain the following equation:
\begin{equation}\label{eq:equation_t0_bound}
\lambda \, x \, \frac{2 \,T-x}{2 \,T\left(\lambda \, x+\ln h\right)} = 1-\frac{1}{\lambda T}\left(1+\lambda \, x- \varepsilon \right).
\end{equation}
The solution
\[ x = \frac{\sqrt{2 \lambda T \ln(h) + \ln^2 h + \varepsilon^2 - 2\varepsilon + 1} -\ln h + \varepsilon -1}{\lambda}\]
is increasing in $\varepsilon$, being its first partial derivative positive for all $\varepsilon \in (0,1)$:
\[ \frac{\partial x}{\partial \varepsilon} = \frac{\varepsilon -1}{\lambda\sqrt{2\lambda T \ln(h) + \ln^2 h + (\varepsilon - 1)^2}} + \frac{1}{\lambda}.\] 
Hence, by setting $\varepsilon=1$, we get the following upper bound on $t_0$:
\[\bar{t}_0 \coloneqq \frac{-\ln h + \sqrt{2 \lambda T \ln(h) + \ln^2 h} }{\lambda} = \frac{2 T \ln(h)}{\sqrt{2 \lambda T \ln(h) + \ln^2 h} + \ln h}.\]
Notice that $T-\bar{t}_0$ is a lower bound for $T-t_0$. By asymptotic analysis, as $h \to \infty$ we have:
\[ T-\bar{t}_0 = \frac{ 2 \lambda T^2 \ln h} {2 \ln^2 h + 2\lambda T \ln h + 2 \ln h \sqrt{2\lambda T \ln h + \ln^2 h}} \sim C_1 \frac{1}{\ln(h)} \]
where $C_1$ is constant with respect to $h$ and depends on parameters $\lambda$ and $T$.
Hence, as $h \to \infty$:
\[\CR(\M_{\textsc{c},\textnormal{lin}}) \sim \frac{\lambda (T-t_0)^2}{2T-\frac{2}{\lambda}(1-e^{-\lambda T})} \ge  \frac{\lambda (T-\bar{t}_0)^2}{2T-\frac{2}{\lambda}(1-e^{-\lambda T})} \sim C_2 \frac{1}{\ln^2(h)}, \]
where $C_2$ is a constant with respect to $h$ and depends on parameters $\lambda$ and $T$. We conclude that, as $h \to \infty$, the competitive ratio $\CR(\M_{\textsc{c},\textnormal{lin}})$ converges to $0$ slower than or the same as the function $\frac{1}{\log^2(h)}$.

\section{Omitted Proofs for the RV Setting}\label{app:random_valuation}

\interv*

\begin{proof}
	Given how the function $p_{\M_{\textsc{c}}}$ is defined, we can always define a time interval $I_{{s},\tau}$ as desired by selecting its starting time ${s} \in [0,T - \t]$ in such a way that $p_{\M_{\textsc{c}}}({s}) = \E{[X_{\lambda T}]\xi(s+\t)(1-\epsilon)}$.
	From Definition~\ref{def:kappa} we know that $\kappa_\t({s}) \le \kappa_\t$.
	Hence,
	\[
	p_{\M_{\textsc{c}}}({s}+\t)=\frac{p_{\M_{\textsc{c}}}({s})}{\kappa_\t({s})}=\frac{\E{[X_{\lambda T}]\xi(s+\t)(1-\epsilon)}}{\kappa_\t({s})}\ge\frac{\E{[X_{\lambda T}]\xi(s+\t)(1-\epsilon)}}{\kappa_\t}.
	\]
	Since $p_{\M_{\textsc{c}}}$ is non-increasing by Lemma~\ref{lem:decr}, for every $t \in I_{s,\t}$ we have:
	\[
	p_{\M_{\textsc{c}}}(t) \in \left[ p_{\M_{\textsc{c}}}({s}),p_{\M_{\textsc{c}}}({s}+\t) \right] \subseteq \left[\frac{\E{[X_{\lambda T}]\xi(s+\t)(1-\epsilon)}}{\kappa_\t},\E{[X_{\lambda T}]\xi(s+\t)(1-\epsilon)}\right].
	\] 
	Notice that the inequality involving $p_{\M_{\textsc{c}}}({s}+\t)$ holds with equality if ${s} \in \arg\max_{s \in[0,T - \t]}\kappa_\t(s)$.
	If this is the case, then there exists a unique interval verifying the statement.
\end{proof}

\FIHR*

\begin{proof}
	Let us recall that the cumulative distribution function of $X_{\l \t}$ is such that:
	%
	\[
	F_{X_{\l \t}}(x) = e^{-\l\t(1-F(x))}.
	\]
	We compute the hazard rate of $F_{X_{\l \t}}$ and show it is non-decreasing, as follows:
	\begin{align*}
	H_{X_{\l \t}}(x) & = \frac{f_{X_{\l \t}}(x)}{1-F_{X_{\l \t}}(x)} = \frac{\frac{\dd}{\dd x}F_{X_{\l \t}}(x)}{1-F_{X_{\l \t}}(x)} = \frac{\l\t f(x) e^{-\l\t(1-F(x))}}{1-e^{-\l\t(1-F(x))}} \\
	& =	\frac{\l\t f(x)}{e^{\l\t(1-F(x))}-1} 
	= \l\t \frac{f(x)}{1-F(x)} \frac{1-F(x)}{e^{\l\t(1-F(x))}-1} \\
	& = \l\t H(x) \frac{1-F(x)}{e^{\l\t(1-F(x))}-1}.
	\end{align*}
	Since $F$ is MHR, the hazard rate $H(x)$ is non-decreasing.
	Notice that $F(x)$ is non-decreasing, and, thus, $1-F(x)$ in non-increasing.
	As a result, proving that $\frac{1-F(x)}{e^{\l\t(1-F(x))}-1}$ is non-decreasing in $x$ is equivalent to show that $g(y) \coloneqq\frac{y}{e^{\l\t y}-1}$ is non-increasing in $y$.
	We study the first derivative of $g(y)$:
	\[
	\frac{\dd}{\dd y}g(y)= \frac{e^{\l\t y}(1-\l\t y)-1}{(e^{\l\t y}-1)^2} \leq 0 \quad \textnormal{for all } y \in [0,1].
	\] 
	This implies that $g(y)$ is non-increasing in $y$; hence, $\frac{1-F(x)}{e^{\l\t(1-F(x))}-1}$ is non-decreasing in $x$.
	We conclude that $H_{X_{\l \t}}(x)$ is monotone non-decreasing. 
\end{proof}

In order to prove Lemma~\ref{lem:ex_ln}, we first state the following variant of the Chebyshev inequality~\citet{mitrinovic2013classical}, where the adopted notation is specific for the proposition.

\begin{proposition}[\cite{mitrinovic2013classical}]\label{cheby}
	Suppose function $h(x)$ is positive and non-decreasing on $[a,b]$, function $g(x)$ is non-decreasing on $[a,b]$, and function $f(x)$ is continuous on $[a,b]$, then the following inequality holds:
	\[\frac{\int_{a}^{b} h(x)f(x)g(x) \dd x}{\int_{a}^{b}h(x)f(x) \dd x}\ge \frac{\int_{a}^{b}f(x)g(x) \dd x}{\int_{a}^{b}f(x) \dd x}.\]
\end{proposition}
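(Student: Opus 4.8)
The plan is to prove this by the classical symmetrization (correlation) argument underlying all Chebyshev-type integral inequalities. Throughout I treat $f$ as a nonnegative weight, which is the only relevant case here since $f$ always plays the role of a probability density in our applications; in particular $\int_a^b f(x)\,\dd x > 0$, and $\int_a^b h(x) f(x)\,\dd x > 0$ because $h$ is strictly positive and $f$ is a nonnegative density that is not identically zero. These two positivity facts are what will license the final division.

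First I would introduce the symmetrized double integral
\[
J \coloneqq \int_a^b \int_a^b \big(g(x)-g(y)\big)\big(h(x)-h(y)\big)\, f(x)\,f(y)\,\dd x\,\dd y,
\]
and observe that its integrand is nonnegative pointwise. Indeed, since both $g$ and $h$ are non-decreasing, the factors $g(x)-g(y)$ and $h(x)-h(y)$ always share the same sign: both are $\ge 0$ when $x \ge y$ and both are $\le 0$ when $x \le y$, so their product is $\ge 0$. Multiplying by $f(x) f(y) \ge 0$ preserves nonnegativity, whence $J \ge 0$.

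Next I would expand the product inside $J$ into its four terms and apply Fubini together with the symmetry of the square $[a,b]^2$ under the exchange $x \leftrightarrow y$. The terms collapse in pairs, yielding the identity
\[
J = 2\left[\Big(\int_a^b g\,h\,f\,\dd x\Big)\Big(\int_a^b f\,\dd x\Big) - \Big(\int_a^b g\,f\,\dd x\Big)\Big(\int_a^b h\,f\,\dd x\Big)\right].
\]
Combining this with $J \ge 0$ gives $\big(\int_a^b g h f\big)\big(\int_a^b f\big) \ge \big(\int_a^b g f\big)\big(\int_a^b h f\big)$, and dividing both sides by the strictly positive quantity $\big(\int_a^b f\big)\big(\int_a^b h f\big)$ reproduces exactly the claimed inequality.

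The argument is essentially routine, so I do not expect a genuine obstacle; the statement is in any case a standard fact that could simply be cited. The only points demanding care are the implicit hypothesis that $f$ does not change sign, without which neither the pointwise nonnegativity of the integrand nor the positivity of the denominators would hold, and the bookkeeping in the pairing of the four terms when expanding $J$. Both are handled by the observations above, the former being automatic because $f$ is a density in every use of the proposition.
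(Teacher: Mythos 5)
Your proof is correct. There is nothing in the paper to compare it against line by line: the paper does not prove this proposition at all, but quotes it as a known result from \citet{mitrinovic2013classical}, so your contribution is to supply the standard symmetrization argument, and it goes through exactly as written. The double integral $J$ has a pointwise nonnegative integrand because $g(x)-g(y)$ and $h(x)-h(y)$ share the same sign for every pair $(x,y)$, the four-term expansion collapses under the $x \leftrightarrow y$ symmetry to $J = 2\left[\left(\int_a^b ghf\,\dd x\right)\left(\int_a^b f\,\dd x\right)-\left(\int_a^b gf\,\dd x\right)\left(\int_a^b hf\,\dd x\right)\right]$, and dividing by the positive quantity $\left(\int_a^b f\,\dd x\right)\left(\int_a^b hf\,\dd x\right)$ yields the claim. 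You were also right to flag the sign hypothesis on $f$: as printed, the proposition assumes only that $f$ is continuous, and a sign-changing $f$ would break both the pointwise nonnegativity of the integrand of $J$ and the positivity of the denominators, so the statement is literally false without your added assumption $f \ge 0$ with $\int_a^b f\,\dd x > 0$. This assumption is indeed satisfied in the paper's only application of the proposition (the proof of Lemma~\ref{lem:ex_ln}), where $f(\eta) = \frac{1-e^{-\l \t' \eta}}{\eta}$ is strictly positive on the interval of integration; one small correction, though, is that this $f$ is a positive continuous weight rather than a probability density, as you assert, so the cleaner way to state your implicit hypothesis is simply ``$f$ nonnegative, continuous, and not identically zero,'' which is all your argument actually uses.
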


\Zheng*
\begin{proof}
	Recall that $F_{X_{\l \t}}(x)=e^{-\l\t(1-F(x))}$.
	%
	%
	%
	Then, we can write the following:
	\begin{align*}
		\E[X_{\l \t}] &=\int_{0}^{\infty}x f_{X_{\l \t}}(x) \dd x = \int_{0}^{\infty} 1 - F_{X_{\l \t}}(x) \dd x =  \int_{0}^{\infty} 1 - e^{-\l\t(1-F(x))} \dd x \\
		& = \int_{0}^{\infty} \frac{1-F(x)}{f(x)} \frac{1- e^{-\l\t(1-F(x))}}{1-F(x)} \dd F(x) \\
		&= \int_{0}^{1} \frac{1}{H(F^{-1}(1-\eta))} \frac{1- e^{-\l\t \eta}}{\eta} \dd \eta.
	\end{align*}
	Now, we apply Lemma~\ref{cheby}.
	$F$ having non-decreasing monotone hazard rate implies that $h(\eta)\coloneqq\frac{1}{H(F^{-1}(1-\eta))}$ is a non-decreasing function of $\eta$. Hence, $h(k)$ is non-decreasing and positive on $[0,1]$. $g(\eta)\coloneqq\frac{1- e^{-\l\t \eta}}{1- e^{-\l \t' \eta}}$ is non-decreasing on $[0,1]$ and $f(\eta)\coloneqq\frac{1- e^{-\l \t' \eta}}{\eta}$ is continuous on $[0,1]$. Thus,
	\begin{align*}
	\frac{\E[X_{\l \t}]}{\E[X_{\l \t'}]} &= \frac{\int_{0}^{1} \frac{1}{H(F^{-1}(1-\eta))} \frac{1- e^{-\l\t \eta}}{\eta} \dd \eta}{\int_{0}^{1} \frac{1}{H(F^{-1}(1-\eta))} \frac{1- e^{-\l \t' \eta}}{\eta} \dd \eta} 
	= \frac{\int_{0}^{1} \frac{1}{H(F^{-1}(1-\eta))} \frac{1- e^{-\l \t' \eta}}{\eta}\frac{1- e^{-\l\t \eta}}{1- e^{-\l \t' \eta}} \dd \eta}{\int_{0}^{1} \frac{1}{H(F^{-1}(1-\eta))} \frac{1- e^{-\l \t' \eta}}{\eta} \dd \eta} \\
	&\ge \frac{\int_{0}^{1}\frac{1- e^{-\l\t \eta}}{\eta}\dd \eta}{\int_{0}^{1} \frac{1- e^{-\l \t' \eta}}{\eta} \dd \eta}
	= \frac{\int_{0}^{\l \t}\frac{1- e^{-t}}{t} \dd t}{\int_{0}^{\l \t'} \frac{1- e^{-t}}{t}\dd t} = \frac{Ein(\l \t)}{Ein(\l \t')} \\
	&= \frac{\gamma- Ei(-\l \t) + \ln(\l \t)}{\gamma- Ei(-\l \t') + \ln(\l \t')} \ge \frac{\ln(\l \t)}{\ln(\l \t')},
	\end{align*}
	where $Ein(x)\coloneqq \int_{0}^{x}\frac{1-e^{-t}}{t}\dd t$ is the entire exponential integral function, $Ei(x)\coloneqq\int_{-\infty}^{x} \frac{e^t}{t} \dd t$ is the exponential integral function, and $\gamma\approx0.577$ is the Euler's constant.
\end{proof}

\boundM*

\begin{proof}
	%
	%
	%
	By hypothesis, we have $\lambda\tau = (\lambda T)^{1-\epsilon}$ for some $\t \in (0,T]$ and $0 < \epsilon < 1$, which implies that $1-\epsilon = \frac{\ln(\lambda\tau)}{\ln(\lambda T)}$.
	Moreover, let us fix a distribution $F$ satisfying the MHR condition.
	From Lemma~\ref{lem:interv}, there exists a time interval $I_{s,\tau} $ with starting time $s \in [0,T -\t]$ such that $p_{\M_{\textsc{c}}}(t) \in \left[\frac{\E{[X_{\lambda T}] \xi(s+\t) (1-\epsilon)}}{\kappa},\E[X_{\lambda T}]\xi(s+\t)(1-\epsilon)\right]$ for every $t \in I_{s,\tau}$.
	We distinguish two cases, depending on whether the starting time of the interval is before or after the time $t_0$ characterizing mechanism $\M_{\textsc{c}}$ (as defined in Theorem~\ref{thm:bestCR_gen}).

	\textbf{Case $s < t_0$.}
	By using the fact that the seller's expected revenue for the overall time period is at least that achieved during the interval $I_{s,\t}$, we have:
	\begin{align}
	\E_F [\R(\mathcal{M}_{\textsc{c}})] & \ge p_{\M_{\textsc{c}}}(s+\t)\Pr \left\{  Y_{\lambda\tau}\ge p_{\M_{\textsc{c}}}(s) \right\} \nonumber\\ 
	& \ge p_{\M_{\textsc{c}}}(s+\t)\Pr  \left\{  X_{\lambda\tau} \xi(s+\t) \ge \E{[X_{\lambda T}] \xi(s+\t) (1-\epsilon)} \right\} \label{mag_x2} \\ 
	& = p_{\M_{\textsc{c}}}(s+\t)\Pr \left\{ X_{\lambda\tau}\ge \E{[X_{\lambda T}](1-\epsilon)} \right\} \nonumber\\
	& = p_{\M_{\textsc{c}}}(s+\t)\Pr \left\{  X_{\lambda\tau}\ge \E{[X_{\lambda T}]} \frac{\ln(\lambda\tau)}{\ln(\lambda T)} \right\} \nonumber\\
	& \ge p_{\M_{\textsc{c}}}(s+\t)\Pr \left\{ X_{\lambda\tau}\ge \E{[X_{\lambda\tau}]} \right\} \label{lem} \\
	& \ge \frac{p_{\M_{\textsc{c}}}(s+\t)}{e} \label{1/e}\\ 
	& \ge \frac{\E{[X_{\lambda T}] \xi(s+\t) (1-\epsilon)}}{\kappa_\t e}\nonumber \\
	& \ge \frac{\E{[X_{\lambda T}] \xi(t_0+\t) (1-\epsilon)}}{\kappa_\t e} \nonumber\label{t0}.
	\end{align}
	%
	%
	Equation~\eqref{mag_x2} holds since $X_{\lambda\tau}\xi(s+\t)$ is a random variable representing the maximum initial valuation of agents arriving in a time interval of length $\t$ weighted by the maximum possible discount, and, thus, it is always smaller than or equal to $Y_{\lambda\tau}$.
	%
	%
	Equation~\eqref{lem} follows from Lemma~\ref{lem:ex_ln}.
	Equation~\eqref{1/e} follows from a result by~\citet{barlow1964}, which implies that, for any MHR distribution, the probability of exceeding its expectation is at least $\frac{1}{e}$.
	%
	%
	%
	
	\textbf{Case $s \geq t_0$.}
	In this case, we can lower bound the seller's expected revenue for the overall time period with that obtained during the the interval $I_{t_0, \t}$, as follows:
	\begin{align}
	\E_F [\R(\mathcal{M}_{\textsc{c}})] & \ge p_{\M_{\textsc{c}}}(t_0 + \t) \left( 1 - e^{- \l \t} \right) \nonumber \\
	& \geq \xi(t_0 + \t) \frac{1}{e} \nonumber \\
	& \ge \frac{\E{[X_{\lambda T}] \xi(t_0+\t) (1-\epsilon)}}{\kappa_\t e} \nonumber,
	\end{align}
	where for the first inequality we used the fact that the expected revenue in $I_{t_0, \t}$ is at least the lowest price posted during the interval times the probability that at least one agent arrives in $I_{t_0, \t}$, the second inequality holds since $\left( 1 - e^{- \l \t} \right) \geq \frac{1}{e}$ when $\l \t \ge 1- \ln (e-1) \simeq 0,46$, while the last inequality follows from the fact that $s \geq t_0$.
	Indeed, by Lemma~\ref{lem:interv}, we can write the following:
	\begin{equation*}
	p_{\M_{\textsc{c}}}(s+\t) = \xi(s + \t) \geq \frac{\E{[X_{\lambda T}] \xi(s+\t) (1-\epsilon)}}{\kappa_\t } ,
	\end{equation*} 
	which implies that $\frac{\E{[X_{\lambda T}]  (1-\epsilon)}}{\kappa_\t } \leq 1$.
	
	We can now compute a lower bound on the ratio $\CR_F(\M_{\textsc{c}})$ of mechanism $\M_{\textsc{c}}$, as follows:
	\begin{align}
	\CR_F(\mathcal{M}_{\textsc{c}}) & = \frac{\E_F[\R(\M_{\textsc{c}})]}{\E_F [\R(\M^\star)]} \nonumber\\
	& \ge \frac{\E_F [\R(\M_{\textsc{c}})]}{\E[Y_{\lambda T}]}\nonumber\\
	& \geq\frac{\E[X_{\lambda T}]}{\E[Y_{\lambda T}]} \frac{\xi(t_0 + \t)(1-\epsilon)}{\kappa_\t e}\nonumber\\
	&\ge \frac{\xi(t_0 + \t)(1-\epsilon)}{\kappa_\t e} \nonumber
	\end{align}
	where the first inequality holds since $\E[Y_{\lambda T}]$ is the expected revenue of a mechanism that knows the actual realization of agents' initial valuations and arrival times, \emph{i.e.}, the realization of variable $Y_{\l T}$.-
	This mechanism achieves an expected revenue greater than or equal to that obtained by the benchmark $\M^\star$, since the latter only knows the distribution of valuations.
	As for the second inequality, it is easy to see that $\frac{\E[X_{\lambda T}]}{\E[Y_{\lambda T}]} \ge 1$.
	%
	Finally, by recalling the condition $\l\t=(\l T)^{1-\epsilon}$,
	we have $\t=T^{1-\epsilon}\l^{-\epsilon}$,
	which allows us to write the following bound:
	\[
	\CR(\mathcal{M}_{\textsc{c}}) \geq \frac{\xi(t_0 + T^{1-\epsilon}\l^{-\epsilon})(1-\epsilon)}{\kappa_\t e}.
	\]
	%
	This concludes the proof.
\end{proof}

\ex*

\begin{proof}
	In the following, for the ease of presentation, we let $ \tilde{I}_i \coloneqq \left[\frac{\nu}{\delta}\xi(i\t),\nu\xi(i\t)\right]$ for any $i = 1,\ldots,\lceil \log_{\d}h\rceil $.
	By contradiction, suppose that there is no $i = 1,\ldots,\lceil \log_{\d}h\rceil $ such that $ p_{\M_\textsc{pc}}(I_i) \in \tilde{I}_i$.
	Notice that $\nu$ is a lower bound on $\E[X_{\lambda T}]$ and belongs to the range $\in [1,h)$.
	
	We reach a contradiction by employing an iterated reasoning.
	As a first step, we observe that either $\nu \in \left (\frac{h}{\d},h \right)$ or $\nu \in \left[1,\frac{h}{\d} \right]$.
	If $\nu \in \left(\frac{h}{\d},h \right)$, then $p_{\M_\textsc{pc}}(I_1) =  \frac{h}{\d}\xi(\t)$ is in the range $\tilde{I}_1 = \left[\frac{\nu}{\delta}\xi(\t),\nu \xi(\t) \right]$.
	Hence, it must hold $\nu \in [1,\frac{h}{\d}]$. 
	Then, as a second step, we can conclude that either $\nu \in \left(\frac{h}{\d^2},\frac{h}{\d} \right]$ or $\nu \in \left[ 1,\frac{h}{\d^2} \right]$. 
	If $\nu \in \left(\frac{h}{\d^2},\frac{h}{\d} \right]$, then $p_{\M_\textsc{pc}}(I_2) = \frac{h}{\d^2}\xi(2\t)$ is in the range $\tilde{I}_2 = \left[\frac{\nu}{\delta}\xi(2\t),\nu \xi(2\t) \right]$. 
	Hence, it must hold $\nu \in [1,\frac{h}{\d^2}]	$.
	By iterating the reasoning until the $\lfloor \log_{\d}h\rfloor$-th step, we obtain that either $\nu \in \left(\frac{h}{\d^{\lfloor \log_{\d}h\rfloor}},\frac{h}{\d^{\lfloor \log_{\d}h\rfloor - 1}} \right]$ or $\nu \in \left[1,\frac{h}{\d^{\lfloor \log_{\d}h\rfloor}} \right]$.  
	
	Let us first consider the case in which it holds $\lfloor \log_{\d}h\rfloor \ne \lceil \log_{\d}h\rceil$.
	If $\nu \in \left(\frac{h}{\d^{\lfloor \log_{\d}h\rfloor}},\frac{h}{\d^{\lfloor \log_{\d}h\rfloor - 1}} \right]$, then $p_{\M_\textsc{pc}} \left( I_{\lfloor \log_{\d}h\rfloor} \right) \in \tilde{I}_{\lfloor \log_{\d}h\rfloor} $ since:
	\[
	\frac{h}{\d^{\lfloor \log_{\d}h\rfloor}} \xi(\lfloor \log_{\d}h\rfloor \t) \in  \left[\frac{\nu}{\delta}\xi(\lfloor \log_{\d}h\rfloor \t),\nu \xi(\lfloor \log_{\d}h\rfloor \t) \right]. 
	\]
	Hence, it must hold $\nu \in \left[ 1,\frac{h}{\d^{\lfloor \log_{\d}h\rfloor}} \right]$.
	Then, $p_{\M_\textsc{pc}} \left( I_{\lceil \log_{\d}h\rceil} \right)= \xi(\lceil \log_{\d}h\rceil \t)$ belongs to the range $\tilde{I}_{\lceil \log_{\d}h\rceil} = \left[\frac{\nu}{\delta}\xi(\lceil \log_{\d}h\rceil \t),\nu \xi(\lceil \log_{\d}h\rceil \t) \right]$, which leads to a contradiction. 
	%
	
	Now, suppose that $\lfloor \log_{\d}h\rfloor = \lceil \log_{\d}h\rceil = \log_{\d}h$.
	Then, in the $\lfloor \log_{\d}h\rfloor$-th step of the iterated reasoning, we can conclude that $\nu \in \left[1,\frac{h}{\d^{\log_{\d}h - 1}} \right]$ and $p_{\M_\textsc{pc}} \left( I_{\log_{\d}h} \right) =\xi((\log_{\d}h) \t)$ is in the range $\tilde{I}_{\log_{\d}h}  = \left[\frac{\nu}{\delta}\xi((\log_{\d}h) \t),\nu \xi((\log_{\d}h) \t) \right]$, which leads to the final contradiction.
	%
\end{proof}

\LBstep*

\begin{proof}
	By hypothesis we have $\l \t = (\l T)^{\-\epsilon}$ for $\t=\frac{t_0}{\lceil \log_{\d}h\rceil}$ $\in (0,T]$. 
	We distinguish two cases, depending on whether $\E[X_{\l T}](1-\epsilon)$ is greater or lower than one. Note that $\E[X_{\l T}](1-\epsilon)$ is a lower bound for $\E[X_{\l T}]$ and that one is the minimum value that $\E[X_{\l T}]$ can assume. In particular $\E[X_{\l T}] = 1$ when $F$ is the point distribution such that $P(V_i \le 1) = P(V_i = 1) = 1$.
	
	\textbf{Case $\E[X_{\l T}](1-\epsilon) >= 1 $.}
	For Lemma \ref{lem:ex}, there exists an $i \in \{1,\dots,\lceil \log_{\d}h\rceil \}$ such that the price $p^*_i= p_{\M_\textsc{pc}}(I_i)$ lies in the range $\tilde{I_i} = \Big[\frac{\E{[X_{\lambda T}]\xi(i\tau)(1-\epsilon)}}{\delta},\E[X_{\lambda T}]\xi(i\tau)(1-\epsilon)\Big]$.
	By using the fact that the seller's expected revenue for the overall time period is at least that achieved during the interval $I_i$, we have:
	\begin{align}
	\E[\R(\mathcal{M}_{\textsc{pc}})] &\ge p^*_i\Pr(Y_{\lambda\tau,i}\ge p^*_i) \nonumber \\ 
	& \ge p^*_i\Pr\Big(X_{\lambda\tau}\xi(i\tau)\ge \E{[X_{\lambda T}]\xi(i\tau)(1-\epsilon)} \Big) \label{mag_x2_2} \\ 
	& =
	p^*_i\Pr(X_{\lambda\tau}\ge \E{[X_{\lambda T}](1-\epsilon)}) \nonumber\\
	& = p^*_i\Pr\bigg(X_{\lambda\tau}\ge \E{[X_{\lambda T}]} \frac{\ln(\lambda\tau)}{\ln(\lambda T)}\bigg)\nonumber\\
	& \ge p^*_i\Pr(X_{\lambda\tau}\ge \E{[X_{\lambda\tau}]}) \label{lem_2} \\
	& \ge \frac{p^*_i}{e} \label{1/e_2}\\
	& \ge \frac{\E{[X_{\lambda T}]\xi(i\tau)(1-\epsilon)}}{\delta e} \nonumber\\
	& \ge \frac{\E{[X_{\lambda T}]\xi(\lceil\log_{\d}h \rceil \t)(1-\epsilon)}}{\delta e} \label{t0_2} \\
	& \ge \frac{\E{[X_{\lambda T}] \xi((\lceil\log_{\d}h \rceil +1) \t) (1-\epsilon)}}{\delta e} \nonumber
	\end{align}
	
	Equation~(\ref{mag_x2_2}) holds since $X_{\lambda\tau}\xi(i\tau)$ is a random variable representing the maximum initial valuation of agents arriving in a time interval of length $\t$ weighted by the maximum possible discount, thus it is always smaller than or equal to $Y_{\lambda\tau,i}$. 
	Equation~(\ref{lem_2}) follows from Lemma~\ref{lem:ex_ln}.
	Equation~(\ref{1/e_2}) follows from a result by~\citet{barlow1964}, which implies that, for any MHR distribution, the probability of exceeding its expectation is at least $\frac{1}{e}$.
	
	\textbf{Case $\E[X_{\l T}](1-\epsilon) < 1 $.}
	In this case we can lower bound the seller's expected revenue for the overall time period with that obtained during the interval $I_{\lceil \log_{\d}h\rceil + 1}$, as follows:
	\begin{align}
	\E_F [\R(\mathcal{M}_{\textsc{pc}})] & \ge p^*_{\lceil \log_{\d}h\rceil + 1} \left( 1 - e^{- \l \t} \right) \nonumber \\
	& \geq \xi((\lceil\log_{\d}h \rceil +1) \t) \frac{1}{e} \nonumber \\
	& \ge \frac{\E{[X_{\lambda T}] \xi((\lceil\log_{\d}h \rceil +1) \t) (1-\epsilon)}}{\delta e} \nonumber,
	\end{align}
	where for the first inequality we used the fact that the expected revenue in $I_{\lceil \log_{\d}h\rceil + 1}$ is the price posted during the interval times the probability that at least one agent arrives in $I_{\lceil \log_{\d}h\rceil + 1}$, the second inequality holds since $\left( 1 - e^{- \l \t} \right) \geq \frac{1}{e}$ when $\l \t \ge 1- \ln (e-1) \simeq 0,46$, while the last inequality follows from the fact that $\E[X_{\l T}](1-\epsilon) < 1$ and $\delta \geq 1$.
	
	We can now compute a lower bound on the ratio of the mechanism $\rho_F(\M_\textsc{pc})$, as follows:

	\begin{align}
	\rho_F(\mathcal{M}_{\textsc{pc}}) & = \frac{\E_F[\R(\M_\textsc{pc})]}{\E_F [\R(\M^\star)]} \nonumber\\
	& \ge \frac{\E_F [\R(\M_\textsc{pc})]}{\E[Y_{\lambda T}]}\nonumber\\
	& \geq\frac{\E[X_{\lambda T}]}{\E[Y_{\lambda T}]} \frac{ \xi((\lceil\log_{\d}h \rceil +1) \t) (1-\epsilon)}{\delta e}\nonumber\\
	&\ge \frac{\xi((\lceil\log_{\d}h \rceil +1) \t) (1-\epsilon)}{\delta e} \nonumber
	\end{align}
	where it is easy to see that $\frac{\E[X_{\lambda T}]}{\E[Y_{\lambda T}]} \ge 1$.
	%
	By recalling the condition $\l\t=(\l T)^{1-\epsilon}$,
	we have $\t=T^{1-\epsilon}\l^{-\epsilon}$,
	which allows us to write the following bound:
	\[
	\rho_F(\mathcal{M}_{\textsc{pc}}) \geq \frac{\xi((\lceil\log_{\d}h \rceil +1)T^{1-\epsilon}\l^{-\epsilon}) (1-\epsilon)}{\delta e}.
	\]
	%
	This concludes the proof.
\end{proof}

\section{An Analytical Expression of $F_{Y_{\l T}}$ for the RV Setting with Linear Discount}\label{app:properties}

We study the cumulative distribution function of the random variable $Y_{ \l T}$ so as to unveil its dependence on $F$. 
We perform our analysis for the specific case of a linear discount function; thus:
\[
	Y_{\l T} = \max_{ i \in \{ 1, \ldots, N_T \}  } V_i \, \left(  1 - \frac{W_i}{T} \right).
\]
The results presented in the following crucially rely on some properties of Poisson processes. 

First, we introduce some auxiliary definitions and results.
%

\begin{proposition}[\citet{ross1996stochastic}]\label{gamma}
	The random variable $W_i$ representing the arrival time of agent $i$ has a Gamma distribution $\Gamma(i,\l)$, with shape parameter $i>0$ and rate parameter $\l>0$, whose probability density function is defined as follows:
	\[
		f_{W_i}(w) \coloneqq \frac{\l^i w^{i-1}}{(i-1)!} \, e^{-\l w}, \quad \textnormal{for every } w \in [0,T].
	\]
\end{proposition}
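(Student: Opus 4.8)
The plan is to give a short self-contained derivation of this standard fact (see also \citet{ross1996stochastic}) by linking the arrival time $W_i$ to the Poisson counting variable already introduced in the preliminaries. The key observation is that the $i$-th agent has arrived by time $w$ if and only if the number of arrivals in the interval $[0,w]$ is at least $i$; that is, the event $\{W_i \le w\}$ coincides with $\{N_w \ge i\}$, where $N_w$ follows a Poisson distribution with parameter $\l w$. This immediately expresses the cumulative distribution function of $W_i$ as a Poisson tail, and differentiating it yields the density.

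Concretely, I would first write, for $w \in [0,T]$,
\[
F_{W_i}(w) = \Pr\{W_i \le w\} = \Pr\{N_w \ge i\} = 1 - \sum_{j=0}^{i-1} \frac{(\l w)^j e^{-\l w}}{j!},
\]
and then obtain the density by differentiating term by term. Setting $a_j(w) \coloneqq \frac{\l (\l w)^j}{j!} e^{-\l w}$, a direct computation gives $\frac{\dd}{\dd w}\left[\frac{(\l w)^j e^{-\l w}}{j!}\right] = a_{j-1}(w) - a_j(w)$ for $j \ge 1$, together with $\frac{\dd}{\dd w} e^{-\l w} = -a_0(w)$. Summing these derivatives over $j = 0, \ldots, i-1$ makes the series telescope, leaving only $-a_{i-1}(w)$; since $f_{W_i}(w) = -\frac{\dd}{\dd w}\sum_{j=0}^{i-1}\frac{(\l w)^j e^{-\l w}}{j!}$, this gives $f_{W_i}(w) = a_{i-1}(w) = \frac{\l^i w^{i-1}}{(i-1)!} e^{-\l w}$, which is exactly the claimed $\Gamma(i,\l)$ density.

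Alternatively, I could argue through the interarrival representation $W_i = \sum_{k=1}^i \t_k$, where the $\t_k$ are i.i.d.\ exponentials with rate $\l$ (a defining property of the Poisson process), and prove the claim by induction on $i$. The base case $i=1$ is the exponential density $\l e^{-\l w}$, and the inductive step convolves the $\Gamma(i,\l)$ density with an $\textnormal{Exp}(\l)$ density, the integral $\int_0^w s^{i-1}\,\dd s = w^i/i$ producing the $\Gamma(i+1,\l)$ form.

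Since this is a textbook identity, there is no substantive obstacle: the only care needed is the bookkeeping in the telescoping differentiation (equivalently, in the convolution integral). I would favour the counting-process route, as it reuses the Poisson tail formula already fixed by the model and keeps the derivation short and transparent.
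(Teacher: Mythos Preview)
Your derivation is correct: the identification $\{W_i\le w\}=\{N_w\ge i\}$ gives the Poisson tail expression for $F_{W_i}$, and the telescoping differentiation cleanly yields the $\Gamma(i,\l)$ density. The alternative convolution-of-exponentials argument is also sound.

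The only point of comparison is that the paper does not actually prove this proposition at all: it is stated as a textbook fact and attributed to \citet{ross1996stochastic}, with no accompanying argument. So you are supplying a self-contained proof where the paper simply cites one. Your counting-process route is the standard one and matches what one finds in Ross; nothing is missing.
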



\begin{theorem}[\citet{pinsky2010introduction}]\label{pp_unif}
	Let $W_1,W_2,\ldots$ be random variables representing the arrival times in a Poisson process with rate parameter $\l>0$. Conditioned on the event $N_T=n$, the variables $W_1,\ldots,W_n$ have a joint probability density function defined as follows:
	\[
		f_{W_1,\ldots,W_n \mid N_T=n}(w_1,\dots,w_n)=n! \, T^{-n}, \quad \textnormal{for } 0 < w_1 < \ldots < w_n \le T.
	\]
\end{theorem}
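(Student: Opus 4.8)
The plan is to compute the conditional joint density directly from the defining properties of the Poisson process---independence of the counts over disjoint intervals together with the fact that the count over an interval of length $\ell$ is Poisson with mean $\l \ell$---rather than from the interarrival characterization, because the conditioning event $\{N_T = n\}$ then factorizes cleanly across disjoint regions.

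First I would write the joint event in infinitesimal form. Fix $0 < w_1 < \cdots < w_n \le T$ and small increments $\dd w_i$. The event $\{w_i < W_i \le w_i + \dd w_i \text{ for every } i,\ N_T = n\}$ holds precisely when there is exactly one arrival in each of the $n$ disjoint windows $(w_i, w_i + \dd w_i]$ and no arrival anywhere else in $[0,T]$, the leftover region having total length $T - \sum_i \dd w_i$. Since these regions are pairwise disjoint, independence of Poisson increments lets me multiply the corresponding probabilities.

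Next I would evaluate each factor to first order. The probability of exactly one arrival in a window of length $\dd w_i$ is $\l\,\dd w_i\, e^{-\l \dd w_i} = \l\,\dd w_i + o(\dd w_i)$, and the probability of no arrival in the complementary region is $e^{-\l(T - \sum_i \dd w_i)} = e^{-\l T}\,(1 + o(1))$. Multiplying yields $\l^n e^{-\l T}\prod_i \dd w_i$ up to higher-order terms, so the unconditional joint density of $(W_1,\ldots,W_n)$ restricted to the event $\{N_T = n\}$ equals the constant $\l^n e^{-\l T}$ on the simplex $0 < w_1 < \cdots < w_n \le T$. Dividing by $\Pr\{N_T = n\} = (\l T)^n e^{-\l T}/n!$ (the Poisson probability recalled in the footnote of Section~\ref{sec:prelim}) gives
\[
	f_{W_1,\ldots,W_n \mid N_T=n}(w_1,\ldots,w_n) = \frac{\l^n e^{-\l T}}{(\l T)^n e^{-\l T}/n!} = \frac{n!}{T^n},
\]
as claimed.

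The one step that needs care is the bookkeeping of the conditioning event: one must remember that $\{N_T = n\}$ forces \emph{no} arrivals in $(w_n, T]$ as well as in all the gaps between the windows, and it is exactly this ``no arrival in the leftover length $T$'' factor that produces the $w$-independent $e^{-\l T}$ and hence the flat density. An equivalent route, which I would mention as a check, starts from the interarrival times $X_j \coloneqq W_j - W_{j-1}$, which are i.i.d.\ $\mathrm{Exp}(\l)$: a unit-Jacobian change of variables gives unconditional density $\l^n e^{-\l w_n}$ for the ordered arrival times, and multiplying by $\Pr\{X_{n+1} > T - w_n\} = e^{-\l(T - w_n)}$ to encode $\{N_T = n\}$ again collapses the exponential to $e^{-\l T}$ before the same normalization.
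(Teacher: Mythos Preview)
Your proof is correct and is one of the standard textbook arguments for this classical fact. Note, however, that the paper does not prove this statement at all: it is quoted as a known result from \citet{pinsky2010introduction} and is used without proof as an auxiliary tool to derive Theorem~\ref{product}. So there is no ``paper's own proof'' to compare against; what you have written is essentially the argument one finds in Pinsky--Karlin or Ross, and both routes you sketch (the disjoint-increments computation and the interarrival-time change of variables) are valid and lead to the same conclusion.
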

Intuitively, as discussed in~\citep{ross1996stochastic}, a consequence of Theorem~\ref{pp_unif} is that, conditioned on the event $N_T=n$, the times $W_1,\ldots, W_n$ at which the $n$ arrivals occur, considered as unordered random variables, are distributed uniformly and independently in the interval $[0,T]$.
This is the crucial observation that allows to derive the following theorem.
%
%

\begin{theorem}\label{product}
	The random variable representing the maximum discounted valuation of agents arriving in the overall time period $[0,T]$ conditioned on the event that $N_T = n$ is defined as follows:
	\[
		Y_{\l T \mid N_T=n} \coloneqq \max_{i\in\{1,\dots,n\}}V_i \, U_i, \quad \textnormal{where } U_i \sim \mathcal{U}(0,1).~\footnote{We denote by $\mathcal{U}(a,b)$ a continuous uniform distribution over the interval $[a,b]$.}
	\]
\end{theorem}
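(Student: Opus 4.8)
The plan is to exploit the symmetry of the maximum together with the uniform-conditioning property of Poisson arrival times discussed after Theorem~\ref{pp_unif}. First I would write out the object of interest explicitly under the linear discount: conditioned on $N_T=n$, we have $n$ agents with valuations $V_1,\ldots,V_n$ drawn i.i.d.\ from $F$ and arrival times $W_1<\ldots<W_n$ in $[0,T]$, so that
\[
Y_{\l T \mid N_T=n} = \max_{i\in\{1,\dots,n\}} V_i\,\xi_\textnormal{lin}(W_i) = \max_{i\in\{1,\dots,n\}} V_i\left(1-\frac{W_i}{T}\right).
\]

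The key step is to replace the \emph{ordered} arrival times by \emph{unordered} i.i.d.\ uniform variables. By the consequence of Theorem~\ref{pp_unif} noted above (see also~\citep{ross1996stochastic}), conditioned on $N_T=n$ the unordered arrival times are distributed as $n$ independent $\mathcal{U}(0,T)$ random variables. Since the displayed quantity is a maximum over the agents---hence invariant under any permutation of the index set---and the valuations $V_i$ are i.i.d.\ and independent of the arrival times, the joint law of the unordered collection $\{(V_i,W_i)\}_{i=1}^n$ is unchanged if we replace $(W_1,\ldots,W_n)$ by i.i.d.\ uniform times $(\tilde W_1,\ldots,\tilde W_n)$ on $[0,T]$. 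Therefore
\[
Y_{\l T \mid N_T=n} \stackrel{d}{=} \max_{i\in\{1,\dots,n\}} V_i\left(1-\frac{\tilde W_i}{T}\right).
\]

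To finish, I would set $U_i \coloneqq 1-\frac{\tilde W_i}{T}$. Because $\tilde W_i \sim \mathcal{U}(0,T)$, the affine map $w\mapsto 1-w/T$ sends it to $\mathcal{U}(0,1)$, and the $U_i$ remain independent since the $\tilde W_i$ are; this yields $Y_{\l T\mid N_T=n} \stackrel{d}{=} \max_{i\in\{1,\dots,n\}} V_i\,U_i$ with $U_i\sim\mathcal{U}(0,1)$, which is the claim.

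The main obstacle is the middle step: carefully justifying the passage from the \emph{ordered} arrival times supplied by Theorem~\ref{pp_unif} to an \emph{unordered} i.i.d.\ uniform sample. This requires observing that the maximum is a symmetric functional of the pairs $(V_i,W_i)$, so only the law of the unordered collection matters, and then invoking the stated uniformity property. I expect everything else---the change of variable $w\mapsto 1-w/T$ and the independence bookkeeping---to be routine.
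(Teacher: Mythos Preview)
Your proposal is correct and follows essentially the same approach as the paper: invoke the symmetry of the maximum together with the uniform-conditioning property of Theorem~\ref{pp_unif} to replace the ordered arrival times by i.i.d.\ $\mathcal{U}(0,T)$ variables, then apply the affine change of variable $U_i=1-\tilde W_i/T$ to land in $\mathcal{U}(0,1)$. If anything, your justification of the ordered-to-unordered passage is more explicit than the paper's, which simply cites ``the symmetry of the functional'' before making the same substitution.
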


\begin{proof}
	%
	Given the symmetry of the functional $\max_{i\in\{1,\dots,N_T\}}V_i\, \left(1-\frac{W_i}{T}\right)$ and Theorem~\ref{pp_unif}, we can write the following:
	\begin{align*}
		\Pr \left\{ Y_{\l T}=y \mid N_T=n \right\} &=\Pr \left\{ \max_{i\in\{1,\dots,N_\t\}}V_i\bigg(1-\frac{W_i}{T}\bigg)=y \mid N_\t=n \right\}  \\
		& = \Pr \left\{  \max_{i\in\{1,\dots,n\}}V_i\bigg(1-\frac{\tilde{U}_i}{T}\bigg)=y \right\}
	\end{align*}
	where $\tilde{U}_i$ is a random variable distributed according to $\mathcal{U}(0,T)$, which is a continuous uniform distribution with support $[0,T]$.
	Letting $U_i \coloneqq \left(1-\frac{\tilde{U}_i}{T}\right)$, it is easy to show that $U_i \sim \mathcal{U}(0,1)$.
	Formally, for every $x \in [0,1]$, the cumulative distribution function $F_{U_i}$ of $U_i$ is defined as follows:
	\begin{align*}
	F_{U_i}(x)&\coloneqq\Pr \left\{ U_i \le x \right\} = \Pr\left\{ \left(1-\frac{\tilde{U}_i}{T}\right) \le x \right\} = \Pr \left\{ T(1-x)\le \tilde{U}_i \right\} \\
	& = 1-\Pr \left\{ \tilde{U}_i \le T(1-x) \right\} = 1-\frac{T(1-x)}{T} = x 
	\end{align*}
	Moreover, for $x<0$ it holds $F_{U_i}(x)=0$, while for $x>1$ it holds $F_{U_i}(x)=1$.
	Thus, $F_{U_i}$ is the cumulative distribution function of a random variable drawn from a uniform with support $[0,1]$.
\end{proof}

In the following, we denote by $Z$ a product variable $V \, U$, where $V$ and $U$ are random variables distributed according to $F$ and $\mathcal{U}(0,1)$, respectively.
%
Moreover, we let $Z_i \coloneqq V_i \, U_i$ be the variable $Z$ referred to a specific agent $i$.
Theorem~\ref{product} allows us to express $F_{Y_{\l T \mid N_T=j}}$ as follows:
\begin{align*}
F_{Y_{\l T \mid N_T=j}}(x) &=F_{\max_{i\in\{1,\dots,j\}}Z_i}(x) 
= \Pr \left\{ \bigcap_{i=1}^j Z_i \le x \right\} = \prod_{i=1}^j \Pr \left\{ Z_i \le x \right\} = \left[ F_Z(x) \right]^j .
\end{align*}

Hence, we can write $F_{Y_{\lambda T}}$ as:
\begin{align*}
F_{Y_{\lambda T}}(x)& = \sum_{j=1}^\infty \frac{(\l T)^j e^{-\l T}}{j!} \left[ F_Z(x) \right]^j ,
\end{align*}
where
\begin{equation}\label{eq:F_Z}
F_{Z}(x) = \left\{\begin{array}{ll}x \int_{1}^{h} \frac{1}{v} f(v) \dd v & \text { if } x \in[0,1) \\ F(x)+x \int_{x}^{h} \frac{1}{v} f(v) \dd v & \text { if } x \in[1, h]\end{array}\right. .
\end{equation}
Thus, it is easy to see that $F_{Y_{\l T}}$ depends on $F$ and $f$, which are the cumulative distribution function and the probability density function of agents' initial valuations, respectively.


It remains to show how to derive the expression of $F_Z$ in Equation~\eqref{eq:F_Z}.
%
%
%
Notice that, since $U \sim \mathcal{U}(0,1)$, the probability density function of $U$ is defined as $f_U(u)=\mathds{1}_{[0,1]}(u)$, while its cumulative distribution function is $F_U(u)=u\mathds{1}_{[0,1]}(u)$.
The support of $Z$ is $[0,h]$, being $V$ defined on $[1,h]$.
\begin{align*}
F_{Z}(z) =&\Pr \left\{ V \, U \le z \right\} = \Pr \left\{  U \le \frac{z}{V} \right\}= \\
=& \mathds{1}_{[0,1)}(z) \iint_{\mathcal{D}'} f(v) f_U(u) \dd v \dd u +\mathds{1}_{[1, h]}(z) \iint_{\mathcal{D}''} f(v) f_U(u) \dd v \dd u= \\
=& \mathds{1}_{[0,1)}(z) \int_{1}^{h} f(v) \int_{0}^{z/v} f_U(u) \dd u \dd v \, + \\
& + \mathds{1}_{[1, h]}(z)\left(\int_{1}^{z} f(v) \dd v \int_{0}^{1} f_U(u) \dd u +\int_{z}^{h} f(v) \int_{0}^{z /v} f_U(u) \dd u \dd v\right) \\
=& \mathds{1}_{[0,1)}(z) \left(z \int_{1}^{h} \frac{1}{v} f(v) \dd v\right)+ \mathds{1}_{[1, h]}(z)\left(F(z)+z \int_{z}^{h} \frac{1}{v} f(v) \dd v\right) ,
\end{align*}
%
%
where the domains of integration $\mathcal{D}^{\prime}$ and $\mathcal{D}^{\prime \prime}$ are defined as: 
\[ \mathcal{D}' \coloneqq \left\{(u,v) : 0 \leq u \leq \frac{z}{v}, 1 \leq v \leq h\right\} \]
\[ \mathcal{D}'' \coloneqq\mathcal{D}_{1}'' \cup \mathcal{D}_{2}'' \coloneqq\{(u,v) : 0 \leq u \leq 1,1 \leq v \leq z\} \cup\left\{(u,v) : 0 \leq u \leq \frac{z}{v}, z < v \leq h\right\} \]
See also Figure~\ref{Fig:domains} for a graphical representation of the domains.

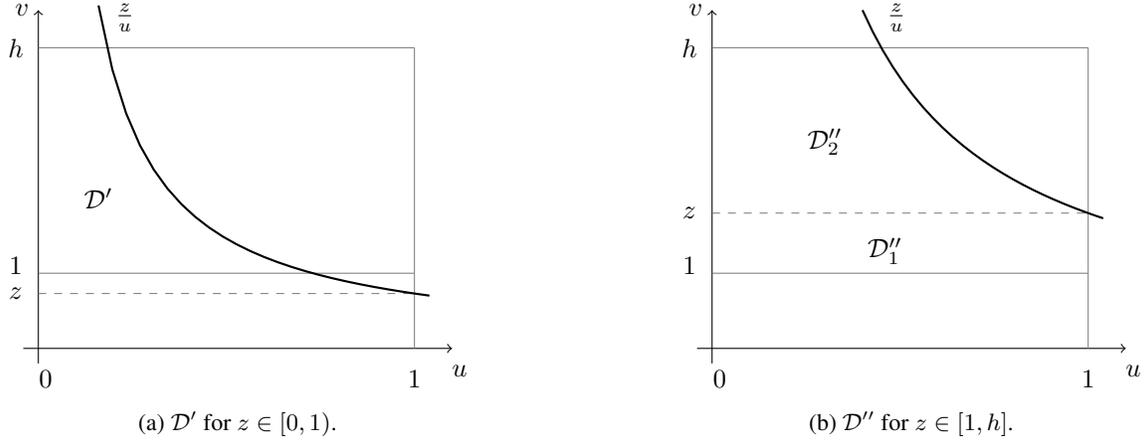
\begin{figure}[h]
	\begin{subfigure}{0.5\textwidth}
		\centering
		\begin{tikzpicture}[domain=0:2]
		\draw[->] (-0.2,0) -- (5.5,0) node at (5.6,-0.3) {$u$};
		\draw[->] (0,-0.2) -- (0,4.5)node[left] {$v$};
		\draw[gray, thin] (0,4) -- (5,4);
		\draw[gray, thin] (0,1) -- (5,1);
		\draw[gray, thin] (5,0) -- (5,4);
		\draw[gray, thin, dashed] (0,3.65/5) -- (5,3.65/5);
		\node at (-0.3,4) {$h$};
		\node at (-0.3,3.65/5) {$z$};
		\node at (-0.3,1.1) {$1$};
		\node at (5,-0.4) {$1$};
		\node at (0.1,-0.4) {$0$};
		\node at (1.15,4.4) {$\frac{z}{u}$};
		\node at (0.8,2) {$\mathcal{D}^{\prime}$};
		\draw[color=black, thick, domain=0.8:5.2] plot (\x,3.65/\x); 
		\end{tikzpicture}
		\caption{$\mathcal{D}'$ for $z \in  [0,1) $.}
	\end{subfigure} 
	\begin{subfigure}{0.5\textwidth}
		\centering
		\begin{tikzpicture}[domain=0:2]
		\draw[->] (-0.2,0) -- (5.5,0) node at (5.6,-0.3) {$u$};
		\draw[->] (0,-0.2) -- (0,4.5)node[left] {$v$};
		\draw[gray, thin] (0,4) -- (5,4);
		\draw[gray, thin] (0,1) -- (5,1);
		\draw[gray, thin] (5,0) -- (5,4);
		\draw[gray, thin, dashed] (0,9/5) -- (5,9/5);
		\node at (-0.3,4) {$h$};
		\node at (-0.3,9/5) {$z$};
		\node at (-0.3,1.1) {$1$};
		\node at (5,-0.4) {$1$};
		\node at (0.1,-0.4) {$0$};
		\node at (2.45,4.4) {$\frac{z}{u}$};
		\node at (1.5,2.75) {$\mathcal{D}^{\prime \prime}_{2}$};
		\node at (2.3,1.3) {$\mathcal{D}^{\prime \prime}_{1}$};
		\draw[color=black, thick, domain=2:5.2] plot (\x,9/\x); 
		\end{tikzpicture}
		\caption{$\mathcal{D}''$ for $z \in \left[ 1,h \right] $.}
	\end{subfigure} 
	\caption{Graphical representation of the domain of integration $\mathcal{D}'$ and $\mathcal{D}''$.}
	\label{Fig:domains}
\end{figure}

\section{Additional Experiments}
We provide other empirical evaluations of our mechanisms in the RV setting. We compare $\M_{\textsc{c}}$, $\M_{\textsc{pc}}$, and ESoES-SS when the distribution of the agents' valuations is \emph{not} MHR. Then, we show the performances of $\M_{\textsc{c}}$ and $\M_{\textsc{pc}}$ when valuations are linearly discounted over time.

\paragraph{Result \#3}
We perform an experiment similar to that of Result \#1. Here, agents' valuations are drawn from a truncated normal distribution with $\mu=\frac{h-1}{2}$, $\sigma^2=2$, and support $[1,h]$.
Figure~\ref{figure:empiricalevaluationbabaioff3} is similar to Figure~\ref{figure:empiricalevaluationbabaioff1}.
Observe that, in this setting, the performances of $\M_{\textsc{pc}}$ with $Nsub=13$ and $\M_{\textsc{pc}}$ with $Nsub=232$ are analogous.
This means that, tuning the parameters in a suitable way, we can impose a time constraint with almost no loss in the normalized mean revenue.
Moreover, the truncated normal distribution is \emph{not} MHR, hence, all the bounds on the competitive ratio of the mechanism do not hold.
Despite this fact, we see that, in this scenario, the behavior of the mean normalized revenue is comparable to that of Result \#1.
In particular, the loss of ESoES-SS w.r.t.~$\M_{\textsc{C}}$ averaged over the values of $\lambda$ is about $0.2\, h$ when $T=10$, and slightly larger when $T=50$.

\begin{figure}[h]
	\centering
	\begin{subfigure}{0.232\textwidth}
		\includegraphics[width=\textwidth]{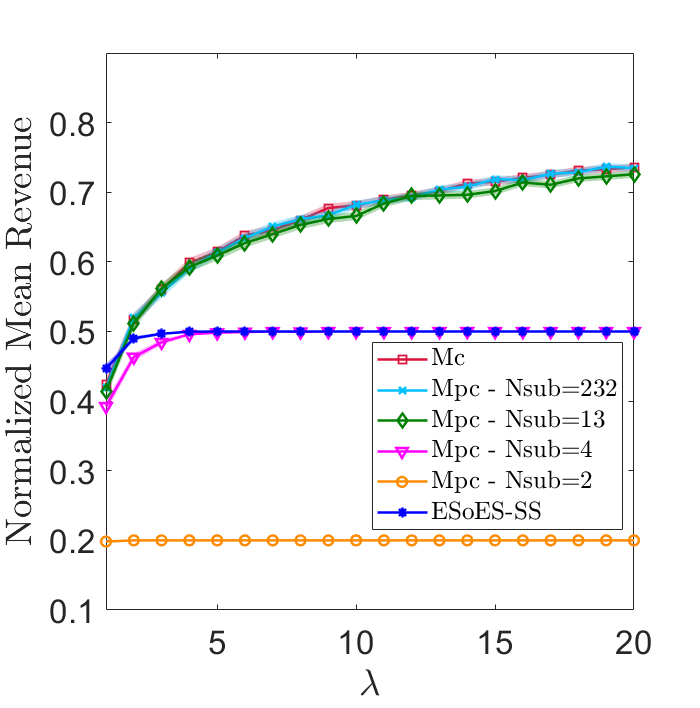}
		\caption{$T=10$} 
	\end{subfigure} \qquad
	\begin{subfigure}{0.232\textwidth}
		\includegraphics[width=\textwidth]{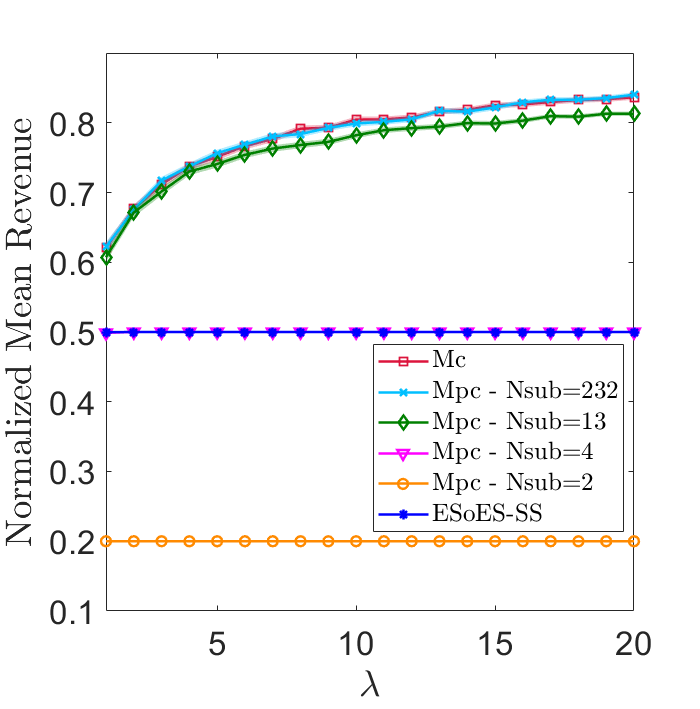}
		\caption{$T=50$}
	\end{subfigure}
	\caption{Average normalized revenue of $\M_{\textsc{c}}$, $\M_{\textsc{pc}}$, and ESoES-SS.}
	\label{figure:empiricalevaluationbabaioff3}
\end{figure}

%
%

\paragraph{Result \#4}
We analyze mechanisms $\M_{\textsc{c}}$ and $\M_{\textsc{pc}}$ with different $Nsub$ values when the valuations of the agents are linearly discounted.
For every $\lambda$ we run $1000$ Monte Carlo simulations, with $h=10$.
Given parameters $\lambda$ and $h$, we simulated the arrivals of agents drawn from a uniform distribution with support $[1,h]$ and we computed the revenue of the mechanisms.
We normalized the results by $h$ and, for each value of $\lambda$, we average by the simulations.
Then, for each value of $\lambda$, we plot in Figure~\ref{figure:empiricalevaluationbabaioff6} the normalized mean revenues of the mechanisms, for $T=10$ and $T=50$.
We observe that $\M_{\textsc{c}}$ is no longer the best mechanism in terms of normalized mean revenue. 
The interesting fact is that, a suitably tuned mechanism $\M_{\textsc{pc}}$ can reach a better average revenue than $\M_{\textsc{c}}$ in some IV scenarios. 

\begin{figure}[h]
	\centering
	\begin{subfigure}{0.232\textwidth}
		\includegraphics[width=\textwidth]{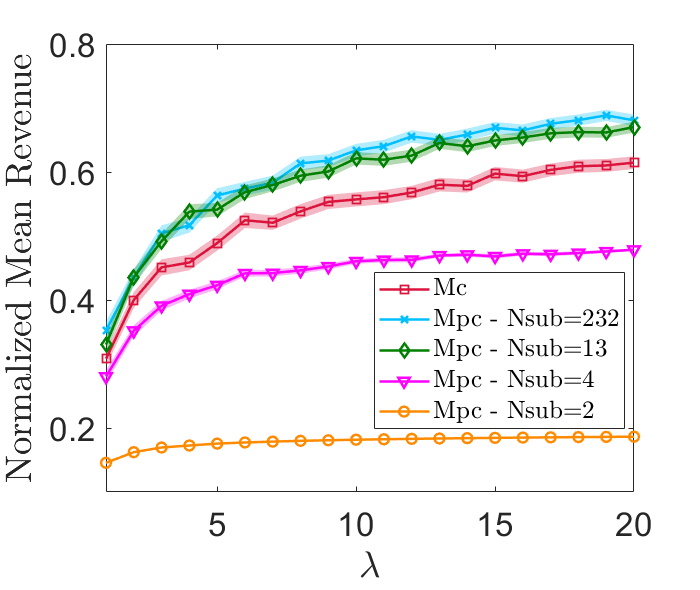}
		\caption{$T=10$} 
	\end{subfigure} \qquad
	\begin{subfigure}{0.232\textwidth}
		\includegraphics[width=\textwidth]{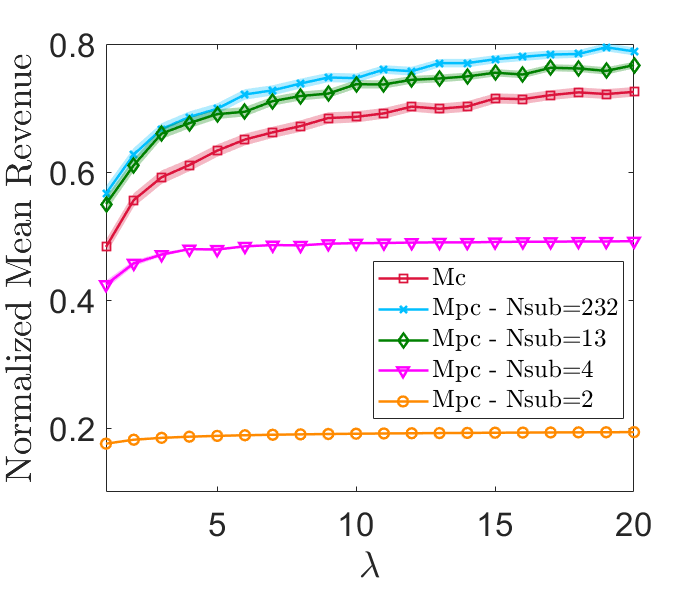}
		\caption{$T=50$}
	\end{subfigure}
	\caption{Average normalized revenue of $\M_{\textsc{c}}$, $\M_{\textsc{pc}}$, and ESoES-SS.}
	\label{figure:empiricalevaluationbabaioff6}
\end{figure}

%
%
%

\end{document}